\documentclass[11pt]{article}
\usepackage{amsmath,amssymb,amsthm}
\usepackage[margin=1in]{geometry}
\usepackage[colorlinks,citecolor=blue]{hyperref}
\usepackage{natbib} 
\usepackage{enumitem} 
\usepackage{adjustbox}
\usepackage[flushleft]{threeparttable}
\usepackage{multirow, rotating,setspace}
\usepackage{threeparttable}
\usepackage{booktabs}
\usepackage{graphicx}
\usepackage{subcaption}

\usepackage{algorithm}
\usepackage{algorithmic}
\usepackage[boxruled,algo2e]{algorithm2e}
\usepackage{caption}
\usepackage{subcaption}
\allowdisplaybreaks

\renewcommand{\qed}{\rule{2mm}{2mm}}

\newtheorem{theorem}{Theorem}[section]
\newtheorem{lemma}{Lemma}[section]
\newtheorem{corollary}{Corollary}[section]
\newtheorem{proposition}{Proposition}[section]

\theoremstyle{plain} 
\newtheorem{definition}{Definition}[section]

\theoremstyle{definition} 

\newtheorem{remark}{Remark}[section]
\newtheorem{assumption}{Assumption}[section]

\AtEndEnvironment{remark}{~\qed}
\AtEndEnvironment{example}{}

\usepackage{xcolor}
\usepackage[normalem]{ulem}

\newcommand{\notes}[1]{%
  {\setstretch{1}%
   \vspace{0.1em}%
   \captionsetup{justification=justified}%
   \caption*{\footnotesize #1}%
  }%
}

\begin{document}

\setstretch{1.5} 

\author{
Jizhou Liu \\
PHBS Business School\\
Peking University\\
\url{jizhou.liu@phbs.pku.edu.cn}
\and
Liang Zhong \\
Faculty of Business and Economics\\
The University of Hong Kong\\
\url{samzl@hku.hk}
}

\bigskip

\title{Randomization Tests in Switchback Experiments \thanks{We want to thank Jinglong Zhao for his helpful comments. All errors are our own.}}

\maketitle


\begin{abstract}
Switchback experiments—alternating treatment and control over time—are widely used when unit-level randomization is infeasible, outcomes are aggregated, or user interference is unavoidable. In practice, experimentation must support fast product cycles, so teams often run studies for limited durations and make decisions with modest samples. At the same time, outcomes in these time-indexed settings exhibit serial dependence, seasonality, and occasional heavy-tailed shocks, and temporal interference (carryover or anticipation) can render standard asymptotics and naive randomization tests unreliable. In this paper, we develop a randomization-test framework that delivers finite-sample valid, distribution-free \textit{p}-values for several null hypotheses of interest using only the known assignment mechanism, without parametric assumptions on the outcome process. For causal effects of interests, we impose two primitive conditions—non-anticipation and a finite carryover horizon \(m\)—and construct conditional randomization tests (CRTs) based on an ex ante pooling of design blocks into “sections,” which yields a tractable conditional assignment law and ensures imputability of focal outcomes. We provide diagnostics for learning the carryover window and assessing non-anticipation, and we introduce studentized CRTs for a session-wise weak null that accommodates within-session seasonality with asymptotic validity. Power approximations under distributed-lag effects with AR(1) noise guide design and analysis choices, and simulations demonstrate favorable size and power relative to common alternatives. Our framework extends naturally to other time-indexed designs.
\end{abstract}

\noindent KEYWORDS: Causal inference; Conditional Randomization Test; Switchback Experiments.

\hypersetup{pageanchor=false}
\thispagestyle{empty} 
\newpage
\hypersetup{pageanchor=true}
\setcounter{page}{1}


\setstretch{1.9} 

\section{Introduction}\label{sec:intro}
Randomized experiments play a central role in guiding decision making in online and operational systems such as 
marketplaces, transportation platforms, and digital advertising 
\citep{whymarketplace,Blake2015,Gordon2019,kohavi2020trustworthy,Wager2021,Bojinov2022Online,johari2021,li2022,Christensen2023,Tang2024,Masoero2026}. 
Yet in many of these environments, unit-level randomization is infeasible, outcomes are observed only in aggregate, 
or interference across users is unavoidable. These constraints motivate \emph{switchback experiments}, in which a 
platform alternates between treatment and control over time rather than across individuals 
\citep{Bojinov2023,jiang2025principledanalysiscrossoverdesigns,lin2025unifyingregressionbaseddesignbasedcausal}. 

Reliable inference in these time-indexed experiments is both practically important and methodologically challenging. 
Platforms often need to make decisions quickly—sometimes within days or weeks—which limits the available sample size \citep{kohavi2020trustworthy,gupta2019top}. 
At the same time, outcomes form a time series with serial dependence, seasonal patterns, and occasionally heavy-tailed shocks, so normal approximations can be poorly calibrated at realistic horizons; for example, \citet{Bojinov2023} document settings in which calibration may require on the order of $T=1200$ periods. 
Inference is further complicated by temporal interference: outcomes may depend on recent treatments (carryover) or even future scheduled assignments (anticipation), making many null hypotheses non-sharp and invalidating naive Fisher randomization tests \citep{Athey2018,Basse2019,zhong2024unconditional}. 

To address these challenges, this paper develops a randomization-test framework for several null hypotheses of 
interest in switchback experiments. Our approach yields finite-sample valid, distribution-free $p$-values using only 
the known assignment mechanism, without imposing parametric assumptions on the outcome process. The goal is to make 
randomization inference operationally reliable in time-indexed experiments conducted over realistic, and often short, 
horizons.

To conduct inference for the total treatment effect—defined as the effect of sustained exposure to the intervention—we require only two standard assumptions on potential outcomes: non-anticipation, which rules out dependence on future assignments, and a finite carryover horizon, which allows outcomes to depend on recent treatment history but not on distant past assignments. Within regular switchback experiments in the sense of \citet{Bojinov2023}, we construct conditional randomization tests that remain finite-sample valid and can be implemented via simple Monte Carlo resampling \citep{Athey2018,Basse2019,puelz2021,basse2024,liu2025randomizationinferencetwosidedmarket}. Our approach pools design blocks into candidate sections in advance of observing the assignment path and conditions on those that are constant under the realized schedule, yielding a tractable randomization law and a valid reference distribution for outcomes not contaminated by carryover.

To mitigate the concern of potential violation of the assumptions, we also propose two complementary diagnostics for the temporal structure of interference. 
First, we develop a family of tests for the null hypothesis of at most $m$-period carryover effects and exploit the nested structure of these nulls to obtain a sequential procedure that controls the family-wise error rate while delivering an interpretable estimate of treatment memory length. 
Second, to assess the non-anticipation restriction, we adapt the unconditional Pairwise Imputation--based Randomization Test (PIRT) of \citet{zhong2024unconditional} to switchback schedules, using a notion of \emph{imputable times} where potential outcomes can be paired across schedules under the anticipation null.

In light of the practical interest for testing average effect, we also address weak-null inference, where the target is a mean-zero restriction on treatment effects in the
post--burn-in regime.
We introduce a \emph{session-wise} weak null that requires mean-zero effects at each within-session position among the
focal periods (e.g., Monday vs.\ Tuesday), accommodating within-session seasonality in practice.
We construct studentized CRTs for this null and establish asymptotic validity under mild moment and stabilization conditions
\citep{Ding2017,Wu2021,ZHAO2021278}.

Finally, we study power under a superpopulation model with distributed-lag treatment effects and AR(1) noise. The resulting approximations clarify how block length, burn-in, and predetermined pooling jointly determine the signal-to-noise ratio of the studentized CRTs, providing practical guidance for design and analysis choices. Monte Carlo simulations corroborate these insights: the proposed CRT maintains near-nominal size under both Gaussian and heavy-tailed shocks, while Fisher tests under a misspecified sharp null over-reject in heavy tails and Horvitz--Thompson asymptotic tests are often conservative at realistic horizons; in terms of power, the CRT matches asymptotic tests under Gaussian noise and dominates size-correct alternatives under heavy-tailed disturbances. For diagnostics, the carryover CRT achieves accurate size with power increasing in effect magnitude and sample size, and the PIRT-based non-anticipation test controls size and delivers meaningful power; together with our theoretical approximations, these findings highlight block length, burn-in, and section pooling as first-order design levers for sensitivity in time-indexed experiments.

This paper contributes to the growing literature on switchback and other time‑indexed experimental designs. \citet{Bojinov2023} study regular switchback designs and develop largely asymptotic inference procedures. Related design‑based analyses for experiments with temporal structure include \citet{jiang2025principledanalysiscrossoverdesigns} and \citet{lin2025unifyingregressionbaseddesignbasedcausal}. By contrast, we adopt a Fisherian randomization perspective and construct conditioning‑based tests that deliver finite‑sample exact, distribution‑free \textit{p}-values for standard switchback schedules.

The remainder of the paper is organized as follows. Section \ref{sec:setup} describes the setup and notation.
Section~\ref{sec:main} develops finite-sample valid CRTs for total treatment effects, carryover length and non-anticipation.
Section~\ref{sec:weak-null-test} develops studentized CRTs for session-wise weak null hypotheses and provides asymptotic
validity results.
Section~\ref{sec:power} derives power approximations under a superpopulation model. Section~\ref{sec:simu} illustrates the simulation results. Section~\ref{sec:conclusion} concludes.

\section{Setup and Notation}\label{sec:setup}

\subsection{Switchback experiments and regular block randomization}

We observe a single outcome time series over $T$ measurement periods.
Let $[T]:=\{1,2,\dots,T\}$ index time, and let
\[
\mathbf{W}:=(W_1,\dots,W_T)\in\{0,1\}^T
\]
denote the treatment assignment path, where $W_t=1$ indicates treatment is ``on'' during period $t$ and
$W_t=0$ indicates treatment is ``off.'' For any candidate assignment path
$\mathbf{w}=(w_1,\dots,w_T)\in\{0,1\}^T$, let $Y_t(\mathbf{w})\in\mathbb{R}$ be the potential outcome at time $t$
that would be observed if the assignment path were $\mathbf{w}$.
The observed outcome satisfies the consistency relationship
\[
Y_t^{\mathrm{obs}}=Y_t(\mathbf{W}^{\mathrm{obs}}),\qquad t\in[T],
\]
where $\mathbf{W}^{\mathrm{obs}}$ denotes the realized assignment.

Throughout, for integers $1\le a\le b\le T$, we write $\mathbf{w}_{a:b}:=(w_a,\dots,w_b)$ for a subvector, and we
use $\mathbf{1}_d$ and $\mathbf{0}_d$ to denote length-$d$ all-ones and all-zeros vectors, respectively.

A \textit{switchback experiment} is a time-based randomized experiment in which the treatment is held constant over
contiguous blocks of time and may switch only at pre-specified times. In this paper, we work with the class of \textit{regular switchback experiments} of \citet{Bojinov2023}. Following \cite{Bojinov2023}, let
\[
\mathbb{T}=\{t_0=1<t_1<\cdots<t_K\}\subseteq[T],
\qquad
t_{K+1}:=T+1,
\]
be a deterministic set of switch times. These induce $K+1$ design blocks
\[
B_k:=\{t: t_k\le t\le t_{k+1}-1\},
\qquad k=0,1,\dots,K.
\]
Let
\[
\mathbb{Q}=(q_0,\dots,q_K)\in(0,1)^{K+1}
\]
denote the block-level treatment probabilities.

\begin{definition}[Regular switchback experiment]\label{def:regular-switchback}
Fix $(\mathbb{T},\mathbb{Q})$ as above.
A regular switchback experiment draws independent block-level assignments $\{W^{(k)}\}_{k=0}^K$ with
\[
W^{(k)}\sim \mathrm{Bernoulli}(q_k),
\qquad k=0,1,\dots,K,
\]
and then sets
\[
W_t:=W^{(k)}\quad\text{for all } t\in B_k.
\]
Equivalently, $\mathbf{W}$ is blockwise constant on each $B_k$, with independent block labels.
\end{definition}

Definition~\ref{def:regular-switchback} covers common switchback implementations in which time is partitioned into
intervals (e.g., hours, days, or weeks), and each interval is independently assigned treatment with a possibly
time-varying probability. We will use
\begin{equation}\label{eq:W0}
\mathcal{W}_0
:=
\left\{\mathbf{w}\in\{0,1\}^T:\ \mathbf{w}\text{ is blockwise constant on each }B_k\right\}
\end{equation}
to denote the set of assignment paths that respect the design blocks.

\subsection{Assumptions on temporal interference}

Switchback experiments are typically used when outcomes may depend on recent treatment history, so we allow for
\textit{temporal interference} across periods through two primitive restrictions on potential outcomes.

\begin{assumption}[Non-anticipating potential outcomes]\label{ass:no-anticipation}
For any $t\in[T]$, any prefix $\mathbf{w}_{1:t}\in\{0,1\}^t$, and any two continuation paths
$\mathbf{w}'_{t+1:T},\mathbf{w}''_{t+1:T}\in\{0,1\}^{T-t}$,
\[
Y_t(\mathbf{w}_{1:t},\mathbf{w}'_{t+1:T})
=
Y_t(\mathbf{w}_{1:t},\mathbf{w}''_{t+1:T}).
\]
\end{assumption}

Assumption~\ref{ass:no-anticipation} rules out dependence of $Y_t$ on future treatment assignments.
It is a timing restriction: potential outcomes at time $t$ are allowed to depend arbitrarily on the history up to $t$,
but not on $\{w_{t+1},\dots,w_T\}$.

\begin{assumption}[$m$-carryover effects]\label{ass:m-carryover}
There exists a fixed and given $m\in\{0,1,\dots,T-1\}$ such that for any $t\in\{m+1,m+2,\dots,T\}$, any
$\mathbf{w}_{t-m:t}\in\{0,1\}^{m+1}$, and any two histories
$\mathbf{w}'_{1:t-m-1},\mathbf{w}''_{1:t-m-1}\in\{0,1\}^{t-m-1}$,
\[
Y_t(\mathbf{w}'_{1:t-m-1},\mathbf{w}_{t-m:T})
=
Y_t(\mathbf{w}''_{1:t-m-1},\mathbf{w}_{t-m:T}).
\]
\end{assumption}

Assumption~\ref{ass:m-carryover} bounds the ``memory'' of the treatment: outcomes at time $t$ may depend on the most
recent $m+1$ treatment indicators $(w_{t-m},\dots,w_t)$ but not on assignments more than $m$ periods in the past.

Together, Assumptions~\ref{ass:no-anticipation} and~\ref{ass:m-carryover} imply a local dependence structure:
for all $t\ge m+1$, the potential outcome $Y_t(\mathbf{w})$ depends on $\mathbf{w}$ only through the length-$(m+1)$
window $\mathbf{w}_{t-m:t}$. In particular, if two assignment paths $\mathbf{w}$ and $\mathbf{w}'$ satisfy
$\mathbf{w}_{t-m:t}=\mathbf{w}'_{t-m:t}$, then $Y_t(\mathbf{w})=Y_t(\mathbf{w}')$ for all $t\ge m+1$.

\begin{remark}[Interpretation of $m$]\label{rem:interpret-m}
The parameter $m$ is the maximal carryover horizon: if treatment switches at time $s$, then its effect on outcomes may
persist through time $s+m$, but it does not affect $Y_t$ for $t>s+m$ via channels other than the contemporaneous and
recent treatment indicators. The case $m=0$ corresponds to no carryover beyond the current period.
\end{remark}

\section{Main Results}\label{sec:main}
\subsection{Overview of Conditional Randomization Tests}\label{subsec:crt-overview}

The classical Fisher Randomization Test (FRT) simulates the randomization distribution of a test statistic under the
known assignment mechanism and compares it to the observed statistic~\citep[Chapter~5]{imbens2015causal}.
This procedure is finite-sample valid for \emph{sharp} null hypotheses, i.e., nulls that allow the analyst to impute
all missing potential outcomes under every assignment in the design support.
In switchback experiments with temporal interference, many hypotheses of interest are not sharp over the full
assignment space. For example, a null that only links potential outcomes under two particular assignment paths
(e.g., all-treated versus all-control) does not determine potential outcomes under intermediate switching paths.
As a result, a naive FRT that resamples the full assignment path generally cannot be implemented by imputation.

Conditional randomization tests (CRTs) address this problem by restricting the resampling procedure to a
\emph{conditioning event} that renders the null hypothesis sharp on a subset of units and assignments.
In the terminology of~\citet{Basse2019}, a conditioning event takes the form
\[
\mathcal{C}=(\mathcal{U},\mathcal{W}),
\]
where $\mathcal{U}$ is a subset of units (here, time periods) and $\mathcal{W}$ is a subset of assignments. In the literature, $\mathcal{U}$ and $\mathcal{W}$ are commonly referred to as the \textit{focal units} and the \textit{focal assignments}, respectively.
The analyst specifies a \emph{conditioning mechanism} $p(\mathcal{C}\mid \mathbf{W})$, which may be random and may
depend on the realized assignment~$\mathbf{W}$.
The CRT then samples assignments from the conditional law
\begin{equation}\label{eq:condFRT-sb}
p(\mathbf{W}\mid \mathcal{C})
\ \propto\
p(\mathcal{C}\mid \mathbf{W})\,p(\mathbf{W}),
\end{equation}
and computes a test statistic using only outcomes on $\mathcal{U}$.
Finite-sample validity follows when the test statistic is \emph{imputable} on $\mathcal{C}$ under the null, meaning
that its value under any $\mathbf{w}\in\mathcal{W}$ can be computed from the observed outcomes given the null
restriction~\citep[Theorem~1]{Basse2019}.

Two general recipes are commonly used to construct conditional randomization tests.
First, the framework of~\citet{Basse2019} allows for \emph{non-degenerate} conditioning mechanisms, in which
$p(\mathcal{C}\mid \mathbf{W})$ is genuinely randomized.
A valid CRT then draws a single conditioning event
$\mathcal{C}\sim p(\cdot\mid \mathbf{W}^{\mathrm{obs}})$
and samples assignments from the induced conditional distribution~\eqref{eq:condFRT-sb}.
The main drawback is computational: evaluating and sampling from~\eqref{eq:condFRT-sb} can be expensive when the
conditional randomization space is large or lacks exploitable structure.
Second, \citet{puelz2021} study CRTs under \emph{degenerate} conditioning mechanisms of the form
\[
p(\mathcal{C}\mid \mathbf{W})
=
\mathbf{1}\{\mathbf{W}\in\mathcal{W}\},
\]
for some (possibly data-dependent) restricted randomization space $\mathcal{W}$.
In this regime, validity hinges on an \emph{invariance} requirement: the conditioning event used for resampling must
not change as we vary assignments within the restricted space.
Equivalently, if we view the conditioning event as a mapping $\mathcal{C}(\mathbf{w})$ generated from an assignment
$\mathbf{w}$, then invariance requires
\begin{equation}\label{eq:invariance-crt}
\mathcal{C}(\mathbf{w})=\mathcal{C}(\mathbf{W}^{\mathrm{obs}})
\quad\text{for all}\quad
\mathbf{w}\in\mathcal{W},
\end{equation}
so that the conditioning event is constant across all \emph{focal assignments} considered by the test.

Our switchback procedures fall into this second class.
We employ a degenerate conditioning mechanism---$\mathcal{C}$ is a deterministic function of the realized assignment
path---and our main technical task is to construct $\mathcal{C}$ so that (i) the null becomes imputable on the
selected focal periods $\mathcal{U}$ and (ii) the invariance condition \eqref{eq:invariance-crt} holds for the
resulting restricted assignment set.
In particular, our conditioning events are built from \emph{predetermined} time partitions (fixed \emph{ex ante}),
so that resampling treatment labels within focal assignments cannot change the event itself.
This yields a tractable conditional assignment law and permits efficient Monte Carlo CRT implementation while
maintaining finite-sample validity under the nulls considered below.

\subsection{Testing total treatment effects}\label{subsec:test-treatment-effects}

We begin with randomization tests for the partially sharp null hypothesis of no total treatment effect in a switchback experiment:
\begin{equation}\label{eq:sharp-null-total}
H^{tot}_0:\quad
Y_t(\mathbf{1}_T)=Y_t(\mathbf{0}_T)\quad \text{for all } t\in[T],
\end{equation}
where $\mathbf{1}_T$ and $\mathbf{0}_T$ denote the constant all-ones and all-zeros assignment paths.
Under Assumptions~\ref{ass:no-anticipation}--\ref{ass:m-carryover}, \eqref{eq:sharp-null-total} is equivalent to
$Y_t(\mathbf{1}_{m+1})=Y_t(\mathbf{0}_{m+1})$ for all $t\ge m+1$ because $Y_t(\cdot)$ depends on $\mathbf{w}$
only through the last $m+1$ entries.

The key challenge is that, under carryover, \eqref{eq:sharp-null-total} is not directly imputable for all periods:
changing the assignment at time $t$ can affect outcomes at times $t+1,\dots,t+m$.
To construct a conditioning event that isolates imputable outcomes, we introduce predetermined time \emph{sections}.
Formally, a \emph{section} is a contiguous time interval $[s,e]$ whose endpoints are fixed \emph{ex ante} and that is
formed by merging consecutive design blocks. We will restrict attention to periods $t$ whose relevant treatment
history $(W_{t-m},\dots,W_t)$ lies entirely within a section that is constant under the realized assignment.

Fix a predetermined family of disjoint sections
\[
\mathcal{S}^{\text{pre}}
=\{[s_1,e_1],\dots,[s_J,e_J]\},
\qquad
1\le s_1\le e_1 < s_2\le \cdots < s_J\le e_J\le T,
\]
constructed \textit{ex ante} (independent of $\mathbf{W}$).
Each section is obtained by pooling consecutive design blocks, and must satisfy:
\begin{enumerate}
\item[(i)] (\textit{Merged from design blocks}) For each $j$ there exist $0\le a_j\le b_j\le K$ such that
\[
[s_j,e_j]=\{t: t_{a_j}\le t\le t_{b_j+1}-1\}.
\]
\item[(ii)] (\textit{Length constraint}) $e_j-s_j\ge m$ (equivalently, $|[s_j,e_j]|\ge m+1$).
\end{enumerate}

In simulation, $\mathcal{S}^{\text{pre}}$ is constructed deterministically from the randomization design and the
pre-specified carryover horizon $m$.
Specifically, we pool consecutive design blocks greedily until the pooled length is at least $m+1$, then repeat this
procedure on the remaining blocks. This produces a disjoint cover of $[T]$ and fixes section boundaries independently
of the realized assignment path and all outcomes.\footnote{Any deterministic rule depending only on the design and $m$
would be valid. We adopt greedy pooling because it guarantees admissible section length while minimizing pooling.}

\medskip
Given a realized assignment $\mathbf{W}$, let $\mathcal{S}(\mathbf{W})\subseteq\mathcal{S}^{\text{pre}}$
denote the subcollection of predetermined sections that are constant under $\mathbf{W}$:
\[
\mathcal{S}(\mathbf{W})
=
\Bigl\{[s_j,e_j]\in\mathcal{S}^{\text{pre}}:\ W_t=W_{t'}\ \text{for all }t,t'\in[s_j,e_j]\Bigr\}.
\]
Because $\mathcal{S}^{\text{pre}}$ is fixed \emph{ex ante}, the map $\mathbf{W}\mapsto \mathcal{S}(\mathbf{W})$
is fully determined by the realized assignment and involves no analyst discretion.

Within each constant section, only the last $e-s-m+1$ periods are free of carryover contamination from outside the
section. Accordingly, define the set of focal units
\begin{equation}\label{eqn:focal-units}
\mathcal{U}(\mathbf{W})
=
\bigcup_{[s,e]\in\mathcal{S}(\mathbf{W})}\{t\in[T]: s+m\le t\le e\}.
\end{equation}
If $\mathcal{S}(\mathbf{W})=\varnothing$, then $\mathcal{U}(\mathbf{W})=\varnothing$ and no focal units are available.

In a CRT, focal units are the units on which the test statistic is computed. The availability of focal units depends on the design, the carryover horizon $m$, and the realized assignment path. Short sections relative to $m$ may yield $\mathcal{U}(\mathbf{W})=\varnothing$ with non-negligible probability.
To guide implementation, practitioners may evaluate \emph{ex ante} the expected number of focal units under the
randomization design, see section \ref{sec:power} for a detailed discussion.

Next, we describe how we construct the focal assignment space, over which we perform Monte Carlo resampling of treatment assignments. Conditioning on the realized constant sections $\mathcal{S}(\mathbf{W}^{\mathrm{obs}})$, we restrict resampling to
assignments that (i) remain blockwise constant, (ii) are constant within each realized constant section, and
(iii) match the observed assignment outside those sections:
\begin{equation}\label{eqn:cond-space}
\begin{aligned}
\mathcal{W}\big(\mathcal{S}(\mathbf{W}^{\mathrm{obs}})\big)
=
\Big\{\mathbf{w}\in\{0,1\}^T:\ 
&\text{(a) } \mathbf{w}\in\mathcal{W}_0,\\
&\text{(b) } w_t=w_{t'} \ \text{for all } t,t'\in[s,e],\ \text{for all } [s,e]\in\mathcal{S}(\mathbf{W}^{\mathrm{obs}}),\\
&\text{(c) } w_t = w_t^{\mathrm{obs}} \quad \text{for all } t \notin \bigcup_{[s,e]\in\mathcal{S}(\mathbf{W}^{\mathrm{obs}})}[s,e]\Big\}.
\end{aligned}
\end{equation}
This construction is tailored to imputability: under $H_0$ and
Assumptions~\ref{ass:no-anticipation}--\ref{ass:m-carryover}, the outcomes
$\{Y_t^{\mathrm{obs}}: t\in\mathcal{U}(\mathbf{W}^{\mathrm{obs}})\}$ are unaffected by changes to $\mathbf{w}$
within $\mathcal{W}(\mathcal{S}(\mathbf{W}^{\mathrm{obs}}))$ outside the focal histories.

Let $T(\{Y_t\}_{t\in\mathcal{U}},\mathbf{w})$ be any test statistic computed from the focal outcomes and the
assignment path.\footnote{A common choice is a weighted difference-in-means comparing focal outcomes in treated versus
control periods, possibly with inverse-probability weights if $\{q_k\}$ vary across blocks.}
Algorithm~\ref{alg:test-te} describes a Monte Carlo CRT that samples from the conditional assignment law on
$\mathcal{W}(\mathcal{S}(\mathbf{W}^{\mathrm{obs}}))$ and computes a one-sided $p$-value.

\begin{algorithm}[t]
\caption{Conditional randomization test for treatment effects (regular switchback)}
\label{alg:test-te}
\DontPrintSemicolon
\KwIn{Observed $\{Y_t^{\mathrm{obs}}\}_{t=1}^T$, $\mathbf{W}^{\mathrm{obs}}$; switch times $\mathbb{T}$ with blocks $\{B_k\}_{k=0}^K$;
block probabilities $\mathbb{Q}$; realized sections $\mathcal{S}(\mathbf{W}^{\mathrm{obs}})=\{[s_j,e_j]\}_{j\in\mathcal{J}(\mathbf{W}^{\mathrm{obs}})}$.}
\KwOut{$\hat p$.}

Construct $\mathcal{U}(\mathbf{W}^{\mathrm{obs}})$ by \eqref{eqn:focal-units} and compute
$T^{\mathrm{obs}}:=T(\{Y_t^{\mathrm{obs}}\}_{t\in\mathcal{U}(\mathbf{W}^{\mathrm{obs}})},\mathbf{W}^{\mathrm{obs}})$.\;

For each section $[s_j,e_j]$, find $(a_j,b_j)$ with $[s_j,e_j]=\bigcup_{k=a_j}^{b_j} B_k$, and set
$p_j := \frac{\prod_{k=a_j}^{b_j} q_k}{\prod_{k=a_j}^{b_j} q_k + \prod_{k=a_j}^{b_j} (1-q_k)}$.\;

Set $\mathrm{count}\leftarrow 0$.\;
\For{$b=1$ \KwTo $M$}{
    Initialize $\mathbf{W}^{(b)} \leftarrow \mathbf{W}^{\mathrm{obs}}$.\;
    \For{$j\in\mathcal{J}(\mathbf{W}^{\mathrm{obs}})$}{
        Draw $Z_j^{(b)}\sim\mathrm{Bernoulli}(p_j)$ and set $W_t^{(b)}\leftarrow Z_j^{(b)}$ for all $t\in[s_j,e_j]$.\;
    }
    Compute $T^{(b)}:=T(\{Y_t^{\mathrm{obs}}\}_{t\in\mathcal{U}(\mathbf{W}^{\mathrm{obs}})},\mathbf{W}^{(b)})$.\;
    \If{$T^{(b)} \ge T^{\mathrm{obs}}$}{ $\mathrm{count}\leftarrow \mathrm{count}+1$. }
}
Return $\hat p := (\mathrm{count}+1)/(M+1)$.\;
\end{algorithm}

The next lemma characterizes the conditional law of a merged section under independent block Bernoulli assignment.

\begin{lemma}[Conditional law on a constant merged section]\label{lem:cond-merged-section}
Suppose blocks $W^{(k)}$ are independent with $W^{(k)} \sim \mathrm{Bern}(q_k)$.
For a merged section covering blocks $a,\ldots,b$, conditional on the event
$\{W^{(a)} = \cdots = W^{(b)}\}$, the common value is Bernoulli with probability
\[
p=\frac{\prod_{k=a}^{b} q_k}{\prod_{k=a}^{b} q_k + \prod_{k=a}^{b}(1-q_k)}.
\]
Moreover, across disjoint sections these conditional draws are independent.
\end{lemma}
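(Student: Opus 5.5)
The argument is a direct computation from the independence of the block labels in Definition~\ref{def:regular-switchback}. For a single merged section covering blocks $a,\dots,b$, I will write the constancy event as the disjoint union
\[
E=\{W^{(a)}=\cdots=W^{(b)}\}=\{W^{(a)}=\cdots=W^{(b)}=1\}\cup\{W^{(a)}=\cdots=W^{(b)}=0\}.
\]
By independence, the two pieces have probabilities $\prod_{k=a}^b q_k$ and $\prod_{k=a}^b(1-q_k)$. Both are strictly positive because $q_k\in(0,1)$, so $P(E)>0$ and conditioning on $E$ is well defined. The conditional probability that the common value equals $1$ is then $P(\text{all }1)/P(E)$, which is exactly the stated $p$. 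Since the common value takes only the values $0$ and $1$, it is Bernoulli$(p)$ conditional on $E$.

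For the independence claim, I take disjoint sections $j=1,\dots,J'$ with block index sets $A_j=\{a_j,\dots,b_j\}$. These sets are pairwise disjoint because the sections are disjoint and are unions of design blocks. Let $E_j$ be the constancy event for section $j$ and $Z_j$ its common value on $E_j$. The key step is a product computation: for any $z\in\{0,1\}^{J'}$,
\[
P\Bigl(\bigcap_j\{E_j, Z_j=z_j\}\Bigr)=\prod_j P(E_j,Z_j=z_j),
\]
which holds because the event for section $j$ is measurable with respect to $\{W^{(k)}\}_{k\in A_j}$, and these collections are mutually independent. Dividing by $P(\bigcap_j E_j)=\prod_j P(E_j)$ shows that the conditional joint law of $(Z_1,\dots,Z_{J'})$ given $\bigcap_j E_j$ factorizes as $\prod_j p_j^{z_j}(1-p_j)^{1-z_j}$. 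That is the independence statement.

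The only point needing care is to state precisely what ``across disjoint sections'' conditions on, namely the joint event $\bigcap_j E_j$. The use in Algorithm~\ref{alg:test-te} also conditions on the blocks outside the realized constant sections being fixed at their observed values, and on the other predetermined sections being non-constant. I would note that these extra events depend only on blocks disjoint from $\bigcup_j A_j$. The only exception is the non-constancy of sections not in $\mathcal{S}(\mathbf{W}^{\mathrm{obs}})$, but those are also disjoint block sets. So by the same independence factorization, these extra events do not change the conditional law. This is the step I expect to be the main (though minor) obstacle, because it is where the lemma connects to the conditioning event $\mathcal{W}(\mathcal{S}(\mathbf{W}^{\mathrm{obs}}))$ in \eqref{eqn:cond-space}.
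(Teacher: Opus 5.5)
Your proposal is correct and follows the paper's proof essentially step for step. Both decompose the constancy event into the disjoint all-ones and all-zeros events, take the ratio of the two products, and establish independence by factorizing the probability of the joint event over disjoint block sets and dividing by $\prod_j \mathbb{P}(E_j)$. Your two extra observations are correct: $\mathbb{P}(E)>0$, and conditioning on blocks outside the selected sections leaves the law unchanged by independence. The paper covers the second point only briefly, in its remark connecting the lemma to Algorithm~\ref{alg:test-te}.
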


\begin{theorem}[Finite-sample validity]\label{thm:valid-crt}
Suppose the assignment follows Definition~\ref{def:regular-switchback} and
Assumptions~\ref{ass:no-anticipation} and \ref{ass:m-carryover} hold.
Consider testing the partially sharp null \eqref{eq:sharp-null-total}. Let
$\mathcal{C}(\mathbf{W}) := (\mathcal{S}(\mathbf{W}),\mathcal{U}(\mathbf{W}))$ denote the conditioning event.
Let $\hat p$ be the Monte Carlo $p$-value returned by Algorithm~\ref{alg:test-te}. Then, under $H_0$,
\[
\mathbb{P}\!\left(\hat p \le \alpha \,\middle|\, \mathcal{C}(\mathbf{W})=\mathcal{C}(\mathbf{W}^{\mathrm{obs}})\right)
\le \alpha
\quad \text{for all } \alpha\in(0,1).
\]
\end{theorem}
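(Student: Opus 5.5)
The plan is to reduce the theorem to the general CRT validity result of \citet[Theorem~1]{Basse2019}, specialized to degenerate conditioning mechanisms as in \citet{puelz2021}. Four ingredients are needed:
\begin{enumerate}
\item[(a)] the conditioning event is invariant over the focal assignment set;
\item[(b)] the conditional law of $\mathbf{W}$ given the event is exactly what Algorithm~\ref{alg:test-te} samples from;
\item[(c)] the test statistic is imputable on the event under $H_0^{tot}$;
\item[(d)] a Monte Carlo $p$-value built from exchangeable draws is super-uniform.
\end{enumerate}

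\emph{Step 1: invariance.} Write $\mathcal{J}$ for the index set of realized constant sections. I would show that the event $\{\mathcal{C}(\mathbf{W})=\mathcal{C}(\mathbf{W}^{\mathrm{obs}})\}$ coincides with $\{\mathbf{W}\in\mathcal{W}(\mathcal{S}(\mathbf{W}^{\mathrm{obs}}))\}$, where the latter set is intersected with those $\mathbf{w}$ whose remaining (nonrealized) predetermined sections stay nonconstant. Since $\mathcal{S}^{\text{pre}}$ is fixed ex ante, $\mathcal{S}(\mathbf{w})$ depends only on which predetermined sections are constant under $\mathbf{w}$. Condition (b) of \eqref{eqn:cond-space} keeps the realized sections constant. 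Condition (c) freezes everything outside them, so no other section changes status. Hence $\mathcal{S}(\mathbf{w})=\mathcal{S}(\mathbf{W}^{\mathrm{obs}})$ for every $\mathbf{w}$ in the focal set. Because $\mathcal{U}$ is a deterministic function of $\mathcal{S}$ via \eqref{eqn:focal-units}, the focal units also coincide. Conversely, any $\mathbf{w}$ with the same $\mathcal{C}$ differs from $\mathbf{W}^{\mathrm{obs}}$ only through the common labels of the sections in $\mathcal{J}$. This yields \eqref{eq:invariance-crt}.

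\emph{Step 2: conditional law.} Under Definition~\ref{def:regular-switchback}, the block labels are independent. Conditioning on the event amounts to the following three facts:
\begin{itemize}
\item the blocks outside the realized sections are fixed at their observed values;
\item each realized section has equal labels across its blocks;
\item the other sections retain their realized nonconstancy.
\end{itemize}
These constraints act on disjoint sets of independent blocks, so the joint law factorizes. By Lemma~\ref{lem:cond-merged-section}, the common labels of the sections in $\mathcal{J}$ are then independent $\mathrm{Bern}(p_j)$ draws. This is exactly the resampling law in Algorithm~\ref{alg:test-te}.

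\emph{Step 3: imputability.} Take $t\in\mathcal{U}(\mathbf{W}^{\mathrm{obs}})$, so that $t\in[s+m,e]$ for some realized constant section $[s,e]$. Then $[t-m,t]\subseteq[s,e]$, and under any $\mathbf{w}$ in the focal set the window $\mathbf{w}_{t-m:t}$ is either $\mathbf{1}_{m+1}$ or $\mathbf{0}_{m+1}$. By Assumptions~\ref{ass:no-anticipation}--\ref{ass:m-carryover}, $Y_t(\mathbf{w})$ therefore equals either $Y_t(\mathbf{1}_T)$ or $Y_t(\mathbf{0}_T)$. Under $H_0^{tot}$ these two values coincide, so $Y_t(\mathbf{w})=Y_t^{\mathrm{obs}}$. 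Consequently $T(\{Y_t(\mathbf{w})\}_{t\in\mathcal{U}},\mathbf{w})=T(\{Y_t^{\mathrm{obs}}\}_{t\in\mathcal{U}},\mathbf{w})$ for all focal $\mathbf{w}$.

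\emph{Step 4: $p$-value validity.} Conditional on the event, $\mathbf{W}^{\mathrm{obs}},\mathbf{W}^{(1)},\dots,\mathbf{W}^{(M)}$ are i.i.d.\ from the same law, by Steps 1--2. By Step 3, the statistics $T^{\mathrm{obs}},T^{(1)},\dots,T^{(M)}$ are the same fixed function, applied to the fixed focal outcomes, evaluated at these draws. They are therefore exchangeable. The standard rank argument, with ties handled through $\ge$, then gives $\mathbb{P}(\hat p\le\alpha\mid\mathcal{C})\le\lfloor\alpha(M+1)\rfloor/(M+1)\le\alpha$.

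The main obstacle is Step 1 together with the conditioning in Step 2. One must check carefully that the nonrealized sections cannot become constant, and that the factorization across disjoint blocks is legitimate. I would handle both by writing the event explicitly as a product of block-level constraints.
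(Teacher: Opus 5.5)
Your route is the paper's: invariance of the conditioning event, imputability of focal outcomes under $H^{tot}_0$, Lemma~\ref{lem:cond-merged-section} for the resampling law, then an exchangeability argument that you carry out by hand rather than citing \citet[Theorem~1]{Basse2019}. However, the converse half of your Step~1 is false for the event in the statement. With $\mathcal{C}=(\mathcal{S},\mathcal{U})$, the event $\{\mathcal{C}(\mathbf{W})=\mathcal{C}(\mathbf{W}^{\mathrm{obs}})\}$ records only \emph{which} predetermined sections are constant. It says nothing about the values on the other blocks. A non-realized two-block section can flip between $(0,1)$ and $(1,0)$ and remain non-constant, and any blocks not covered by $\mathcal{S}^{\mathrm{pre}}$ are unrestricted. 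So $\{\mathbf{w}:\mathcal{C}(\mathbf{w})=\mathcal{C}(\mathbf{W}^{\mathrm{obs}})\}$ strictly contains $\mathcal{W}(\mathcal{S}(\mathbf{W}^{\mathrm{obs}}))$ in general. Two things follow. The first bullet of your Step~2 is not a consequence of conditioning on this event. And the Step~4 claim that $\mathbf{W}^{\mathrm{obs}},\mathbf{W}^{(1)},\dots,\mathbf{W}^{(M)}$ are i.i.d.\ given the event fails: Algorithm~\ref{alg:test-te} copies the out-of-section coordinates of $\mathbf{W}^{\mathrm{obs}}$ into every draw, whereas under the conditional law those coordinates would be random.

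The conclusion survives with a short repair, in either of two ways.

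(i) Condition on the finer event $\{\mathbf{W}\in\mathcal{W}(\mathcal{S}(\mathbf{W}^{\mathrm{obs}}))\}$. This is effectively what the paper does: its proof puts $\mathcal{W}(\mathcal{S}(\mathbf{W}))$ into $\mathcal{C}$, so that $\{\mathbf{w}:\mathcal{C}(\mathbf{w})=\mathcal{C}^{\mathrm{obs}}\}$ is exactly the focal assignment set. Your Steps 2--4 are then correct verbatim. The coarse event is a disjoint union of such finer events, indexed by the out-of-section block values, so the law of total probability gives the stated bound.

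(ii) Keep the coarse event and use the product structure you already invoke. Conditional on it, the constancy and non-constancy constraints act on disjoint sets of independent blocks. Hence the realized-section labels are independent $\mathrm{Bern}(p_j)$ and independent of all remaining blocks. The $M+1$ assignment vectors share those remaining blocks, so they are exchangeable, though not i.i.d. Exchangeability plus your imputability step is all the rank argument requires.
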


Theorem~\ref{thm:valid-crt} is a finite-population statement: potential outcomes are treated as fixed, and randomness
enters only through the switchback assignment mechanism and the Monte Carlo resampling.
The result follows from (i) imputability of focal outcomes under $H^{tot}_0$ given the conditioning event and
(ii) correct simulation of the conditional assignment law via Lemma~\ref{lem:cond-merged-section}
\citep{Basse2019,Athey2018}.

\begin{remark}[Predetermined sections and invariance]\label{rem:predetermined-sections}
Theorem~\ref{thm:valid-crt} relies on the invariance requirement for degenerate CRTs \citep{puelz2021}: the
conditioning event must be the \emph{same} for every focal assignment considered by the test.
This is why we fix a predetermined collection of \emph{disjoint} candidate sections $\mathcal S^{\mathrm{pre}}$
independent of $\mathbf W$, and then define $\mathcal S(\mathbf W)$ only by selecting those predetermined sections that
are constant under $\mathbf W$.
Because section boundaries are fixed ex ante, any $\mathbf w\in\mathcal{W}(\mathcal{S}(\mathbf W^{\mathrm{obs}}))$
preserves the same selected sections (and hence the same focal set), so $\mathcal C(\mathbf w)=\mathcal C(\mathbf
W^{\mathrm{obs}})$.
If sections were defined adaptively from $\mathbf W$ (e.g., via maximal runs of all-ones and all-zeros paths), resampling could create longer consecutive all-ones/all-zeros paths, changing $\mathcal C$ and breaking invariance.
\end{remark}

\begin{remark}[Permutation implementations under homogeneous section probabilities]\label{rem:perm-impl}
Algorithm~\ref{alg:test-te} samples section labels using their conditional treatment probabilities $\{p_j\}$.
If the design is homogeneous in the sense that $q_k\equiv q$ and every realized section in
$\mathcal{S}(\mathbf{W}^{\mathrm{obs}})$ merges the same number of design blocks (equivalently, has the same length in
block units), then $p_j\equiv p$ for all selected sections. In this case the selected section labels are
exchangeable, and one may implement the CRT via a simple permutation: condition additionally on the treated count
$\sum_{j\in\mathcal{J}} Z_j$ and permute the observed labels across the selected sections.

When the section probabilities are not all equal, exchangeability fails. A convenient exact alternative is
\emph{stratified permutation}: partition the selected sections into strata with common $p_j$, condition on the treated
count within each stratum, and permute labels only within strata. This retains the simplicity of permutations while
respecting heterogeneous assignment probabilities.
\end{remark}

\subsection{Testing the carryover horizon}\label{sec:carryover-test}

The CRT in Section~\ref{subsec:test-treatment-effects} assumes a known carryover horizon $m$. In applications, however, $m$ is rarely known a priori, and it is often important to assess whether the carryover window used to define focal outcomes is plausibly long enough. This section develops randomization tests for the null hypothesis that carryover effects vanish after $m$ periods.\footnote{\citet{Bojinov2023} also propose a test for identifying the order of carryover effects. Their approach, however, requires a specialized design implemented across different units—thus necessitating changes to the experimental protocol—and its validity is primarily asymptotic. In contrast, our method applies directly to existing experimental designs and yields finite‑sample validity, which is particularly important when sample sizes are modest.}

\begin{definition}[Null of $m$-carryover effects $H_0^m$]\label{def:m-null}
Fix $m \in \{0,1,\dots,T-1\}$ and assume no-anticipation as in Assumption \ref{ass:no-anticipation}. The null hypothesis of at most $m$-period carryover effects, denoted $H_0^m$, states
that for any $t \in \{m+1,\dots,T\}$, any continuation path $\mathbf{w}_{t-m:t}\in\{0,1\}^{m+1}$, and any two
histories $\mathbf{w}_{1:t-m-1}^\prime,\mathbf{w}_{1:t-m-1}^{\prime\prime}\in\{0,1\}^{t-m-1}$,
\[
Y_t(\mathbf{w}_{1:t-m-1}^\prime,\mathbf{w}_{t-m:t})
=
Y_t(\mathbf{w}_{1:t-m-1}^{\prime\prime},\mathbf{w}_{t-m:t}).
\]
\end{definition}

Definition~\ref{def:m-null} makes precise the intuition that, under $H_0^m$, outcomes at time $t$ are unaffected by
assignments more than $m$ periods in the past.

We reuse the predetermined section family $\mathcal{S}^{\text{pre}}=\{[s_j,e_j]\}_{j=1}^J$ from
Section~\ref{subsec:test-treatment-effects} and index sections in increasing time order.
To decouple the randomized labels from the outcomes used for testing, we employ an ``alternating holdout'' device:
we hold out every other predetermined section and use the remaining sections as focal outcome sections.

Define the focal sections as $\{[s_{2i},e_{2i}]\}_{i=1}^{\lfloor J/2\rfloor}$.
Within each focal section, we keep exactly those times whose last $m$ lags remain in the same section:
\begin{equation}\label{eqn:focal-units-m}
\mathcal{U}^{(m)}
=
\bigcup_{i=1}^{\lfloor J/2 \rfloor}
\{t\in[T]: s_{2i}+m \le t \le e_{2i}\}.
\end{equation}

Under $H_0^m$, outcomes $\{Y_t(\mathbf{w}): t\in\mathcal{U}^{(m)}\}$ depend only on assignments inside their own
focal section, and hence they are invariant to changes in assignment outside that section.

\begin{lemma}[Local dependence of $Y_t$ under $H_0^m$]\label{lem:local-dep-carry}
If $t \in [s_{2i}+m,\, e_{2i}]$, then under $H_0^m$ the quantity $Y_t(\mathbf{w})$
depends only on the assignments $\mathbf{w}_{t-m:t}$, and in particular
\[
\mathbf{w}_{t-m:t} \subseteq [s_{2i},\, e_{2i}].
\]
Therefore, changing assignments outside $[s_{2i},\, e_{2i}]$ cannot change $Y_t$.
\end{lemma}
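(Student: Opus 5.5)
The plan is to verify the claim directly from Definition~\ref{def:m-null} together with Assumption~\ref{ass:no-anticipation}, which $H_0^m$ presupposes. The argument has two parts. First, I show that $Y_t(\mathbf{w})$ is a function of $\mathbf{w}_{t-m:t}$ alone. Second, I check the index containment $\{t-m,\dots,t\}\subseteq[s_{2i},e_{2i}]$; the notation $\mathbf{w}_{t-m:t}\subseteq[s_{2i},e_{2i}]$ in the statement is read as containment of this index set.

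For the first part, fix $t\in[s_{2i}+m,e_{2i}]$. Since $s_{2i}\ge 1$, we have $t\ge m+1$, so $H_0^m$ applies at time $t$. Take two paths $\mathbf{w},\mathbf{w}'$ with $\mathbf{w}_{t-m:t}=\mathbf{w}'_{t-m:t}$.
\begin{itemize}
\item By Assumption~\ref{ass:no-anticipation}, $Y_t(\mathbf{w})=Y_t(\mathbf{w}_{1:t},\mathbf{w}'_{t+1:T})$, since changing the continuation does not change $Y_t$.
\item By $H_0^m$ with window $\mathbf{w}_{t-m:t}=\mathbf{w}'_{t-m:t}$, we may replace the history $\mathbf{w}_{1:t-m-1}$ by $\mathbf{w}'_{1:t-m-1}$. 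This gives $Y_t(\mathbf{w}'_{1:t-m-1},\mathbf{w}'_{t-m:T})=Y_t(\mathbf{w}')$.
\end{itemize}
Hence $Y_t(\mathbf{w})=Y_t(\mathbf{w}')$. The only care needed is to fix the future coordinates first, so that the tuple fed into Definition~\ref{def:m-null} has the form $(\text{history},\mathbf{w}_{t-m:t})$ extended by a common continuation. If $t-m-1=0$, the history is empty and only the no-anticipation step is needed.

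For the second part, $t\ge s_{2i}+m$ gives $t-m\ge s_{2i}$, and $t\le e_{2i}$. So every index in $\{t-m,\dots,t\}$ lies in $[s_{2i},e_{2i}]$, which is the containment claimed in the statement.

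The final claim follows from the two parts. If $\mathbf{w}'$ agrees with $\mathbf{w}$ on $[s_{2i},e_{2i}]$ and differs arbitrarily outside it, then $\mathbf{w}'_{t-m:t}=\mathbf{w}_{t-m:t}$ by the containment, so $Y_t(\mathbf{w}')=Y_t(\mathbf{w})$ by the first part. I see no real obstacle; the only subtle point is the bookkeeping with the continuation path described in the first part.
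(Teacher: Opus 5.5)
Your proof is correct and takes the route the paper intends. The paper gives no separate proof of this lemma: it treats it as immediate from combining no-anticipation (which $H_0^m$ presupposes) with the defining condition of $H_0^m$, as in the local-dependence remark after Assumption~\ref{ass:m-carryover}, plus the index check $t-m\ge s_{2i}$ and $t\le e_{2i}$. Your explicit handling of the continuation path and of the empty-history case just makes that bookkeeping precise.
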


Let $\mathcal{W}_0$ be the blockwise-constant assignment set in \eqref{eq:W0}.
Conditioning on $\mathcal{S}^{\text{pre}}$ (which is predetermined), we define the randomization space by keeping the
assignment \textit{fixed} on focal sections while allowing the remaining blocks to vary according to the design:
\begin{equation}\label{eqn:cond-space-m}
\mathcal{W}^{(m)}
=
\left\{\mathbf{w}\in \mathcal{W}_0:
w_t = W_t^{\mathrm{obs}} \ \text{for all } t\in[s_{2i},e_{2i}],\ i=1,\dots,\lfloor J/2\rfloor
\right\}.
\end{equation}

A natural test statistic aggregates focal outcomes section-by-section and treats the treatment levels of the
\textit{preceding} (non-focal) sections as randomized labels. For each even-numbered section $2i$, define the focal
mean
\[
\bar Y_{2i}^{\mathrm{obs}}
=
\frac{1}{|\mathcal{U}_{2i}^{(m)}|}
\sum_{t=s_{2i}+m}^{e_{2i}} Y_t^{\mathrm{obs}},
\qquad
\mathcal{U}_{2i}^{(m)}=\{t : s_{2i}+m \le t \le e_{2i}\},
\]
which averages observed outcomes over times whose last $m$ lags remain within the focal section.
For the preceding non-focal section $2i-1$, define its section label as the treatment assignment at its final time point,
\[
Z_{2i-1} := W_{e_{2i-1}}.
\]
Because treatment is blockwise constant within each section, $Z_{2i-1}$ indexes the common treatment level applied
throughout the entire preceding section $[s_{2i-1},e_{2i-1}]$.

When the assignment probabilities are known from the design, we form an inverse-probability-weighted contrast across
adjacent section pairs:
\[
T_m(W)
=
\frac{1}{\lfloor J/2 \rfloor}
\sum_{i=1}^{\lfloor J/2 \rfloor}
\left(
\frac{Z_{2i-1}\,\bar Y_{2i}^{\mathrm{obs}}}{\Pr(Z_{2i-1}=1)}
-
\frac{(1-Z_{2i-1})\,\bar Y_{2i}^{\mathrm{obs}}}{1-\Pr(Z_{2i-1}=1)}
\right).
\]

Under $H_0^m$, focal outcomes depend only on assignments within their own section, and hence are independent of the
treatment level in the preceding section under the randomization distribution. The statistic $T_m(W)$ therefore tests
for residual dependence of focal outcomes on treatment exposure in adjacent prior sections. The corresponding randomization distribution is obtained by resampling the non-focal section labels in
$\mathcal{W}^{(m)}$ while holding the focal outcomes $\{Y_t^{\mathrm{obs}} : t \in \mathcal{U}^{(m)}\}$ fixed.
\footnote{Any statistic measurable with respect to focal outcomes and the randomized labels, and imputable under
Lemma~\ref{lem:local-dep-carry}, yields a valid conditioning-based randomization test.}

We now make explicit how to obtain the $p$-value for testing a given $H_0^m$.
Fix $m$ and treat the focal set $\mathcal{U}^{(m)}$ in \eqref{eqn:focal-units-m} and the conditional assignment space
$\mathcal{W}^{(m)}$ in \eqref{eqn:cond-space-m} as the conditioning event.
Under $H_0^m$ and Lemma~\ref{lem:local-dep-carry}, the focal outcomes
$\{Y_t^{\mathrm{obs}}:t\in\mathcal{U}^{(m)}\}$ are invariant to re-randomizing assignments on the non-focal sections,
so any statistic that depends on the observed outcomes only through $\mathcal{U}^{(m)}$ and on assignments only
through the randomized labels is imputable.
Algorithm~\ref{alg:test-carryover} gives a Monte Carlo CRT that samples from the conditional assignment law on
$\mathcal{W}^{(m)}$ and returns a one-sided $p$-value.

\begin{algorithm}[t]
\caption{Conditional randomization test for $m$-carryover (regular switchback)}
\label{alg:test-carryover}
\DontPrintSemicolon
\KwIn{Observed $\{Y_t^{\mathrm{obs}}\}_{t=1}^T$, $\mathbf{W}^{\mathrm{obs}}$; predetermined sections
$\mathcal{S}^{\mathrm{pre}}=\{[s_j,e_j]\}_{j=1}^J$; carryover horizon $m$.}
\KwOut{$\widehat p_m$.}

Construct $\mathcal{U}^{(m)}$ by \eqref{eqn:focal-units-m}.\;
Compute $T_m^{\mathrm{obs}}:=T_m(\{Y_t^{\mathrm{obs}}\}_{t\in\mathcal{U}^{(m)}},\mathbf{W}^{\mathrm{obs}})$.\;

Set $\mathrm{count}\leftarrow 0$.\;
\For{$b=1$ \KwTo $M$}{
    Draw an assignment path $\mathbf{W}^{(b)}\sim \mathbf{W}\mid \mathbf{W}\in\mathcal{W}^{(m)}$
    (i.e., keep $\mathbf{W}$ fixed on each focal section $[s_{2i},e_{2i}]$ and resample the remaining blocks
    according to the known switchback design).\;
    Compute $T_m^{(b)}:=T_m(\{Y_t^{\mathrm{obs}}\}_{t\in\mathcal{U}^{(m)}},\mathbf{W}^{(b)})$.\;
    \If{$T_m^{(b)} \ge T_m^{\mathrm{obs}}$}{ $\mathrm{count}\leftarrow \mathrm{count}+1$. }
}
Return $\widehat p_m := (\mathrm{count}+1)/(M+1)$.\;
\end{algorithm}

The validity of Algorithm~\ref{alg:test-carryover} follows directly from the same argument as in Theorem~\ref{thm:valid-crt}. In particular, for each fixed \(m\), it yields a valid level-\(\alpha\) test of \(H_0^m\). In many applications, however, \(m\) is unknown, and the objective is to identify a plausible minimal horizon \(\bar m\) beyond which carryover effects vanish.
Because the hypotheses $\{H_0^m\}_{m=0}^M$ are nested (Proposition~\ref{prop:nested}), false nulls must occur
\emph{before} true nulls. This monotonic structure allows us to run the tests sequentially using the
$p$-values $\{\widehat p_m\}$ from Algorithm~\ref{alg:test-carryover} and stop at the first non-rejection,
as formalized in Algorithm~\ref{algo:multi}. Theorem~\ref{theo:validity_multi} then guarantees FWER control (Definition \ref{def:fwer})
without multiplicity corrections.

\begin{proposition}[Nestedness]\label{prop:nested}
Suppose there exists $\bar m \in \{0,\dots,M\}$ such that $H_0^{\bar m}$ is true.
Then $H_0^{m}$ is true for all $m \ge \bar m$.
\end{proposition}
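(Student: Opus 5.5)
It suffices to show that $H_0^{m}$ implies $H_0^{m+1}$ for every $m<T-1$; the general claim then follows by induction on $m$ starting from $\bar m$. So fix $m$ with $H_0^m$ true. Take any $t\in\{m+2,\dots,T\}$, any window $\mathbf{w}_{t-m-1:t}\in\{0,1\}^{m+2}$, and any two histories $\mathbf{w}'_{1:t-m-2},\mathbf{w}''_{1:t-m-2}$. We need to show
\[
Y_t(\mathbf{w}'_{1:t-m-2},\mathbf{w}_{t-m-1:t})=Y_t(\mathbf{w}''_{1:t-m-2},\mathbf{w}_{t-m-1:t}).
\]
Here, as in Definition~\ref{def:m-null}, the continuation after $t$ is irrelevant by Assumption~\ref{ass:no-anticipation}, which is maintained throughout.

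The key step is a re-bracketing of the two assignment paths. Define the longer histories $\tilde{\mathbf{w}}'_{1:t-m-1}:=(\mathbf{w}'_{1:t-m-2},w_{t-m-1})$ and $\tilde{\mathbf{w}}''_{1:t-m-1}:=(\mathbf{w}''_{1:t-m-2},w_{t-m-1})$. The two paths in question are then $(\tilde{\mathbf{w}}'_{1:t-m-1},\mathbf{w}_{t-m:t})$ and $(\tilde{\mathbf{w}}''_{1:t-m-1},\mathbf{w}_{t-m:t})$. These share the length-$(m+1)$ window $\mathbf{w}_{t-m:t}$ and differ only in histories of length $t-m-1$. Since $t\ge m+2>m+1$, $t$ lies in the index range of $H_0^m$, and $H_0^m$ applied with histories $\tilde{\mathbf{w}}'$ and $\tilde{\mathbf{w}}''$ gives the desired equality directly.

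No step here is a real obstacle; the argument is bookkeeping. The one point to check is the boundary of the index ranges: $H_0^{m+1}$ only quantifies over $t\ge m+2$, and this range is contained in the range $t\ge m+1$ covered by $H_0^m$, so no edge cases arise. I would also remark that if $\bar m=T-1$ the statement is vacuous for $m>\bar m$ within $\{0,\dots,T-1\}$. Finally, for $m\ge T-1$ the null is trivially true, since the history is empty or the condition is vacuous, which covers any $M$ larger than $T-1$ used in the algorithm.
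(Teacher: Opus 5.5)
Your proof is correct and is essentially the paper's argument: the paper goes directly from $H_0^{\bar m}$ to $H_0^{m}$ by noting that agreement on the last $m+1$ assignments implies agreement on the last $\bar m+1$, which implicitly absorbs the extra coordinates into the history exactly as your re-bracketing does, whereas you absorb one coordinate at a time and induct. Your version also spells out details the paper leaves implicit: how histories of the right length are formed, why the range $t\ge m+2$ lies inside the range $t\ge m+1$, and the trivial case $m\ge T-1$.
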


\begin{definition}[FWER under nested nulls]\label{def:fwer}
Let $\varphi = (\varphi_0,\dots,\varphi_M)$ be a multiple testing rule, where $\varphi_m(\mathbf{W}^{\mathrm{obs}})=1$
denotes rejection of $H_0^m$.
Let $\bar m$ denote the smallest index such that $H_0^{\bar m}$ is true.
The family-wise error rate (FWER) is
\[
\mathrm{FWER}
=
\Pr\bigl(\exists\, m \ge \bar m \text{ such that } \varphi_m(\mathbf{W}^{\mathrm{obs}})=1 \bigr),
\]
the probability of rejecting at least one true null hypothesis.
\end{definition}

\begin{algorithm}[ht]
  \SetKwInOut{Input}{Inputs}
  \SetKwInOut{Output}{Output}
  \SetKwInOut{Set}{Set}
  \Input{Carryover test statistics $\{T_m\}_{m=0}^M$, observed assignment $\mathbf{W}^{\mathrm{obs}}$,
  observed outcomes $\mathbf{Y}^{\mathrm{obs}}$, and the known assignment mechanism.}
  \Set{$\hat m \leftarrow 0$.}
  \For{$m = 0$ to $M-1$}{
    Compute $\widehat p_m$ via Algorithm~\ref{alg:test-carryover}.\\
    \If{$\widehat p_m \le \alpha$}{
      set $\hat m \leftarrow m+1$ and reject $H_0^m$;\;
    }
    \Else{
      break;\;
    }
  }
  \Output{Estimated memory length $\hat m$.}
  \caption{Sequential testing under nested carryover nulls}\label{algo:multi}
\end{algorithm}

\begin{theorem}\label{theo:validity_multi}
Algorithm~\ref{algo:multi} controls the family-wise error rate at level $\alpha$.
\end{theorem}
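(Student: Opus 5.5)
The plan is the standard fixed-sequence (closed-testing-type) argument for ordered hypotheses, built on the nested structure in Proposition~\ref{prop:nested} and the single-$m$ validity of Algorithm~\ref{alg:test-carryover}. Let $\bar m$ be the smallest index for which $H_0^{\bar m}$ is true, as in Definition~\ref{def:fwer}. If no such index exists in $\{0,\dots,M\}$, then no true null is tested, the FWER is zero, and there is nothing to prove. Otherwise, Proposition~\ref{prop:nested} gives that $H_0^m$ is true for every $m\ge\bar m$ and false for every $m<\bar m$. So the set of true nulls is exactly the tail $\{\bar m,\bar m+1,\dots\}$.

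The key step is an event inclusion. Algorithm~\ref{algo:multi} rejects $H_0^m$ only if it has already rejected $H_0^0,\dots,H_0^{m-1}$, because it stops at the first non-rejection. Hence any rejection of a true null $H_0^m$ with $m\ge\bar m$ forces a rejection of $H_0^{\bar m}$, and
\[
\bigl\{\exists\, m\ge\bar m:\ \varphi_m(\mathbf{W}^{\mathrm{obs}})=1\bigr\}\subseteq\bigl\{\widehat p_{\bar m}\le\alpha\bigr\}.
\]
The FWER is therefore at most $\Pr(\widehat p_{\bar m}\le\alpha)$. Here $\widehat p_{\bar m}$ means the $p$-value that Algorithm~\ref{alg:test-carryover} would return at $m=\bar m$. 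This quantity is well defined whether or not the loop actually reaches $\bar m$, since its computation does not depend on earlier decisions, and the inclusion holds pointwise.

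It remains to bound $\Pr(\widehat p_{\bar m}\le\alpha)$ by $\alpha$. This is the single-hypothesis validity of Algorithm~\ref{alg:test-carryover} under the true null $H_0^{\bar m}$, which the text asserts follows as in Theorem~\ref{thm:valid-crt}. I would make the argument explicit along these lines:
\begin{itemize}
\item The conditioning event $(\mathcal{U}^{(\bar m)},\mathcal{W}^{(\bar m)})$ is built from the predetermined $\mathcal{S}^{\mathrm{pre}}$ and the observed focal-section labels. Every $\mathbf w\in\mathcal W^{(\bar m)}$ generates the same event, so the invariance condition \eqref{eq:invariance-crt} holds.
\item By Lemma~\ref{lem:local-dep-carry}, the focal outcomes are imputable under $H_0^{\bar m}$.
\item Given the event, the non-focal block labels are still independent Bernoulli$(q_k)$. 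This is because the design's blocks are independent and the focal sections are unions of blocks, so resampling exactly reproduces the conditional law.
\item Conditional on the event, the observed statistic and the $M$ Monte Carlo draws are exchangeable. The usual rank argument then gives $\Pr(\widehat p_{\bar m}\le\alpha\mid\mathcal C)\le\alpha$, and averaging over $\mathcal C$ yields the unconditional bound.
\end{itemize}

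The main obstacle, though mild, is the bookkeeping around $\bar m$. One must argue that validity at the first true null suffices, rather than needing a union bound over all true nulls. That is exactly what the stopping rule provides, together with the fact that $\bar m$ is a fixed, non-random index determined by the potential outcomes. A secondary point is the boundary case $\bar m\ge M$. The loop only runs to $M-1$, so $H_0^M$ is never rejected and the inclusion still holds.
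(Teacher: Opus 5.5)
Your proposal is correct and follows essentially the same fixed-sequence argument as the paper. Nestedness makes the true nulls a tail starting at $\bar m$, stopping at the first non-rejection means any false rejection entails rejecting $H_0^{\bar m}$, and the single-$m$ validity of Algorithm~\ref{alg:test-carryover} bounds that probability by $\alpha$. Your extra details tighten points the paper leaves implicit: the case with no true null, defining $\widehat p_{\bar m}$ whether or not the loop reaches it (so you get an inclusion rather than the paper's equality), and spelling out the conditional validity of the CRT at $\bar m$.
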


Algorithm \ref{algo:multi} effectively provides a lower bound (or conservative estimate) for the carryover horizon $m$, rather than point identification. That being said, it remains useful in practice. For example, \citet{Bojinov2023} note that when $m$ is unknown, it is generally preferable to choose a value slightly larger than the true $m$ rather than substantially smaller. This suggests that even a conservative estimate or lower bound on $m$ can be informative for experimental design and inference, since underestimating the carryover horizon can lead to more serious distortions than modest overestimation.

\subsection{Testing non-anticipation}\label{sec:na-test}

Sections~\ref{subsec:test-treatment-effects}--\ref{sec:carryover-test} treat non-anticipation as a maintained assumption. 
Here we construct a finite-sample valid test of this restriction. Because anticipation concerns dependence on \emph{future} assignments, constructing convenient conditioning events for conditional randomization tests is difficult without strong structural restrictions. 
We therefore employ an unconditional randomization test based on the Pairwise Imputation--based Randomization Test (PIRT) of \citet{zhong2024unconditional}, adapted to switchback schedules.

\paragraph{Imputable times.}

Under Assumption~\ref{ass:no-anticipation}, potential outcomes at time $t$ depend only on the assignment prefix $\mathbf w_{1:t}$. 
Thus if two schedules agree through time $t$, the corresponding potential outcomes must coincide.

\begin{definition}[Imputable times]\label{def:imput_units}
For schedules $\mathbf w,\mathbf w'\in\{0,1\}^T$, define
\[
\mathbb I(\mathbf w,\mathbf w')
=
\bigl\{t:\ \mathbf w_{1:t}=\mathbf w'_{1:t}\bigr\}.
\]
\end{definition}

Times in $\mathbb I(\mathbf w,\mathbf w')$ are precisely those at which the non-anticipation null implies equality of potential outcomes across the two schedules and hence support imputation-based inference.

\paragraph{Prefix-preserving randomization.}

Independently drawn switchback schedules typically diverge quickly, producing few imputable times. 
To ensure adequate overlap, we restrict randomization to schedules that share an initial prefix with the observed assignment.

Fix $L\in\{0,\ldots,T\}$. 
Given $\mathbf W^{\mathrm{obs}}$, alternative schedules $\mathbf W^\ast$ are drawn from the design distribution subject to
\[
\mathbf W^\ast_{1:L}=\mathbf W^{\mathrm{obs}}_{1:L}.
\]
Assignments after time $L$ follow the original design mechanism. 
This guarantees $\{1,\ldots,L\}\subseteq \mathbb I(\mathbf W^{\mathrm{obs}},\mathbf W^\ast)$ for every draw.

\paragraph{Centered signed-score statistic.}

Let $s_t(\mathbf w)=2w_{t+1}-1$ for $t<T$, and define
\[
I(\mathbf w',\mathbf w)=\mathbb I(\mathbf w',\mathbf w)\cap\{1,\ldots,T-1\}.
\]
Given observed outcomes $\mathbf Y^{\mathrm{obs}}=\mathbf Y(\mathbf W^{\mathrm{obs}})$, define
\begin{equation}\label{eq:Tna-centered}
T_{\mathrm{NA}}\!\left(\mathbf Y^{\mathrm{obs}},\mathbf w,\mathbf w'\right)
=
\frac{1}{|I(\mathbf w',\mathbf w)|}
\sum_{t\in I(\mathbf w',\mathbf w)}
s_t(\mathbf w)\Bigl(Y_t^{\mathrm{obs}}-\bar Y_{I(\mathbf w',\mathbf w)}^{\mathrm{obs}}\Bigr),
\end{equation}
where $\bar Y_A^{\mathrm{obs}}=|A|^{-1}\sum_{t\in A}Y_t^{\mathrm{obs}}$, and $T_{\mathrm{NA}}=+\infty$ if $I(\mathbf w',\mathbf w)=\varnothing$.

This statistic depends only on outcomes at imputable times and is therefore pairwise imputable under the non-anticipation null. Centering removes sensitivity to the overall level of outcomes on the imputable set and ensures the statistic is driven by association between outcomes and the next-period assignment sign. 
Under anticipation alternatives where outcomes increase with future treatment, this association tends to be positive, yielding power while preserving finite-sample validity.

\paragraph{PIRT procedure.}

Let $\mathbf W^\ast$ be drawn from the prefix-preserving design. 
Define the pairwise statistic
\[
A=T_{\mathrm{NA}}(\mathbf Y^{\mathrm{obs}},\mathbf W^\ast,\mathbf W^{\mathrm{obs}}),
\qquad
B=T_{\mathrm{NA}}(\mathbf Y^{\mathrm{obs}},\mathbf W^{\mathrm{obs}},\mathbf W^\ast).
\]

The PIRT $p$-value is
\[
p
=
\mathbb P\!\left(A\ge B\right),
\]
with probability taken over the prefix-preserving randomization distribution.
Two-sided variants use $|A|\ge|B|$.

\paragraph{Finite-sample validity.}

Under non-anticipation, observed outcomes coincide with the corresponding potential outcomes at all imputable times for every pair $(\mathbf W^{\mathrm{obs}},\mathbf W^\ast)$. 
Because the statistic depends only on imputable outcomes and the randomization distribution is known, the resulting test controls size exactly in finite samples under the conditional design. 
See \citet{zhong2024unconditional} for general validity results.

\section{Testing Weak-null Hypotheses}\label{sec:weak-null-test}

From this section onward, we specialize to switchback designs in which the treatment is randomized over
equal-length, naturally defined design blocks (e.g., days or weeks); we refer to these blocks as \emph{sessions}\footnote{We use \emph{blocks} to denote the (possibly unequal-length) design intervals induced by the switch times. In contrast, \emph{sessions} are equal-length, naturally defined blocks (used only in this section), while \emph{sections} (or pooled sections) are predetermined unions of consecutive blocks used for conditioning in our CRTs.}.
This restriction is motivated by common practice and by interpretability: the weak-null hypotheses we consider are
formulated position-by-position within a session (e.g., Monday vs.\ Tuesday), which is meaningful only when the
design blocks correspond to a stable time unit.

The goal of this section is to clarify which weak (mean-zero) null hypotheses are testable using
randomization-based inference in switchback experiments with carryover.
Because weak nulls are not sharp, finite-sample exact CRTs are generally unavailable; instead, we use
studentized CRTs to obtain asymptotic size control \citep{chung2013,Ding2017,DiCiccio&Romano2017,ZHAO2021278,Wu2021}.
We show that studentized CRTs can test: (i) a joint weak null requiring mean-zero effects at each
within-session focal position, (ii) a weak null requiring mean-zero average effects over the focal periods at the
end of each session, and (iii) position-wise nulls for any individual focal position.
We also highlight an important limitation: without further structure, we generally cannot test a global weak null
that averages treatment effects over the entire time series.

\subsection{Weak-null hypotheses under fixed-length sessions}\label{subsec:weak-null-hypotheses}

Fix a session length $L$ and a carryover length $m$ with $L>m$, and assume for simplicity that $T=JL$ for some
integer $J\ge 1$. Sessions are deterministic intervals
\[
[s_j,e_j]:=\{(j-1)L+1,\dots,jL\},
\qquad j=1,\dots,J,
\]
and the switchback design assigns a constant treatment label $Z_j\in\{0,1\}$ to each session $j$ with known probability
$p_j:=\Pr(Z_j=1)\in(0,1)$.
Under the maintained non-anticipation and $m$-carryover restrictions, periods near the start of a session may depend on
the previous session’s assignment. Accordingly, we focus on the within-session \emph{focal window} at the end of each
session,
\[
\mathcal U_j := \{t\in[T]: s_j+m \le t \le e_j\},
\qquad n:=|\mathcal U_j|=L-m,
\]
and the overall focal set $\mathcal U:=\cup_{j=1}^J \mathcal U_j$.
Index focal times in session $j$ by $\ell=1,\dots,n$ via $t_{j,\ell}:=s_j+m+\ell-1$.

For $z\in\{0,1\}$, define the focal potential outcome at within-session position $\ell$ under the constant path
$\mathbf z_T$ by
\[
Y_{j,\ell}(z):=Y_{t_{j,\ell}}(\mathbf{z}_T),\qquad z\in\{0,1\}.
\]
Under $m$-carryover and $L>m$, $Y_{j,\ell}(z)$ can be interpreted as the outcome at the $\ell$th focal position in
session $j$ when the session is assigned $z$ (the relevant $(m+1)$-period treatment history lies entirely within the
session’s constant segment).

Define the cross-session average effect at within-session focal position $\ell$ as
\[
\tau_\ell
:=
\frac{1}{J}\sum_{j=1}^J\{Y_{j,\ell}(1)-Y_{j,\ell}(0)\},
\qquad \ell=1,\dots,n.
\]
We consider three empirically relevant weak-null hypotheses:

\paragraph{(1) Joint session-wise weak null (strong).}
The strongest focal weak null requires mean-zero effects \emph{at each} within-session focal position:
\begin{equation}\label{eq:weak-null-session}
H_0^{\mathrm{sw}}:\quad \tau_\ell=0\ \text{for all}\ \ell=1,\dots,n.
\end{equation}
This null is attractive in practice when within-session seasonality is important (e.g., day-of-week effects), because
it rules out cancellation across positions.

\paragraph{(2) Focal-average weak null (weaker).}
Let $\bar Y_j(z):=n^{-1}\sum_{\ell=1}^n Y_{j,\ell}(z)$ denote the session-level mean potential outcome over the focal
window, and define the corresponding focal-average effect
\[
\tau_{\mathcal U}
:=
\frac{1}{J}\sum_{j=1}^J\{\bar Y_j(1)-\bar Y_j(0)\}.
\]
We will also test the weaker null
\begin{equation}\label{eq:weak-null-focal}
H_0^{\mathcal U}:\quad \tau_{\mathcal U}=0.
\end{equation}
Since $\tau_{\mathcal U}=n^{-1}\sum_{\ell=1}^n \tau_\ell$, the strong joint null $H_0^{\mathrm{sw}}$ implies
$H_0^{\mathcal U}$, but not conversely. The null $H_0^{\mathcal U}$ is directly aligned with the focal set
$\mathcal U$ used by our CRTs and is therefore a natural target for studentized randomization inference.

\paragraph{(3) Position-wise weak nulls.}
For any focal position $\ell\in\{1,\dots,n\}$, we can test
\[
H_{0,\ell}:\quad \tau_\ell=0,
\]
and, more generally, test subsets of positions by restricting attention to $\{\tau_\ell:\ell\in\mathcal L\}$ for any
$\mathcal L\subseteq\{1,\dots,n\}$.

\paragraph{Why the global weak null is generally not testable.}
A common estimand under constant paths is the global post--burn-in average effect
\[
\bar Y_m(\mathbf{w})
:=
\frac{1}{T-m}\sum_{t=m+1}^T Y_t(\mathbf{w}),
\qquad
\tau_m^{\mathrm{glob}}:=\bar Y_m(\mathbf{1}_T)-\bar Y_m(\mathbf{0}_T),
\]
and the corresponding global null $H_0^{\mathrm{glob}}:\tau_m^{\mathrm{glob}}=0$.
In switchback designs with carryover, however, valid randomization-based inference is naturally tied to focal periods
whose relevant treatment histories are contained within constant segments.
A studentized CRT built on the focal set $\mathcal U$ targets $\tau_{\mathcal U}$; under the global null
$\tau_m^{\mathrm{glob}}=0$, the focal-average effect $\tau_{\mathcal U}$ need not be zero (e.g., under within-session
heterogeneity or sign-reversing patterns), so the randomization distribution of a focal statistic is not generally
centered at zero and uniform validity cannot be guaranteed without additional structure linking focal and non-focal
periods.
That said, when the outcome process is approximately stationary across within-session positions and $m$ is small
relative to $L$ (so $n=L-m$ is close to $L$), $\tau_{\mathcal U}$ can be close to the global average effect, making
$H_0^{\mathcal U}$ a practically informative proxy in many applications.

\subsection{A studentized CRT for the focal-average null}\label{subsec:weak-null-studentized}

Let $Z_j^{\mathrm{obs}}\in\{0,1\}$ denote the realized session assignment and $p_j:=\Pr(Z_j=1)$ its known design
probability. Define the observed session-level focal mean
\[
\bar Y_j^{\mathrm{obs}}:=\frac{1}{n}\sum_{t\in\mathcal U_j}Y_t^{\mathrm{obs}}.
\]
Under the fixed-length session design and $m$-carryover, $\bar Y_j^{\mathrm{obs}}=\bar Y_j(Z_j^{\mathrm{obs}})$ for each
$j$, so $\{\bar Y_j(1),\bar Y_j(0)\}_{j=1}^J$ are well-defined session-level potential outcomes for the focal window.
We estimate the focal-average effect $\tau_{\mathcal U}$ using the Horvitz--Thompson (HT) estimator
\begin{equation}\label{eq:ht-general}
\hat\tau_{\mathrm{HT}}
:=
\frac{1}{J}\sum_{j=1}^J
\left(
\frac{Z_j^{\mathrm{obs}}\,\bar Y_j^{\mathrm{obs}}}{p_j}
-
\frac{(1-Z_j^{\mathrm{obs}})\,\bar Y_j^{\mathrm{obs}}}{1-p_j}
\right),
\end{equation}
and studentize using the conservative upper-bound form
\begin{equation}\label{eq:Vup}
\widehat V_{\mathrm{up}}
:=
\frac{1}{J^2}\sum_{j=1}^J (\bar Y_j^{\mathrm{obs}})^2
\left(
\frac{Z_j^{\mathrm{obs}}}{p_j^2}
+
\frac{1-Z_j^{\mathrm{obs}}}{(1-p_j)^2}
\right),
\qquad
T_{\mathrm{stud}} := \frac{\hat\tau_{\mathrm{HT}}}{\sqrt{\widehat V_{\mathrm{up}}}}.
\end{equation}
The studentized CRT compares $T_{\mathrm{stud}}^{\mathrm{obs}}$ to its randomization distribution obtained by
resampling $Z_1^\ast,\dots,Z_J^\ast$ independently with $Z_j^\ast\sim\mathrm{Bernoulli}(p_j)$ and recomputing
$T_{\mathrm{stud}}^\ast$ while holding $\{\bar Y_j^{\mathrm{obs}}\}$ fixed. The resulting one-sided Monte Carlo $p$-value
is denoted $\hat p_{\mathrm w}$. (Here, the session partition and focal set are deterministic, so ``conditioning''
reduces to conditioning on a deterministic event.)
This procedure can be viewed as the fixed-length-session analogue of the CRT in
Section~\ref{subsec:test-treatment-effects}, with studentization added to handle weak nulls.


\begin{assumption}[Weak-null asymptotics under fixed-length sessions]\label{ass:weak-null-asymp}
The carryover length $m$ and session length $L$ are fixed with $L>m$, and $T=JL\to\infty$ so that $J\to\infty$.
There exists $\underline p\in(0,1/2)$ such that $\underline p \le p_j \le 1-\underline p$ for all $j$ and all $T$.

Let $\bar Y_j(1)$ and $\bar Y_j(0)$ denote the session-level focal mean potential outcomes above and define
$M_j:=\max\{|\bar Y_j(1)|,|\bar Y_j(0)|\}$.
Assume:
\begin{enumerate}
\item[(i)] (\emph{Uniform fourth-moment bound}) There exists $C<\infty$ such that
$\sup_{J\ge 1} \frac{1}{J}\sum_{j=1}^J M_j^4 \le C$.
\item[(ii)] (\emph{Stabilization}) The empirical measures
$\nu_J := \frac{1}{J}\sum_{j=1}^J \delta_{(p_j,\bar Y_j(1),\bar Y_j(0))}$
converge weakly to some probability measure $\nu$ on $[\underline p,1-\underline p]\times\mathbb{R}^2$.
\item[(iii)] (\emph{Nondegeneracy}) If $(P,Y_1,Y_0)\sim\nu$, then
\[
\sigma_{\mathrm S}^2
:=
\mathbb{E}_\nu\!\left[
\left(
\sqrt{\frac{1-P}{P}}\,Y_1
+
\sqrt{\frac{P}{1-P}}\,Y_0
\right)^2
\right]
>0.
\]
\end{enumerate}
\end{assumption}

Assumption~\ref{ass:weak-null-asymp}(ii) is a standard triangular-array \emph{stabilization} condition in
randomization CLTs: it requires that as $J\to\infty$, the empirical distribution of session-level potential outcomes
(and assignment probabilities) converges to a stable limiting ``population.''
This assumption is weaker than independence or stationarity of the outcome process; it formalizes the idea that the
experimental environment does not drift arbitrarily as more sessions are observed. It rules out, for example,
systematic time trends in treatment effects or assignment probabilities, progressively heavier-tailed session means, or
structural regime changes that make early and late sessions incomparable.

\begin{theorem}[Asymptotic validity for the focal-average weak null]\label{thm:weak-null-valid-session}
Suppose Definition~\ref{def:regular-switchback} holds with equal-length sessions of length $L>m$, and
Assumptions~\ref{ass:no-anticipation} and \ref{ass:m-carryover} hold.
If Assumption~\ref{ass:weak-null-asymp} holds, then under $H_0^{\mathcal U}$ in \eqref{eq:weak-null-focal},
\[
\limsup_{T\to\infty}
\mathbb{P}\!\left(\hat p_{\mathrm w} \le \alpha \right)
\le \alpha
\qquad \text{for all } \alpha\in(0,1),
\]
where $\hat p_{\mathrm w}$ is the Monte Carlo $p$-value computed from the randomization distribution of
$T_{\mathrm{stud}}$ in \eqref{eq:Vup}. In particular, the same test is asymptotically valid under the stronger joint
null $H_0^{\mathrm{sw}}$ in \eqref{eq:weak-null-session}.
\end{theorem}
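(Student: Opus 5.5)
The argument is the standard studentized-permutation route (\citealp{chung2013,Ding2017,Wu2021}): show that $T_{\mathrm{stud}}$ is asymptotically $\mathcal N(0,\sigma^2_\star)$ with $\sigma^2_\star\le 1$ under the true design, and that its randomization distribution is asymptotically $\mathcal N(0,1)$ in probability. Then the randomization critical value converges to $z_{1-\alpha}$, and the limsup bound follows.

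The Monte Carlo error from using finitely many draws $M$ is handled in the usual way. Conditional on the data, the draws $T^{\ast(b)}_{\mathrm{stud}}$ are i.i.d. from the randomization law, so $\hat p_{\mathrm w}$ is stochastically no smaller than the exact randomization $p$-value up to the $(1+M)^{-1}$ correction. This makes it enough to work with the exact randomization quantile.

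\textbf{Step 1 (reduction to session-level potential outcomes).} Under Assumptions~\ref{ass:no-anticipation}--\ref{ass:m-carryover} and $L>m$, each focal time in $\mathcal U_j$ has its $(m+1)$-lag window inside session $j$. Hence $\bar Y_j^{\mathrm{obs}}=\bar Y_j(Z_j)$, and the problem becomes a Bernoulli-randomized finite population of $J$ independent units.

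\textbf{Step 2 (true sampling distribution).} Write
\[
J\hat\tau_{\mathrm{HT}}-J\tau_{\mathcal U}=\sum_j \xi_j,
\qquad
\xi_j=(Z_j-p_j)\Big(\tfrac{\bar Y_j(1)}{p_j}+\tfrac{\bar Y_j(0)}{1-p_j}\Big).
\]
These are independent, mean-zero summands with variance $p_j(1-p_j)(\bar Y_j(1)/p_j+\bar Y_j(0)/(1-p_j))^2=\big(\sqrt{(1-p_j)/p_j}\,\bar Y_j(1)+\sqrt{p_j/(1-p_j)}\,\bar Y_j(0)\big)^2$. Together with the overlap bound $\underline p$ and the fourth-moment condition (i), Lyapunov's condition holds. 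Condition (ii) gives $J^{-1}\sum_j\mathrm{Var}(\xi_j)\to\sigma_{\mathrm S}^2$; here uniform integrability from (i) turns weak convergence into convergence of these quadratic integrals. Thus under $H_0^{\mathcal U}$, $\sqrt J\hat\tau_{\mathrm{HT}}\Rightarrow\mathcal N(0,\sigma_{\mathrm S}^2)$.

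Next, $J\widehat V_{\mathrm{up}}$ has mean
\[
\frac1J\sum_j\Big(\frac{\bar Y_j(1)^2}{p_j}+\frac{\bar Y_j(0)^2}{1-p_j}\Big)
\]
and variance $O(J^{-1})$ by (i). It therefore converges in probability to $V_{\mathrm{up}}:=\mathbb E_\nu[Y_1^2/P+Y_0^2/(1-P)]$. The identity
\[
\frac{Y_1^2}{P}+\frac{Y_0^2}{1-P}-\Big(\sqrt{\tfrac{1-P}{P}}Y_1+\sqrt{\tfrac{P}{1-P}}Y_0\Big)^2=(Y_1-Y_0)^2\ge0
\]
gives $V_{\mathrm{up}}\ge\sigma_{\mathrm S}^2>0$. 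By Slutsky, $T_{\mathrm{stud}}\Rightarrow\mathcal N(0,\sigma_{\mathrm S}^2/V_{\mathrm{up}})$, a limit with variance at most one.

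\textbf{Step 3 (randomization distribution).} Conditional on the data, resampling $Z_j^\ast\sim\mathrm{Bern}(p_j)$ with fixed $y_j:=\bar Y_j^{\mathrm{obs}}$ acts as if the sharp null $\bar Y_j(1)=\bar Y_j(0)=y_j$ held. Then
\[
J\hat\tau^\ast=\sum_j y_j\Big(\tfrac{Z_j^\ast}{p_j}-\tfrac{1-Z_j^\ast}{1-p_j}\Big)
\]
has mean zero and variance $\sum_j y_j^2/(p_j(1-p_j))$. Also $J\widehat V^\ast_{\mathrm{up}}$ concentrates around exactly the same quantity, because $\mathbb E[Z/p^2+(1-Z)/(1-p)^2]=1/(p(1-p))$. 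So the studentized statistic has randomization limit $\mathcal N(0,1)$, provided a conditional Lyapunov condition holds along the realized $\{y_j\}$.

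This step is the main technical obstacle: the $y_j$ are themselves random through $Z^{\mathrm{obs}}$. I would show that $J^{-1}\sum_j y_j^2/(p_j(1-p_j))$ converges in probability to $\mathbb E_\nu[Y_1^2/(1-P)+Y_0^2/P]$, which is positive whenever $\sigma^2_{\mathrm S}>0$. I would also show that $J^{-2}\sum_j y_j^4\to0$ in probability, by taking expectations and using (i). Together these give Lyapunov in probability. A subsequence argument then yields conditional weak convergence almost surely along subsequences, so the randomization $(1-\alpha)$-quantile satisfies $\hat c\to_p z_{1-\alpha}$.

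\textbf{Step 4 (conclusion).} Combining the steps,
\[
\mathbb P(\hat p_{\mathrm w}\le\alpha)\le\mathbb P(T_{\mathrm{stud}}>\hat c)+o(1)\to 1-\Phi\big(z_{1-\alpha}\sqrt{V_{\mathrm{up}}/\sigma_{\mathrm S}^2}\big)\le\alpha,
\]
where the last inequality uses $V_{\mathrm{up}}\ge\sigma^2_{\mathrm S}$ and $z_{1-\alpha}\ge 0$ for $\alpha\le1/2$. For $\alpha>1/2$ the one-sided version needs the usual care, and I expect to restrict to $\alpha\le 1/2$ or note the paper's convention. Finally, $H_0^{\mathrm{sw}}\Rightarrow H_0^{\mathcal U}$ because $\tau_{\mathcal U}=n^{-1}\sum_\ell\tau_\ell$, which gives the last claim.
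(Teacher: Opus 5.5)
Your proposal is correct and follows essentially the same route as the paper. Like the paper, you use a Lyapunov sampling CLT for $\sqrt J\,\hat\tau_{\mathrm{HT}}$, consistency of $J\widehat V_{\mathrm{up}}$ together with $\sigma_{\mathrm S}^2\le\sigma_{\mathrm{up}}^2$ via the $(Y_1-Y_0)^2$ identity, and a conditional Lyapunov randomization CLT giving $c_{1-\alpha}\to z_{1-\alpha}$ in probability. Your caveat about $\alpha>1/2$ is a real limitation rather than a matter of convention: the paper's final step silently uses $z_{1-\alpha}\ge 0$, and with heterogeneous effects ($\sigma_{\mathrm S}^2<\sigma_{\mathrm{up}}^2$) the limiting rejection rate $1-\Phi(z_{1-\alpha}\sigma_{\mathrm{up}}/\sigma_{\mathrm S})$ exceeds $\alpha$ whenever $\alpha>1/2$, so the restriction $\alpha\le 1/2$ is genuinely needed.
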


\subsection{Position-wise and joint tests across within-session positions}\label{subsec:weak-null-position}

The statistic $T_{\mathrm{stud}}$ targets the focal-average effect $\tau_{\mathcal U}$ and can therefore have low power
against alternatives where effects vary across within-session positions and cancel in the average (e.g., sign-reversing
patterns). To probe heterogeneity and to test the strong joint null $H_0^{\mathrm{sw}}$ more directly, it is natural to
use position-wise and joint tests.

\paragraph{Position-wise tests.}
Fix $\ell\in\{1,\dots,n\}$ and let $Y_{j,\ell}^{\mathrm{obs}}:=Y_{j,\ell}(Z_j^{\mathrm{obs}})$ denote the observed outcome
at focal position $\ell$ in session $j$. Consider the HT estimator
\[
\hat\tau_{\ell,\mathrm{HT}}
:=
\frac{1}{J}\sum_{j=1}^J
\left(
\frac{Z_j^{\mathrm{obs}}\,Y_{j,\ell}^{\mathrm{obs}}}{p_j}
-
\frac{(1-Z_j^{\mathrm{obs}})\,Y_{j,\ell}^{\mathrm{obs}}}{1-p_j}
\right),
\]
and the upper-bound variance estimator
\[
\widehat V_{\ell,\mathrm{up}}
:=
\frac{1}{J^2}\sum_{j=1}^J (Y_{j,\ell}^{\mathrm{obs}})^2
\left(
\frac{Z_j^{\mathrm{obs}}}{p_j^2}
+
\frac{1-Z_j^{\mathrm{obs}}}{(1-p_j)^2}
\right),
\qquad
T_{\ell}:=\frac{\hat\tau_{\ell,\mathrm{HT}}}{\sqrt{\widehat V_{\ell,\mathrm{up}}}}.
\]
A studentized CRT for the position-wise null $H_{0,\ell}:\tau_\ell=0$ is obtained by the same resampling scheme:
resample $Z_1^\ast,\dots,Z_J^\ast$ independently with $Z_j^\ast\sim\mathrm{Bernoulli}(p_j)$, hold
$\{Y_{j,\ell}^{\mathrm{obs}}\}_{j=1}^J$ fixed, and recompute $T_\ell^\ast$.
The resulting Monte Carlo $p$-value is asymptotically valid under the same conditions as
Theorem~\ref{thm:weak-null-valid-session}.
The same construction applies to any subset of focal positions by restricting $\ell$ to a set $\mathcal L$.

\paragraph{Joint tests via quadratic-form ($F$-type) statistics.}
To test the strong joint null $H_0^{\mathrm{sw}}$ (or, more generally, $\tau_\ell=0$ for all $\ell\in\mathcal L$ for a
subset $\mathcal L$), one can combine the vector of position-wise effect estimators using an $F$-type quadratic form.
Let $\mathbf Y_j^{\mathrm{obs}}:=(Y_{j,1}^{\mathrm{obs}},\dots,Y_{j,n}^{\mathrm{obs}})^\top$ and define the vector HT
estimator $\widehat{\boldsymbol\tau}_{\mathrm{HT}}:=(\hat\tau_{1,\mathrm{HT}},\dots,\hat\tau_{n,\mathrm{HT}})^\top$.
A natural matrix analogue of \eqref{eq:Vup} is
\[
\widehat\Sigma_{\mathrm{up}}
:=
\frac{1}{J^2}\sum_{j=1}^J \mathbf Y_j^{\mathrm{obs}}(\mathbf Y_j^{\mathrm{obs}})^\top
\left(
\frac{Z_j^{\mathrm{obs}}}{p_j^2}
+
\frac{1-Z_j^{\mathrm{obs}}}{(1-p_j)^2}
\right),
\]
and an omnibus statistic is
\[
T_F
:=
\widehat{\boldsymbol\tau}_{\mathrm{HT}}^\top \,\widehat\Sigma_{\mathrm{up}}^{-1}\,\widehat{\boldsymbol\tau}_{\mathrm{HT}},
\]
with the convention that a generalized inverse may be used when needed (e.g., when $n$ is large relative to $J$).
A randomization $p$-value is obtained by recomputing $T_F^\ast$ under each resampled assignment vector
$\mathbf Z^\ast=(Z_1^\ast,\dots,Z_J^\ast)$.
Such quadratic-form tests are standard in randomization inference for multivariate outcomes; see, e.g.,
\citet{Dasgupta2017}. These joint tests are typically much more sensitive than focal averages to structured or
sign-reversing alternatives.

Formal asymptotic validity results for the position-wise and joint studentized CRTs described above are stated and proved in Appendix~\ref{app:weak-null-posjoint-valid}.


\section{Power analysis under a superpopulation model}\label{sec:power}

Sections~\ref{sec:main} and~\ref{sec:weak-null-test} establish validity of our randomization tests under a
finite-population framework. This section studies power under a superpopulation model for the potential-outcome
time series. The goal is to obtain interpretable approximations that clarify how block length and predetermined pooling
enter the signal-to-noise ratio of the studentized CRTs.

Similarly to Section~\ref{sec:weak-null-test}, we consider a simple setup where the design blocks have a fixed length
and are longer than the carryover memory length $m$.\footnote{In this section, we use the term ``block'' for fixed-length
design intervals rather than ``sessions,'' since blocks need not correspond to naturally occurring time units (e.g., days
or weeks) to support meaningful weak null hypotheses as in Section~\ref{sec:weak-null-test}.}
Formally, fix an experiment length $T$ and a block length $L\ge 1$ such that $T=ML$ for some integer $M$.
Define design blocks
\[
B_k := \{(k-1)L+1,\dots,kL\},\qquad k=1,\dots,M.
\]
Assume a regular switchback design with constant assignment probability $q\in(0,1)$:
\begin{equation}\label{eq:design-constq}
W^{(k)} \stackrel{ind}{\sim} \mathrm{Bernoulli}(q),\qquad W_t = W^{(k)}\ \text{for all }t\in B_k.
\end{equation}

Fix a pooling size $r\ge 1$ and assume $r\mid M$. Set $J:=M/r$ and define predetermined pooled sections
\[
[s_j,e_j]
:=
\bigcup_{k=(j-1)r+1}^{jr} B_k
=
\{(j-1)rL+1,\dots,jrL\},
\qquad j=1,\dots,J,
\]
each of length $rL$. Let $E_j$ denote the event that pooled section $j$ is blockwise constant:
\[
E_j=\{W^{((j-1)r+1)}=\cdots=W^{(jr)}\}.
\]
On $E_j$, let $Z_j$ denote the common block value.

\begin{lemma}[Conditional pooled-section law under predetermined pooling]\label{lem:pool-pre-law}
Under \eqref{eq:design-constq}, for each $j$,
\[
\mathbb{P}(Z_j=1\mid E_j)
=
p(r;q)
:=
\frac{q^r}{q^r+(1-q)^r}.
\]
Moreover, across disjoint sections the pairs $\{(E_j,Z_j)\}$ are independent, and hence conditional on
$\{E_j: j\in\mathcal{J}(\mathbf{W})\}$ the labels $\{Z_j: j\in\mathcal{J}(\mathbf{W})\}$ are independent
$\mathrm{Bernoulli}(p(r;q))$.
\end{lemma}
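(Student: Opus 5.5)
The plan is to treat Lemma~\ref{lem:pool-pre-law} as the equal-probability special case of Lemma~\ref{lem:cond-merged-section}. I will also spell out the independence claim directly, since it involves conditioning on a random index set $\mathcal{J}(\mathbf{W})$.

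\emph{Step 1 (marginal law on one section).} Fix $j$ and write $k_1=(j-1)r+1,\dots,k_r=jr$ for its blocks. Under \eqref{eq:design-constq} these blocks are i.i.d.\ $\mathrm{Bernoulli}(q)$, so
\[
\mathbb{P}(E_j, Z_j=1)=\mathbb{P}\bigl(W^{(k_1)}=\cdots=W^{(k_r)}=1\bigr)=q^r,
\qquad
\mathbb{P}(E_j)=q^r+(1-q)^r .
\]
Taking the ratio gives $p(r;q)$. This is exactly Lemma~\ref{lem:cond-merged-section} with $q_k\equiv q$.

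\emph{Step 2 (independence across sections).} The pair $(E_j,Z_j)$ is a measurable function of the block vector $(W^{(k)})_{k=(j-1)r+1}^{jr}$. The sections partition $\{1,\dots,M\}$ into disjoint index sets, and the blocks are mutually independent. Grouping independent variables into disjoint collections and applying functions preserves independence, so the pairs $\{(E_j,Z_j)\}_{j=1}^J$ are independent.

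\emph{Step 3 (conditioning on the realized set).} This is the only step that needs care, because $\mathcal{J}(\mathbf{W})=\{j: E_j \text{ occurs}\}$ is itself random. I will condition on the event $\{\mathcal{J}(\mathbf{W})=A\}$ for a fixed $A\subseteq\{1,\dots,J\}$, which equals $\bigcap_{j\in A}E_j\cap\bigcap_{j\notin A}E_j^c$. For any $z\in\{0,1\}^A$, Step 2 lets the joint probability factor:
\[
\mathbb{P}\Bigl(Z_j=z_j\ \forall j\in A,\ \mathcal{J}(\mathbf{W})=A\Bigr)
=\prod_{j\in A}\mathbb{P}(E_j,Z_j=z_j)\prod_{j\notin A}\mathbb{P}(E_j^c).
\]
Dividing by $\mathbb{P}(\mathcal{J}(\mathbf{W})=A)=\prod_{j\in A}\mathbb{P}(E_j)\prod_{j\notin A}\mathbb{P}(E_j^c)$ leaves
\[
\prod_{j\in A}\mathbb{P}(Z_j=z_j\mid E_j)=\prod_{j\in A}p(r;q)^{z_j}\bigl(1-p(r;q)\bigr)^{1-z_j}.
\]
So the selected labels are i.i.d.\ $\mathrm{Bernoulli}(p(r;q))$. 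This holds whether we condition on $\{\mathcal{J}(\mathbf{W})=A\}$ or only on $\bigcap_{j\in A}E_j$; in the latter case the product over $j\notin A$ simply drops out.

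Given Step 2, this factorization is routine, and I do not expect any real obstacle.
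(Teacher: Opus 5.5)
Your proposal is correct and follows the paper's own proof: both split $E_j$ into the all-ones and all-zeros events to obtain $q^r/\{q^r+(1-q)^r\}$, and both derive independence from the disjointness of the underlying block sets. Your Step 3, which conditions explicitly on $\{\mathcal{J}(\mathbf{W})=A\}$ and factorizes, makes rigorous a step the paper leaves implicit here; the paper carries out the analogous factorization only for $\bigcap_j E_j$, in the proof of Lemma~\ref{lem:cond-merged-section}.
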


Fix a burn-in parameter $m\ge 0$ and assume $rL>m$. Each pooled section contributes
\begin{equation}\label{eq:n-def}
n:=rL-m
\end{equation}
focal time points (the last $n$ periods in the pooled section).

We posit the distributed-lag model
\begin{equation}\label{eq:dgp-main}
Y_t(\mathbf w) = \mu + \sum_{\ell=0}^{m_0}\beta_\ell w_{t-\ell} + \varepsilon_t,\qquad t=1,\dots,T,
\end{equation}
where $m_0\ge 0$ is the true carryover horizon, $\mu\in\mathbb{R}$ is a baseline level, and
$\{\varepsilon_t\}$ is a stationary mean-zero process independent of the assignment mechanism.
For closed-form expressions, assume AR(1) errors:
\begin{equation}\label{eq:ar1-main}
\varepsilon_t = \rho\,\varepsilon_{t-1}+u_t,\qquad u_t\stackrel{iid}{\sim}(0,\sigma_u^2),\qquad |\rho|<1,
\end{equation}
with $\sigma_\varepsilon^2:=\sigma_u^2/(1-\rho^2)$.

Under constant paths, the total long-run effect is
\[
\tau_{\mathrm{tot}}
:=
\mathbb{E}\!\left[Y_t(\mathbf{1}_T)-Y_t(\mathbf{0}_T)\right]
=
\sum_{\ell=0}^{m_0}\beta_\ell.
\]
The carryover null $H_0^m$ corresponds to $\beta_\ell=0$ for all $\ell>m$.

\begin{remark}[Asymptotic regime]
    Throughout, we consider $M\to\infty$ with $(L,r,m,m_0,q)$ fixed, so $T=ML\to\infty$ and $J=M/r\to\infty$.
    For the total-effect test, the number of usable (constant) pooled sections
    $J_{\mathrm{tot}}:=|\mathcal J(\mathbf W)|$ is random, where $\mathcal J(\mathbf W):=\{j:\,E_j\}$; under
    \eqref{eq:design-constq}, $J_{\mathrm{tot}}/J\to \pi_r(q)$ with $\pi_r(q)=q^r+(1-q)^r$, and hence
    $J_{\mathrm{tot}}\to\infty$ with high probability. In the proof, we first derive a finite-population CLT conditional on the realized potential outcomes, then use the DGP only to approximate the resulting random variance functionals and obtain an unconditional normal approximation. This is not a superpopulation CLT: the estimand remains sample-dependent, and the DGP is invoked solely to make variance and design/power trade-offs transparent—not to redefine the target of inference.
\end{remark}

\paragraph{Total-effect statistic.}
Conditional on the realized set of usable pooled sections, let $J_{\mathrm{tot}}=|\mathcal{J}(\mathbf{W})|$ denote their
number and write $Z_j\in\{0,1\}$ for the (common) pooled-section assignment label on each $j\in\mathcal J(\mathbf W)$.
Let $\bar Y_j^{\mathrm{obs}}$ denote the mean outcome over the $n$ focal periods in pooled section $j$.

With constant $p=p(r;q)$ and equal $n$, the pooled-section HT estimator equals
\begin{equation}\label{eq:ht-tot-power}
\hat\tau_{\mathrm{HT}}
:= \frac{1}{J_{\mathrm{tot}}}\sum_{j\in\mathcal J(\mathbf W)}\left(
\frac{Z_j\,\bar Y_j^{\mathrm{obs}}}{p}-\frac{(1-Z_j)\,\bar Y_j^{\mathrm{obs}}}{1-p}\right),
\end{equation}
and we studentize using
\begin{equation}\label{eq:vup-tot-power}
\widehat V_{\mathrm{up}}
:= \frac{1}{J_{\mathrm{tot}}^2}\sum_{j\in\mathcal J(\mathbf W)} (\bar Y_j^{\mathrm{obs}})^2
\left(\frac{Z_j}{p^2}+\frac{1-Z_j}{(1-p)^2}\right),
\qquad
T_{\mathrm{tot}} := \frac{\hat\tau_{\mathrm{HT}}}{\sqrt{\widehat V_{\mathrm{up}}}}.
\end{equation}

\paragraph{$m$-carryover statistic (paired predetermined pooled sections).}
Following the carryover test in Section~\ref{sec:carryover-test}, we use an alternating holdout:
even pooled sections provide outcomes, and the immediately preceding odd pooled sections provide randomized labels. Let $J_e:=\lfloor J/2\rfloor$ and consider the $J_e$ adjacent pairs
$\big([s_{2j-1},e_{2j-1}],[s_{2j},e_{2j}]\big)$ for $j=1,\dots,J_e$.
For each even pooled section $2j$, define the focal mean
\[
\bar Y_{2j}^{\mathrm{obs}}
:=
\frac{1}{n}\sum_{t=s_{2j}+m}^{e_{2j}} Y_t^{\mathrm{obs}},
\qquad n:=rL-m.
\]

Define the randomized label from the preceding odd pooled section as the assignment on its last block:
\[
Z_{2j-1}
:=
W^{((2j-1)r)} \in\{0,1\}.
\]
Under \eqref{eq:design-constq}, $\Pr(Z_{2j-1}=1)=q$ and $\{Z_{2j-1}\}_{j=1}^{J_e}$ are independent.

We estimate the tail carryover signal using the HT contrast
\begin{equation}\label{eq:ht-carry-power}
\hat\delta_m
:= \frac{1}{J_e}\sum_{j=1}^{J_e}\left(
\frac{Z_{2j-1}\,\bar Y_{2j}^{\mathrm{obs}}}{q}
-
\frac{(1-Z_{2j-1})\,\bar Y_{2j}^{\mathrm{obs}}}{1-q}
\right),
\end{equation}
and studentize with the corresponding upper-bound form
\begin{equation}\label{eq:vup-carry-power}
\widehat V_{\mathrm{up}}^{(m)}
:= \frac{1}{J_e^2}\sum_{j=1}^{J_e} (\bar Y_{2j}^{\mathrm{obs}})^2
\left(\frac{Z_{2j-1}}{q^{2}}+\frac{1-Z_{2j-1}}{(1-q)^{2}}\right),
\qquad
T_{m} := \frac{\hat\delta_m}{\sqrt{\widehat V_{\mathrm{up}}^{(m)}}}.
\end{equation}

We first record the variance of the average of $n$ consecutive AR(1) errors:
\begin{equation}\label{eq:var-bar-eps}
\sigma_{\bar\varepsilon}^2(n,\rho)
:= \mathrm{Var}\!\left(\frac{1}{n}\sum_{i=1}^{n}\varepsilon_i\right)
=
\frac{\sigma_\varepsilon^2}{n^2}\left[n+2\sum_{h=1}^{n-1}(n-h)\rho^h\right].
\end{equation}

\begin{proposition}[Power for the total-effect test]\label{prop:power-total}
Assume \eqref{eq:design-constq}, the predetermined pooled-section construction, and the DGP
\eqref{eq:dgp-main}--\eqref{eq:ar1-main}. Assume the analyst burn-in satisfies $m\ge m_0$ so that focal means are
uncontaminated by carryover. Let $\varphi_{\mathrm{tot}}$ be the one-sided level-$\alpha$ CRT that rejects for large
$T_{\mathrm{tot}}$ in \eqref{eq:vup-tot-power}. Then, conditional on $J_{\mathrm{tot}}$,
\begin{equation}\label{eq:power-total}
\mathbb{P}\!\left(\varphi_{\mathrm{tot}}=1 \,\middle|\, J_{\mathrm{tot}}\right)
\;\approx\;
1-\Phi\!\left(\frac{z_{1-\alpha}-\mu_{\mathrm{tot}}(J_{\mathrm{tot}})}{\sigma_{\mathrm{tot}}}\right),
\end{equation}
where $p=p(r;q)$, $n=rL-m$, and
\begin{equation}\label{eq:mu-sigma-total}
\mu_{\mathrm{tot}}(J_{\mathrm{tot}})
:=
\frac{\tau_{\mathrm{tot}}\sqrt{J_{\mathrm{tot}}}}
{\sqrt{\ \frac{\sigma_{\bar\varepsilon}^2(n,\rho)}{p(1-p)}+\frac{(\mu+\tau_{\mathrm{tot}})^2}{p}+\frac{\mu^2}{1-p}\ }},
\qquad
\sigma_{\mathrm{tot}}^2
:=
\frac{\sigma_{\bar\varepsilon}^2(n,\rho)+\{\mu+(1-p)\tau_{\mathrm{tot}}\}^2}
{\sigma_{\bar\varepsilon}^2(n,\rho)+p\mu^2+(1-p)(\mu+\tau_{\mathrm{tot}})^2}.
\end{equation}
Moreover, $0<\sigma_{\mathrm{tot}}^2\le 1$ and $\sigma_{\mathrm{tot}}^2\to 1$ under local alternatives
$\tau_{\mathrm{tot}}=o(1)$.
\end{proposition}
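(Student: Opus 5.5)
The plan is to show two things. First, the randomization (reference) distribution of $T_{\mathrm{tot}}$ is asymptotically standard normal, so the CRT's critical value is approximately $z_{1-\alpha}$. Second, the observed $T_{\mathrm{tot}}$ is approximately $N(\mu_{\mathrm{tot}}(J_{\mathrm{tot}}),\sigma_{\mathrm{tot}}^2)$ under the DGP \eqref{eq:dgp-main}--\eqref{eq:ar1-main}. Formula \eqref{eq:power-total} then follows at once.

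\textbf{Step 1: reduce to a section-level model.} Throughout I condition on $J_{\mathrm{tot}}$ and on the realized set $\mathcal J(\mathbf W)$. Since $m\ge m_0$ and every focal period of section $j$ has its full $(m+1)$-lag history inside the constant section, \eqref{eq:dgp-main} gives
\[
\bar Y_j^{\mathrm{obs}}=\mu+Z_j\tau_{\mathrm{tot}}+\bar\varepsilon_j ,
\]
where $\bar\varepsilon_j$ is the mean of $n$ consecutive AR(1) errors, so $\mathrm{Var}(\bar\varepsilon_j)=\sigma_{\bar\varepsilon}^2(n,\rho)$ by \eqref{eq:var-bar-eps}. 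By Lemma~\ref{lem:pool-pre-law}, given $\{E_j\}$ the labels $Z_j$ are i.i.d.\ $\mathrm{Bernoulli}(p)$ with $p=p(r;q)$, and they are independent of $\{\varepsilon_t\}$. Different sections' $\bar\varepsilon_j$ are correlated only through geometrically decaying cross-section covariances. I would treat these as negligible, which is the ``$\approx$'' in the statement, noting that they contribute $O(\rho^{m})$ terms and that summability of AR(1) autocovariances keeps a CLT for averages over $j$ valid.

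\textbf{Step 2: the reference distribution.} Hold $\{\bar Y_j^{\mathrm{obs}}\}$ fixed and resample $Z_j^\ast\sim\mathrm{Bern}(p)$ independently. Then $\hat\tau^\ast_{\mathrm{HT}}$ has mean $0$ and variance
\[
J_{\mathrm{tot}}^{-2}\sum_j(\bar Y_j^{\mathrm{obs}})^2\Bigl(\tfrac1p+\tfrac1{1-p}\Bigr).
\]
This coincides with $\mathbb E^\ast\widehat V_{\mathrm{up}}^\ast$, and $\widehat V^\ast_{\mathrm{up}}$ concentrates around it by a weak LLN, using fourth moments of $\bar Y_j$. A Lindeberg CLT for independent summands, conditional on the outcomes, then gives $T^\ast_{\mathrm{tot}}\Rightarrow N(0,1)$. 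The randomization quantile therefore converges to $z_{1-\alpha}$, and the Monte Carlo error vanishes as the number of draws $M\to\infty$.

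\textbf{Step 3: the observed statistic.} Write $X_j:=Z_j\bar Y_j/p-(1-Z_j)\bar Y_j/(1-p)$. Then $\mathbb E X_j=\tau_{\mathrm{tot}}$ and
\[
\mathbb E X_j^2=\frac{\sigma_{\bar\varepsilon}^2}{p(1-p)}+\frac{(\mu+\tau_{\mathrm{tot}})^2}{p}+\frac{\mu^2}{1-p}.
\]
By the LLN, $J_{\mathrm{tot}}\widehat V_{\mathrm{up}}\to\mathbb E X_j^2$, and by the CLT, $\sqrt{J_{\mathrm{tot}}}(\hat\tau_{\mathrm{HT}}-\tau_{\mathrm{tot}})\Rightarrow N(0,\mathbb EX_j^2-\tau_{\mathrm{tot}}^2)$. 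Slutsky's lemma gives $T_{\mathrm{tot}}\approx N(\mu_{\mathrm{tot}},\sigma^2_{\mathrm{tot}})$, where $\mu_{\mathrm{tot}}=\tau_{\mathrm{tot}}\sqrt{J_{\mathrm{tot}}}/\sqrt{\mathbb EX_j^2}$ and $\sigma_{\mathrm{tot}}^2=1-\tau_{\mathrm{tot}}^2/\mathbb EX_j^2$. Multiplying numerator and denominator by $p(1-p)$ and expanding $p(1-p)(\mathbb EX_j^2-\tau_{\mathrm{tot}}^2)$ yields
\[
\sigma_{\bar\varepsilon}^2+\mu^2+2(1-p)\mu\tau_{\mathrm{tot}}+(1-p)^2\tau_{\mathrm{tot}}^2=\sigma_{\bar\varepsilon}^2+\{\mu+(1-p)\tau_{\mathrm{tot}}\}^2,
\]
and the denominator becomes $\sigma_{\bar\varepsilon}^2+p\mu^2+(1-p)(\mu+\tau_{\mathrm{tot}})^2$, matching \eqref{eq:mu-sigma-total}. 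Then $\mathbb P(T_{\mathrm{tot}}>z_{1-\alpha})\approx1-\Phi\bigl((z_{1-\alpha}-\mu_{\mathrm{tot}})/\sigma_{\mathrm{tot}}\bigr)$.

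\textbf{Step 4: the bounds on $\sigma^2_{\mathrm{tot}}$.} The identity $\sigma_{\mathrm{tot}}^2=1-\tau_{\mathrm{tot}}^2/\mathbb EX_j^2$ immediately gives $\sigma^2_{\mathrm{tot}}\le1$. Positivity follows from $\sigma_{\bar\varepsilon}^2>0$ in the numerator, and $\sigma^2_{\mathrm{tot}}\to1$ when $\tau_{\mathrm{tot}}=o(1)$.

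\textbf{Main obstacle.} I expect the hardest part to be the rigorous treatment of cross-section dependence of $\bar\varepsilon_j$ in the CLT and LLN of Step 3. I would handle this via an $m$-dependent approximation (truncating the AR(1) MA representation) or a mixing CLT. I would then argue that the cross-covariances enter the variance only at order $\rho^{m+1}/n$, which the approximation ignores.
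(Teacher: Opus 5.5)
Your computations are correct and give exactly the paper's $\mu_{\mathrm{tot}}(J_{\mathrm{tot}})$ and $\sigma_{\mathrm{tot}}^2$, but your route is different. You run one unconditional CLT/LLN over the joint randomness of $(Z_j,\bar\varepsilon_j)$. The paper instead conditions on the $\sigma$-field $\mathcal H$ generated by the realized potential outcomes and $\mathcal J(\mathbf W)$. It then does three things:
\begin{itemize}
\item it proves a design-based Lyapunov CLT in which only the independent labels $Z_j$ are random;
\item it shows the random functionals $\sigma_\tau^2(J_{\mathrm{tot}})$ and $v_{\mathrm{up}}(J_{\mathrm{tot}})$ converge, via an ergodic ratio LLN under independent thinning (Lemma~\ref{lem:ratio-LLN});
\item it transfers the conditional normal limit to an unconditional one with Lemma~\ref{lem:cond-to-uncond}.
\end{itemize}
The two limits coincide by the law of total variance. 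Because $m\ge m_0$, $\mathbb E[X_j\mid\{\varepsilon_t\}]=\bar Y_j(1)-\bar Y_j(0)=\tau_{\mathrm{tot}}$ is nonrandom, so $\operatorname{Var}(X_j)=\mathbb E[\operatorname{Var}(X_j\mid\{\varepsilon_t\})]$.

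That same fact means your ``main obstacle'' does not exist. Your claim that cross-section covariances of the $\bar\varepsilon_j$ contribute terms of order $\rho^{m+1}/n$, which the approximation ignores, is incorrect. Given the error process (and $\mathcal J(\mathbf W)$), the centered summands $X_j-\tau_{\mathrm{tot}}$ are independent with conditional mean zero. Hence $\operatorname{Cov}(X_j,X_k)=0$ exactly for $j\neq k$: they form a martingale difference array, and $\mathbb E X_j^2-\tau_{\mathrm{tot}}^2$ is the exact asymptotic variance of $\sqrt{J_{\mathrm{tot}}}(\hat\tau_{\mathrm{HT}}-\tau_{\mathrm{tot}})$, with nothing dropped.

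Serial dependence enters only through the LLN for $J_{\mathrm{tot}}^{-1}\sum_j X_j^2$, where summable autocovariances (or ergodicity) suffice. No mixing CLT or MA truncation is needed, and the ``$\approx$'' reflects only the finite-$J_{\mathrm{tot}}$ normal approximation.

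What the paper's conditioning buys is avoiding a dependent-data CLT altogether, and keeping the argument design-based, with the DGP used only to evaluate variance functionals. What your route buys is brevity. The identity $J_{\mathrm{tot}}\widehat V_{\mathrm{up}}=J_{\mathrm{tot}}^{-1}\sum_j X_j^2$, together with $\sigma_{\mathrm{tot}}^2=1-\tau_{\mathrm{tot}}^2/\mathbb E X_j^2$, gives $0<\sigma_{\mathrm{tot}}^2\le 1$ and the local-alternative limit at once; the paper states these bounds without a separate derivation.
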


\begin{proposition}[Power for the $m$-carryover test]\label{prop:power-carry}
Assume \eqref{eq:design-constq}, the predetermined pooled-section construction, and the DGP
\eqref{eq:dgp-main}--\eqref{eq:ar1-main}. Assume $m_0>m$ and
\begin{equation}\label{eq:ass-m0-le-ell}
m_0\le rL
\end{equation}
so that, after the $m$-period burn-in, any lag leaving an even pooled section can reach only into the immediately
preceding odd pooled section.

For the carryover test statistic \eqref{eq:ht-carry-power}--\eqref{eq:vup-carry-power}, only the assignment on the
\emph{last design block} of the preceding odd pooled section is used as the randomized label. Define the corresponding
tail-signal functional
\begin{equation}\label{eq:delta-tail-main}
\delta_m(n)
:=
\frac{1}{n}\sum_{\ell=m+1}^{m_0}\min\{L,\ell-m\}\,\beta_\ell,
\qquad n=rL-m,
\end{equation}
and let $\varphi_m$ be the one-sided level-$\alpha$ CRT that rejects for large $T_m$ in \eqref{eq:vup-carry-power}.
Then, as $J\to\infty$,
\begin{equation}\label{eq:power-carry}
\mathbb{P}\!\left(\varphi_m=1 \right)
\;\approx\;
1-\Phi\!\left(\frac{z_{1-\alpha}-\mu_{m}(J)}{\sigma_{m}}\right),
\end{equation}
where $J_e=\lfloor J/2\rfloor$ and
\begin{equation}\label{eq:mu-sigma-carry}
\mu_{m}(J)
:=
\frac{\delta_m(n)\sqrt{J_e}}{\sqrt{v_{\mathrm{up}}^{(m)}}},
\qquad
\sigma_{m}^2
:=
\frac{\sigma_{\delta}^2}{v_{\mathrm{up}}^{(m)}},
\end{equation}
with $\pi_1=q$, $\pi_0=1-q$, and
\begin{align}
v_{\mathrm{up}}^{(m)}
&:=
\frac{\mathbb{E}[\bar Y(1)^2]}{\pi_1}+\frac{\mathbb{E}[\bar Y(0)^2]}{\pi_0}, \label{eq:vup-carry-def}\\
\sigma_{\delta}^2
&:=
\mathbb E\!\left[\left(\frac{1}{\pi_1}-1\right)\bar Y(1)^2+\left(\frac{1}{\pi_0}-1\right)\bar Y(0)^2+2\,\bar Y(1)\bar Y(0)\right].
\label{eq:sigdelta-def}
\end{align}
Here $\bar Y(1)$ and $\bar Y(0)$ denote the even-section focal mean under the counterfactual intervention that the
\emph{last design block} of the preceding odd pooled section is set to $1$ versus $0$, respectively (with all other
block assignments and the error process following their original laws).
\end{proposition}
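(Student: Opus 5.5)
The plan mirrors the argument behind Proposition~\ref{prop:power-total}. We derive a finite-population CLT for the numerator $\hat\delta_m$ under the randomization of the labels $\{Z_{2j-1}\}$, show that the studentizer $J_e\widehat V_{\mathrm{up}}^{(m)}$ converges in probability to a deterministic limit, and compare the resulting normal law of $T_m$ with the randomization null distribution. The CRT critical value for $T_m$ is asymptotically $z_{1-\alpha}$, by the studentized-CRT argument used for Theorem~\ref{thm:weak-null-valid-session}. Rejection therefore has probability $\mathbb P(T_m>z_{1-\alpha})+o(1)$, and the task reduces to the asymptotic law of $T_m$.

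\textbf{Step 1 (exposure decomposition).} Fix the even section $2j$ and a focal time $t\in[s_{2j}+m,e_{2j}]$. Under \eqref{eq:dgp-main} with $m_0\le rL$, the lags $t-\ell$ with $\ell\le m$ stay inside section $2j$. A lag $\ell>m$ with $t-\ell<s_{2j}$ falls in the preceding odd section $2j-1$. We isolate the part of that section lying in its last design block, $[e_{2j-1}-L+1,e_{2j-1}]$, whose assignment is $Z_{2j-1}$. Focal time $t=s_{2j}+m+i-1$ with lag $\ell$ reaches position $t-\ell$. Counting over $i=1,\dots,n$, the number of focal times for which lag $\ell$ lands in that last block is $\min\{L,\ell-m\}$ (the count is capped at $L$ by the block length). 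Hence
\[
\bar Y_{2j}^{\mathrm{obs}}=A_j+Z_{2j-1}\,\delta_m(n),
\]
where $A_j$ collects $\mu$, the averaged noise $\bar\varepsilon_{2j}$, in-section treatment terms, and contributions from earlier blocks of section $2j-1$. The key point is that $A_j$ does not depend on $Z_{2j-1}$. This gives the potential outcomes $\bar Y_{2j}(z)=A_j+z\,\delta_m(n)$ and the contrast $\bar Y(1)-\bar Y(0)=\delta_m(n)$.

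\textbf{Step 2 (conditional CLT for $\hat\delta_m$).} Condition on $\{A_j\}$. The labels $Z_{2j-1}$ are independent Bernoulli$(q)$ and independent of $\{A_j\}$, because the other blocks and $\varepsilon$ are. Then $\hat\delta_m$ is an HT estimator for independent Bernoulli assignment, with mean $\delta_m(n)$. Its variance is $J_e^{-2}\sum_j\bigl[(1/\pi_1-1)\bar Y_{2j}(1)^2+(1/\pi_0-1)\bar Y_{2j}(0)^2+2\bar Y_{2j}(1)\bar Y_{2j}(0)\bigr]$. Lyapunov's CLT applies because $\bar Y_{2j}(z)$ has uniformly bounded fourth moments. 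One route is to assume finite fourth moments of $u_t$; alternatively, work with second moments and a Lindeberg condition.

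\textbf{Step 3 (laws of large numbers).} Average the conditional variance and the studentizer over $j$. The sequence $\{A_j\}$ is stationary across $j$ and weakly dependent, since AR(1) covariances decay geometrically and the treatment terms are $1$-dependent across pairs. A weak LLN for mixing sequences then gives
\[
J_e\,\mathrm{Var}(\hat\delta_m\mid A)\to\sigma_\delta^2,\qquad J_e\widehat V_{\mathrm{up}}^{(m)}\to v_{\mathrm{up}}^{(m)},
\]
in probability, using $\mathbb E[\widehat V_{\mathrm{up}}\mid A]=J_e^{-2}\sum_j\{\bar Y_{2j}(1)^2/\pi_1+\bar Y_{2j}(0)^2/\pi_0\}$. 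Slutsky's lemma then yields
\[
T_m\approx N\bigl(\mu_m(J),\,\sigma_m^2\bigr).
\]
This gives \eqref{eq:power-carry}, and the approximation holds unconditionally by dominated convergence.

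The main obstacle I expect is the combinatorial count in Step 1. Lags from an even section may cross more than one block of the odd section. We must verify that, under $m_0\le rL$, every such contribution either uses exactly the last block, with weight $\min\{L,\ell-m\}/n$, or is absorbed into $A_j$ and is independent of $Z_{2j-1}$. This requires careful bookkeeping of the overlap of $[t-\ell]$ with the last block as $i$ ranges over the focal window, including the edge case $\ell-m>L$. A secondary issue is justifying the mixing LLN across pairs. Adjacent pairs share AR(1) noise, but only with geometrically decaying correlation, so a variance bound of order $J_e^{-1}$ suffices.
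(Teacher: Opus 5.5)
Your proposal is correct and follows the paper's proof essentially step for step: the same last-block exposure count giving weights $\min\{L,\ell-m\}$ under $m_0\le rL$, a conditional Lyapunov CLT for $\hat\delta_m$ given all randomness other than the last-block labels, a conditional LLN for $J_e\widehat V_{\mathrm{up}}^{(m)}$ plus a stationarity-based LLN across pairs, and a randomization critical value tending to $z_{1-\alpha}$. The only difference is that you get the across-pair limits from a covariance-decay variance bound (which needs fourth moments of $u_t$) rather than the ergodic theorem the paper uses; rely on that bound or on ergodicity, not on a generic mixing LLN, since an AR(1) with arbitrary i.i.d.\ innovations need not be strongly mixing.
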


Pooling more blocks (larger $r$) increases the within-section focal sample size $n=rL-m$ and reduces
$\sigma_{\bar\varepsilon}^2(n,\rho)$ via within-section averaging, but it also reduces the number of pooled sections
$J=M/r$ (and hence the number of odd--even pairs $J_e=\lfloor J/2\rfloor$ used by the carryover diagnostic).
For the total-effect test, usable pooled sections must be constant; the selection probability is
$\pi_r(q)=q^r+(1-q)^r$, so $J_{\mathrm{tot}}\sim\mathrm{Binomial}(J,\pi_r(q))$ and
$\mathbb E[J_{\mathrm{tot}}]=J\pi_r(q)$. For the carryover test based on \eqref{eq:ht-carry-power}--\eqref{eq:vup-carry-power}, pooling affects power primarily through the tail-signal functional
$\delta_m(n)$ in \eqref{eq:delta-tail-main} and through the variance of the even-section focal mean:
each lag coefficient $\beta_\ell$ with $\ell>m$ contributes to $\delta_m(n)$ only through those focal times whose
lag-$\ell$ exposure falls in the last block of the preceding odd pooled section, which yields weights capped at $L$ and
a $1/n$ dilution from averaging over $n$ focal periods. Propositions~\ref{prop:power-total} and \ref{prop:power-carry}
quantify how these effects (signal dilution, variance reduction from larger $n$, and fewer pairs $J_e$ when $r$ grows)
jointly determine the resulting noncentrality parameter.

Given pilot data (or pre-experiment time series), one can estimate the noise parameters in \eqref{eq:ar1-main} (e.g.,
$\rho$ and $\sigma_\varepsilon^2$, hence $\sigma_{\bar\varepsilon}^2(n,\rho)$) and specify a plausible effect scale
(e.g., $\tau_{\mathrm{tot}}$ for the total-effect test or a tail profile $\{\beta_\ell:\ell>m\}$ for the carryover
test). For each candidate $r$, compute $n=rL-m$ and $\pi_r(q)$, approximate the distribution of
$J_{\mathrm{tot}}\sim\mathrm{Binomial}(J,\pi_r(q))$, and plug these into \eqref{eq:power-total}--\eqref{eq:power-carry}
(with $J_{\mathrm{tot}}$ replaced by $\mathbb E[J_{\mathrm{tot}}]$ or averaged over simulated draws of
$J_{\mathrm{tot}}$). A practical constraint is to avoid overly aggressive pooling that yields too few usable sections
(or too few informative odd sections), e.g., by requiring $\mathbb E[J_{\mathrm{tot}}]$ and $J_e$ to exceed minimal
thresholds for stable studentization and accurate Monte Carlo calibration.

In Propositions~\ref{prop:power-total}--\ref{prop:power-carry}, experimental design enters power through the inverse-probability
factors $1/p$ and $1/(1-p)$ (and, under centered outcomes and local alternatives, essentially through $p(1-p)$ in the
noise term), so highly imbalanced designs can substantially inflate the studentization term and reduce the resulting
noncentrality parameter.
When outcomes are approximately centered (so $\mu\approx 0$) and effect sizes are modest, a balanced assignment
$p\approx 1/2$ is therefore typically near-optimal.
More generally, practitioners can plug pilot estimates into the power approximations
\eqref{eq:power-total}--\eqref{eq:power-carry} and select $p$ (or, in the pooled-section setup, the underlying
block-level fraction $q$, which determines $p=p(r;q)$ and $\mathbb E[J_{\mathrm{tot}}]=J\pi_r(q)$) by numerically
maximizing the resulting power proxy subject to operational constraints (e.g., minimum allocation rules).

\section{Simulation}\label{sec:simu}

This section evaluates the finite-sample size control and power properties of the proposed testing procedures. 
We study three classes of hypotheses: (i) the null of no total treatment effect, (ii) the $m$-carryover restriction, and (iii) the no-anticipation restriction. 
The simulation design is intended to assess both validity and power in time-series environments characterized by temporal dependence and heavy-tailed disturbances.

In this simulation study, we adopt a potential-outcomes framework similar to \citet{Bojinov2023}, but allow for a richer stochastic structure in the disturbance term in order to stress-test finite-sample validity. 
Potential outcomes are generated according to
\begin{equation}
\label{eq:dgp_twosided}
Y_t(w_{1:T})
= \mu + \alpha_t
+ \sum_{s=1}^{T}\delta^{(t+1-s)}\,w_s
+ \varepsilon_t\,\alpha_t\,
\mathbf{1}\!\left\{ w_{t-m:t} \in \{0_{m+1},1_{m+1}\} \right\},
\qquad t=1,\ldots,T.
\end{equation}
Here $\mu$ is a constant baseline level, $\alpha_t$ is a deterministic time effect, 
$\delta^{(t+1-s)}$ are non-stochastic carryover coefficients governing the effect of treatment assignments at different lags, 
$w_s\in\{0,1\}$ denotes treatment assignment at time $s$, and $\varepsilon_t$ is a stochastic disturbance.
The indicator term allows the scale of the disturbance to depend on whether treatment has remained constant over the most recent $m+1$ periods, thereby generating treatment-history–dependent heteroskedasticity. 
Different configurations of the carryover coefficients correspond to different null hypotheses and alternative effect structures.

Throughout, we set $\mu=0$ and $\alpha_t=\log t$. The number of periods is
\(
T \in \{60,120,180,240,300\},
\)
representing experimentally realistic horizons at which asymptotic approximations may be unreliable. We consider two disturbance specifications:
\begin{enumerate}
\item $\varepsilon_t \overset{i.i.d.}{\sim} \mathcal{N}(0,1)$,
\item $\varepsilon_t \overset{i.i.d.}{\sim} t(1)$, representing heavy-tailed shocks.
\end{enumerate}
We fix the treatment assignment mechanism to the optimal regular switchback design of \citet{Bojinov2023}. 
Switch points occur at times
\[
\{1, 2m+1, 3m+1, \ldots,(n-2)m+1\},
\]
with switching probability $1/2$ at each decision point. 
Holding this assignment rule fixed across all configurations allows us to focus on the finite-sample behavior of the inference procedures rather than design choice.

In all simulation experiments, we implement the pre-determined sectioning rule described in
Section~\ref{subsec:test-treatment-effects}. Concretely, given the design blocks $\{B_k\}_{k=1}^K$ and carryover horizon $m$,
we pool consecutive blocks greedily until the pooled length is at least $m+1$, producing a disjoint cover of $[T]$.

\subsection{Null of no total treatment effect}

To evaluate size and power under $H^{tot}_0$, we fix the carryover horizon at $m=2$ and assume it is correctly specified in the analysis as in \citet{Bojinov2023}. 
We set $\delta^{(p)}=0$ for $p\notin\{1,2,3\}$ and impose
\[
\delta^{(1)}=\delta^{(2)}=\delta^{(3)}=\delta,
\qquad \delta\in\{0,1,2,3\}.
\]

When $\delta=0$, the null \eqref{eq:sharp-null-total}  holds and empirical rejection frequencies measure test size. 
When $\delta>0$, rejection frequencies measure power. 
Unlike the sharp-null configuration considered in \citet{Bojinov2023}, our specification does not imply a sharp null when $\delta=0$, allowing a direct comparison between Fisher randomization tests and conditional randomization tests under a non-sharp null.

For each parameter configuration, we generate one treatment assignment path and compute observed outcomes from \eqref{eq:dgp_twosided}. 
We then apply four testing procedures:
\begin{enumerate}
\item the proposed conditional randomization test (CRT),
\item a Fisher randomization test imposed under a misspecified sharp null,
\item the unconditional PIRT procedure of \citet{zhong2024unconditional} with rejection threshold $\alpha$,
\item the Horvitz--Thompson asymptotic test of \citet{Bojinov2023}.
\end{enumerate}

For asymptotic inference, we use the conservative variance upper bound proposed in \citet{Bojinov2023}. 
All tests are conducted at nominal level $\alpha=0.05$ using the Horvitz--Thompson estimator. 
Each configuration is replicated 1,000 times, and rejection rates are reported as Monte Carlo averages.

\begin{figure}[ht]
\centering
\caption{Rejection Frequencies Under the Null of no total treatment effect}
\label{fig:power_global}
\includegraphics[width=\textwidth]{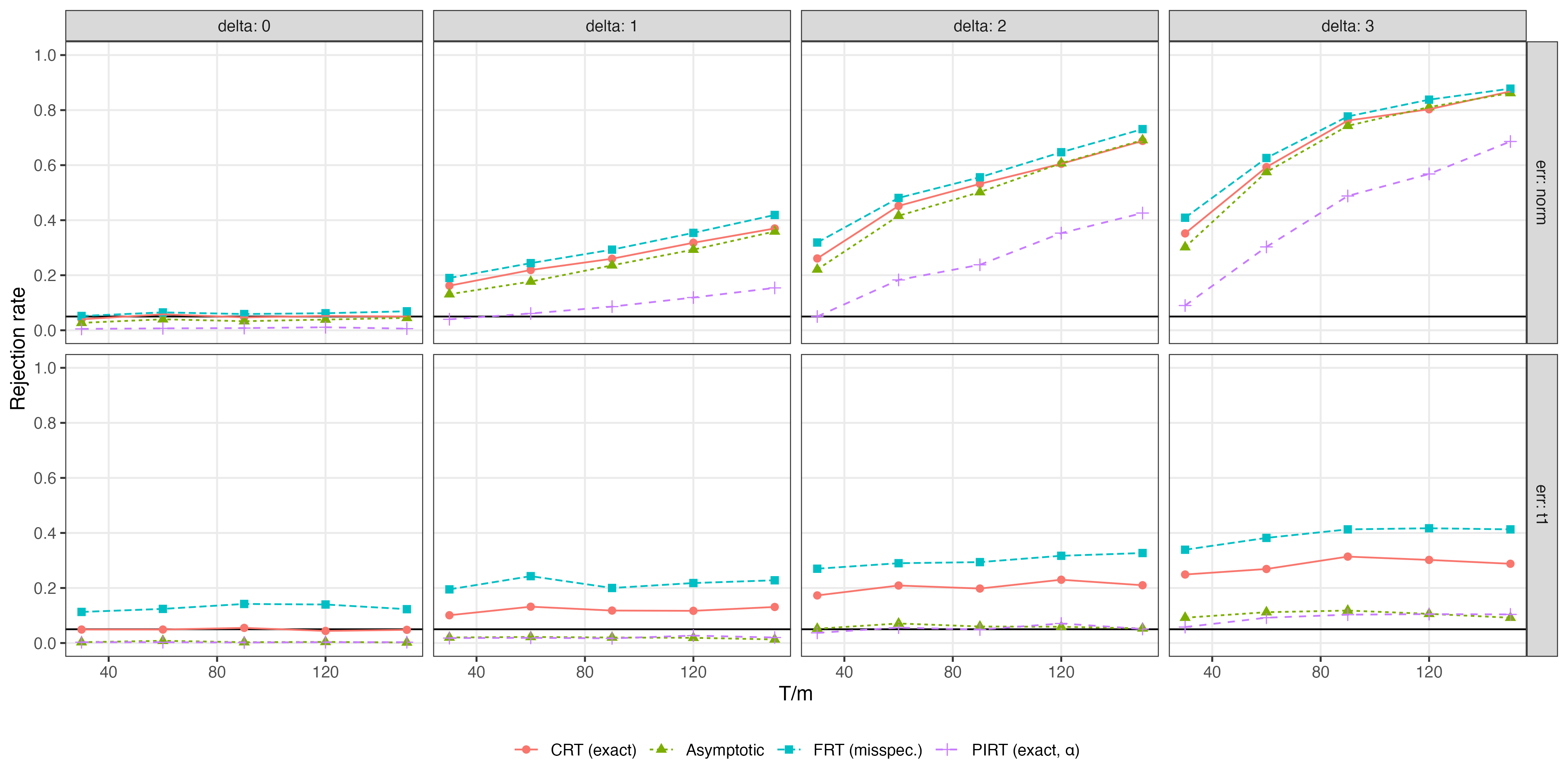}
\notes{\textbf{Notes:} Each panel reports Monte Carlo rejection frequencies based on 1,000 replications at nominal level $\alpha=0.05$. 
The solid horizontal line marks the nominal size. 
“CRT (exact)” denotes the proposed conditional randomization test. 
“Asymptotic” denotes the Horvitz--Thompson test of \citet{Bojinov2023}. 
“FRT (misspec.)” denotes the Fisher randomization test imposed under a sharp null. 
“PIRT ($\alpha$)” denotes the unconditional PIRT procedure evaluated at level $\alpha$.}
\end{figure}

Figure~\ref{fig:power_global} reports rejection frequencies across sample sizes and error distributions. 
The CRT maintains size control across all configurations. 
In contrast, the Fisher randomization test exhibits substantial over-rejection under heavy-tailed disturbances, reflecting misspecification of the sharp null when treatment effects are not fully imputable. 
This pattern is consistent with the theoretical results of \citet{Athey2018}. 
Both the unadjusted PIRT at level $\alpha$ and the Horvitz--Thompson asymptotic test display conservative behavior in several configurations.

In terms of power, the CRT performs competitively across all scenarios. 
Under Gaussian disturbances, the CRT, Fisher test, and asymptotic test exhibit similar power, while PIRT is somewhat less powerful. 
Under heavy-tailed disturbances, the CRT achieves the highest power among procedures that maintain valid size. 
Although the Fisher test may appear more powerful in some configurations, this reflects size distortion rather than genuine power gains.

Overall, the results indicate that the proposed CRT achieves reliable size control while maintaining strong power, and remains robust to heavy-tailed shocks and finite-sample dependence.

\subsection{Tests for $m$-carryover and anticipation}

This section evaluates the finite-sample size and power of the two diagnostic procedures proposed above. 
We study the $m$-carryover restriction using the conditional randomization test (CRT), and the non-anticipation restriction using the PIRT procedure.

\paragraph{Carryover diagnostic.}

To assess the $m$-carryover test, we fix the true carryover horizon at $m=2$ and evaluate the null that treatment effects vanish beyond this horizon. 
The data-generating process introduces carryover effects strictly beyond the tested horizon by setting
\[
\delta^{(p)}=0 \quad \text{for } p\notin\{4,5,6\},
\qquad
\delta^{(4)}=\delta^{(5)}=\delta^{(6)}=\delta,
\]
with $\delta\in\{0,1,2,3\}$.
Thus $\delta=0$ corresponds to the null of no carryover beyond $m=2$, while $\delta>0$ generates violations of the null. Inference is conducted using the proposed CRT. 
Empirical rejection frequencies therefore measure size when $\delta=0$ and power when $\delta>0$.\footnote{For both carryover and anticipation diagnostics we use the same outcome model. We do not compare directly with the method of \citet{Bojinov2023} (Section~4.4), since their approach requires a different experimental design tailored to each hypothesis.}

\paragraph{Anticipation diagnostic.}

To evaluate the non-anticipation test, we introduce dependence on future treatment assignments by setting
\[
\delta^{(p)}=0 \quad \text{for } p\notin\{-2,-1,0\},
\qquad
\delta^{(-2)}=\delta^{(-1)}=\delta^{(0)}=\delta,
\]
again with $\delta\in\{0,1,2,3\}$.
The null of non-anticipation holds when $\delta=0$ and is violated when $\delta>0$.
Inference is conducted using the proposed PIRT procedure with prefix-preserving randomization and holdout length $L=2m+1=5$.

\begin{figure}[!htbp]
\centering
\caption{Rejection Frequencies Under the Carryover Null and Alternatives}
\label{fig:power_carryover}
\includegraphics[width=\textwidth]{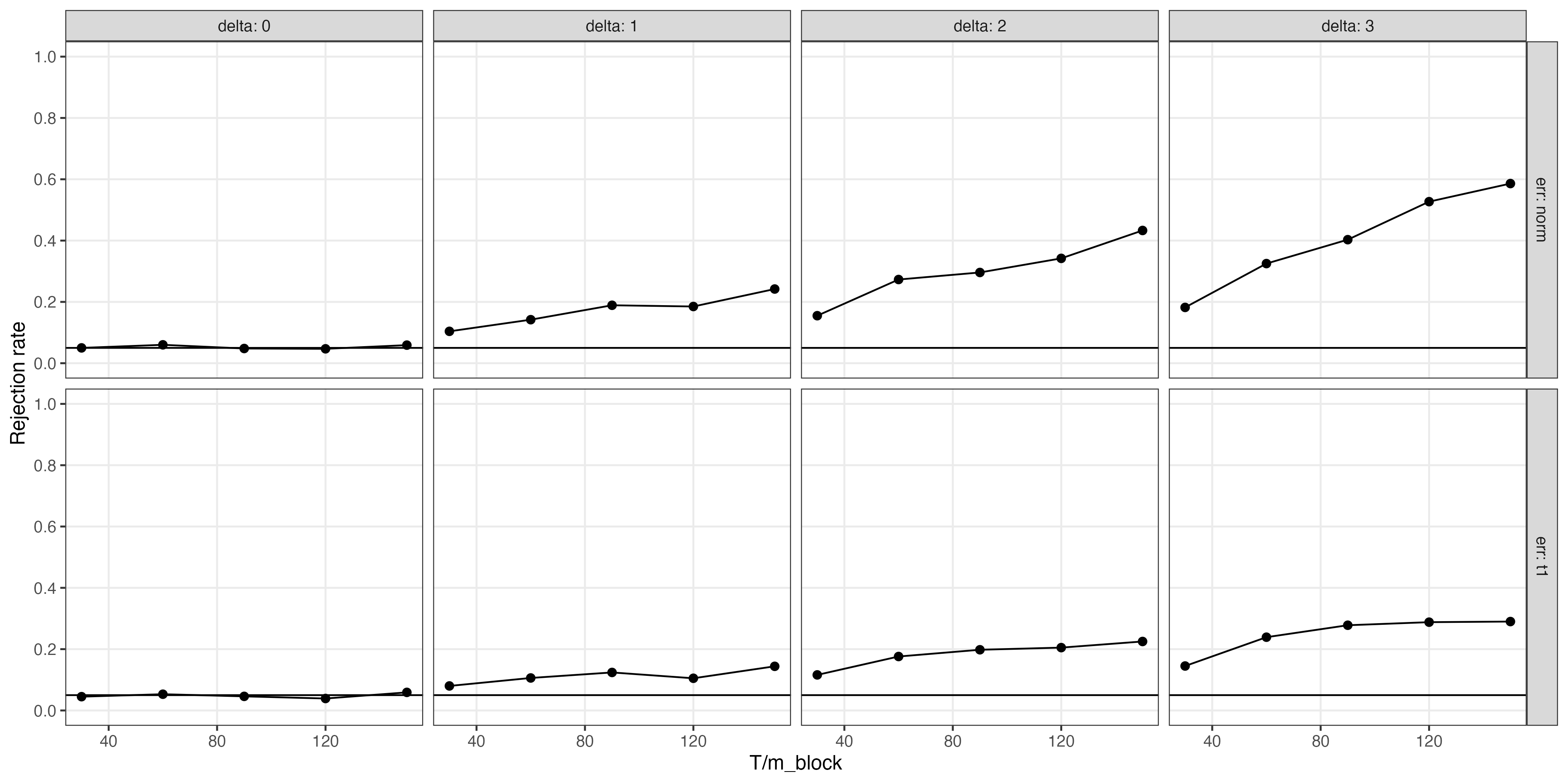}
\notes{\textbf{Notes:} Each panel reports Monte Carlo rejection frequencies based on 1,000 replications at nominal level $\alpha=0.05$. 
The solid horizontal line marks the nominal size.}
\end{figure}

Figure~\ref{fig:power_carryover} reports rejection frequencies across sample sizes and error distributions for the carryover diagnostic. 
The CRT maintains accurate size control across all configurations. 
Power increases with the magnitude of the violation and with the sample size, reaching approximately 60\% in the largest designs under Gaussian disturbances. 

\begin{figure}[!htbp]
\centering
\caption{Rejection Frequencies Under the Anticipation Null and Alternatives}
\label{fig:power_ant}
\includegraphics[width=\textwidth]{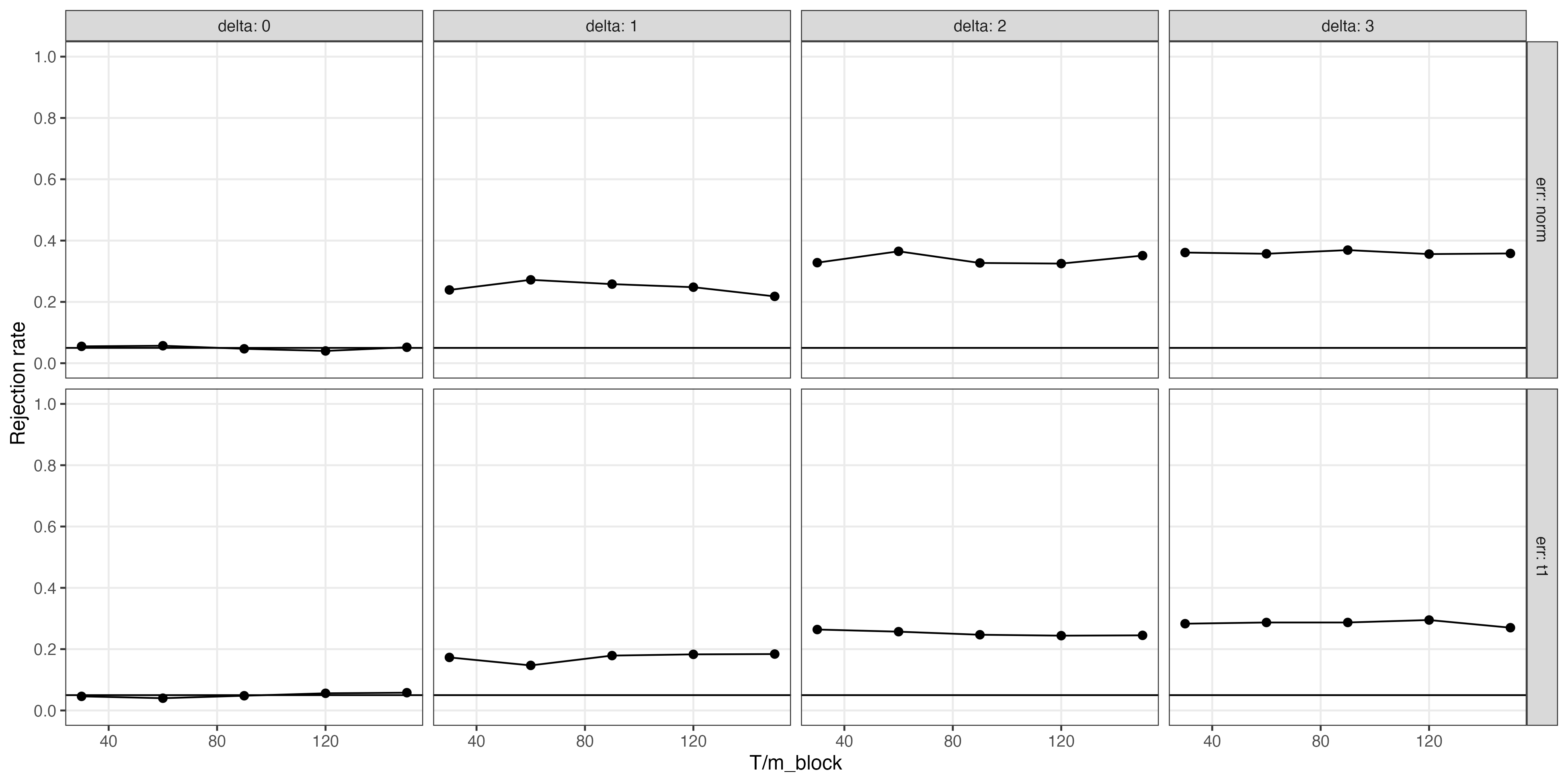}
\notes{\textbf{Notes:} Each panel reports Monte Carlo rejection frequencies based on 1,000 replications at nominal level $\alpha=0.05$. 
The solid horizontal line marks the nominal size.}
\end{figure}

Figure~\ref{fig:power_ant} reports rejection frequencies for the non-anticipation diagnostic using the PIRT procedure. Size control remains close to nominal across designs. Power increases with the magnitude of anticipation effects but does not vary monotonically with the total horizon, because the effective sample size is driven by the number of imputable times—i.e., stochastic overlaps of prefixes—rather than by $T$ per se. Unlike carryover testing, where focal sets are deterministic given the design, anticipation testing inherits additional finite-sample variability from this random overlap structure, so the count of imputable times can remain limited even as $T$ grows, leading to irregular power patterns across realizations.

Overall, the results indicate that the proposed procedures provide reliable finite-sample size control and nontrivial power in realistic time-series environments. A more detailed theoretical analysis of power for PIRT-based diagnostics in dependent time-series settings is left for future work.

\section{Conclusion}\label{sec:conclusion}
We study inference for switchback experiments run over realistic operational horizons, where serial dependence, heavy-tailed shocks, and temporal interference can undermine classical approximations. Our contribution is a randomization-test framework that yields finite-sample valid, distribution-free $p$-values using only the known assignment mechanism and two primitive conditions on potential outcomes: non-anticipation and a finite carryover horizon. We develop randomization tests for total effects alongside diagnostics for carryover length and non-anticipation. For average-effect targets, we develop studentized CRTs that accommodate within-session seasonality and establish asymptotic validity under mild conditions.

We complement the theory with power approximations under distributed-lag effects with AR(1) noise, which highlight block length, burn-in, and section pooling as first-order design levers for sensitivity. Monte Carlo evidence is consistent with these insights, indicating reliable finite-sample performance of the proposed tests across realistic settings. A fuller treatment of power optimization and simulation-based design tuning remains an important direction for future work.

Moreover, our testing framework naturally extends beyond the canonical switchback setting and applies directly to a broad class of experimental designs with a time dimension—including time-series experiments, cross-over designs, and regular balanced switchback designs—thereby providing a unified approach to exact inference in temporally structured experiments. Further advances in design optimization, richer temporal interference models, and scalable implementations within platform experimentation systems offer promising avenues for continued research.

\newpage
\bibliography{biblio}

\newpage
\appendix
\section{Additional Results}
\subsection{A regression-based studentized statistic for the weak null}\label{sec:weak-null-reg}

In practice, switchback analyses are often reported through a regression coefficient and its robust standard error.
Because assignment probabilities are known by design, we can define a regression-based studentized statistic that is
numerically identical to a weighted least squares coefficient/SE pair and can be used inside the same CRT resampling scheme.
This provides a convenient implementation of a studentized CRT for the focal-average null $H_0^{\mathcal U}$.

Define the inverse-probability weight
\[
\omega_j^{\mathrm{obs}}
:=
\frac{Z_j^{\mathrm{obs}}}{p_j}
+
\frac{1-Z_j^{\mathrm{obs}}}{1-p_j},
\qquad j=1,\dots,J.
\]
Consider the weighted regression of $\bar Y_j^{\mathrm{obs}}$ on an intercept and $Z_j^{\mathrm{obs}}$:
\begin{equation}\label{eq:ipw-wls-def}
(\hat\alpha_{\mathrm{reg}},\hat\tau_{\mathrm{reg}})
:=
\arg\min_{\alpha,\tau\in\mathbb R}
\sum_{j=1}^J
\omega_j^{\mathrm{obs}}\bigl(\bar Y_j^{\mathrm{obs}}-\alpha-\tau Z_j^{\mathrm{obs}}\bigr)^2.
\end{equation}
Define the IPW-normalized weighted means
\begin{equation}\label{eq:ipw-hajek-means}
\hat\mu_1
:=
\frac{\sum_{j=1}^J \frac{Z_j^{\mathrm{obs}}\,\bar Y_j^{\mathrm{obs}}}{p_j}}
{\sum_{j=1}^J \frac{Z_j^{\mathrm{obs}}}{p_j}},
\qquad
\hat\mu_0
:=
\frac{\sum_{j=1}^J \frac{(1-Z_j^{\mathrm{obs}})\,\bar Y_j^{\mathrm{obs}}}{1-p_j}}
{\sum_{j=1}^J \frac{(1-Z_j^{\mathrm{obs}})}{1-p_j}}.
\end{equation}
Then $\hat\tau_{\mathrm{reg}}=\hat\mu_1-\hat\mu_0$.

Let $\hat r_{j,1}:=\bar Y_j^{\mathrm{obs}}-\hat\mu_1$ for $Z_j^{\mathrm{obs}}=1$ and
$\hat r_{j,0}:=\bar Y_j^{\mathrm{obs}}-\hat\mu_0$ for $Z_j^{\mathrm{obs}}=0$.
The heteroskedasticity-robust (HC0) variance simplifies to
\begin{equation}\label{eq:ipw-hc0-var}
\widehat V_{\mathrm{reg}}
:=
\frac{ \sum_{j:Z_j^{\mathrm{obs}}=1}\frac{\hat r_{j,1}^2}{p_j^2} }
     { \left(\sum_{j=1}^J \frac{Z_j^{\mathrm{obs}}}{p_j}\right)^2 }
+
\frac{ \sum_{j:Z_j^{\mathrm{obs}}=0}\frac{\hat r_{j,0}^2}{(1-p_j)^2} }
     { \left(\sum_{j=1}^J \frac{1-Z_j^{\mathrm{obs}}}{1-p_j}\right)^2 }.
\end{equation}
Define the regression-based studentized statistic
\begin{equation}\label{eq:ipw-tstat}
T_{\mathrm{reg}}
:=
\frac{\hat\tau_{\mathrm{reg}}}{\sqrt{\widehat V_{\mathrm{reg}}}}.
\end{equation}

Using $T_{\mathrm{reg}}$ as the test statistic inside the same Monte Carlo resampling scheme yields a practical
alternative to $T_{\mathrm{stud}}$. In particular, the regression-based variance uses within-arm residual variation and
can be less sensitive to large baseline outcome levels.

\begin{corollary}[Asymptotic validity of the regression-based weak-null test]\label{cor:weak-null-valid-reg}
Under the fixed-length session setup, Assumptions~\ref{ass:no-anticipation} and \ref{ass:m-carryover}, and
Assumption~\ref{ass:weak-null-asymp} (with a mild additional nondegeneracy condition), the Monte Carlo $p$-value based
on $T_{\mathrm{reg}}$ satisfies, under $H_0^{\mathcal U}$ in \eqref{eq:weak-null-focal},
\[
\limsup_{T\to\infty}\ \mathbb{P}\!\left(\hat p_{\mathrm{reg}}\le \alpha\right)\ \le\ \alpha,
\qquad \text{for all }\alpha\in(0,1).
\]
\end{corollary}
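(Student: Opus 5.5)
The plan follows the standard studentized-permutation recipe used for Theorem~\ref{thm:weak-null-valid-session}. We show two things. First, under $H_0^{\mathcal U}$ the observed $T_{\mathrm{reg}}$ converges in distribution to $N(0,\sigma^2_{\mathrm{reg}})$ with $\sigma^2_{\mathrm{reg}}\le 1$. Second, the randomization distribution of $T_{\mathrm{reg}}^\ast$ converges in probability to $N(0,1)$. The one-sided Monte Carlo $p$-value then satisfies $\limsup\mathbb P(\hat p_{\mathrm{reg}}\le\alpha)\le\alpha$. The argument runs as follows: the randomization critical value tends to $z_{1-\alpha}$, and $\mathbb P(N(0,\sigma^2_{\mathrm{reg}})>z_{1-\alpha})\le\alpha$ for $\alpha<1/2$; the Monte Carlo error is handled as in Theorem~\ref{thm:weak-null-valid-session}. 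Throughout, the session-level potential outcomes $\bar Y_j(z)$ are fixed, and $\bar Y_j^{\mathrm{obs}}=\bar Y_j(Z_j^{\mathrm{obs}})$ by Assumptions~\ref{ass:no-anticipation}--\ref{ass:m-carryover} with $L>m$. The $Z_j$ are independent Bernoulli$(p_j)$, since sessions are design blocks.

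\textbf{Observed statistic.} Write $\hat\tau_{\mathrm{reg}}=\hat\mu_1-\hat\mu_0$ as a ratio of HT sums. The denominators satisfy $J^{-1}\sum_j Z_j/p_j\to1$ and $J^{-1}\sum_j(1-Z_j)/(1-p_j)\to1$ in probability, by Chebyshev with $p_j\in[\underline p,1-\underline p]$. The numerators centered at $\mu_z:=J^{-1}\sum_j\bar Y_j(z)$ are sums of independent terms. The Lindeberg condition follows from the fourth-moment bound (i). The variance stabilizes by (ii) together with the uniform integrability supplied by (i). A delta-method linearization then gives
\[
\sqrt J(\hat\tau_{\mathrm{reg}}-\tau_{\mathcal U})=\frac{1}{\sqrt J}\sum_j\Bigl(\tfrac{Z_j}{p_j}-\tfrac{1-Z_j}{1-p_j}\Bigr)\bigl(\bar Y_j(Z_j)-\mu_{Z_j}\bigr)+o_p(1),
\]
which is a Hájek-type CLT with limiting variance $V_{\mathrm{H}}$ given by the $\nu$-expectation of the centered analogue of $\sigma^2_{\mathrm S}$ minus the nonnegative heterogeneity term. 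By the same law of large numbers, $J\widehat V_{\mathrm{reg}}\to_p V_{\mathrm{up,H}}:=\mathbb E_\nu[\{(Y_1-\mu_1)^2/P\}+\{(Y_0-\mu_0)^2/(1-P)\}]$. This limit is the usual Neyman-type upper bound and dominates $V_{\mathrm H}$ (Cauchy--Schwarz on the cross term). Hence under $\tau_{\mathcal U}=0$ we get $T_{\mathrm{reg}}\Rightarrow N(0,V_{\mathrm H}/V_{\mathrm{up,H}})$. The ``mild additional nondegeneracy'' is exactly $V_{\mathrm{up,H}}>0$, which the proof should state explicitly.

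\textbf{Randomization distribution.} Here the fixed values are $y_j:=\bar Y_j^{\mathrm{obs}}$ and $Z_j^\ast$ are independent Bernoulli$(p_j)$. This is the same computation with both potential outcomes equal to $y_j$, so the sharp null holds for the imputed data. The linearized numerator has variance $\frac1J\sum_j(y_j-\bar y_p)^2\{1/p_j+1/(1-p_j)-\ldots\}$. Because the two arms share the same $y_j$, this variance equals exactly the probability limit of $J\widehat V^\ast_{\mathrm{reg}}$ to first order, so $T_{\mathrm{reg}}^\ast\Rightarrow N(0,1)$ conditionally. The technical point is that the $y_j$ are themselves random (they depend on $Z^{\mathrm{obs}}$). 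I would show that their empirical measure jointly with $p_j$ converges in probability, to the mixture $\tilde\nu$ of $(P,Y_1)$ w.p. $P$ and $(P,Y_0)$ otherwise. I would also show that $J^{-1}\sum y_j^4=O_p(1)$, and then apply a conditional Lindeberg CLT along subsequences (the almost-sure subsequence argument). One must check that the resulting limit variance is strictly positive under $\tilde\nu$; this again uses the extra nondegeneracy condition.

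\textbf{Main obstacle.} I expect the hardest step to be the randomization-side stabilization: establishing weak convergence in probability of the data-dependent empirical measure of $(p_j,y_j)$, and verifying that the Hájek centering $\bar y_p$ and the weighted residual variance converge uniformly enough for the conditional CLT. This step mirrors the corresponding step in the proof of Theorem~\ref{thm:weak-null-valid-session}, and I would reuse it with the residuals $y_j-\hat\mu^\ast_z$ in place of $y_j$. The difference $\hat\mu^\ast_1-\hat\mu^\ast_0=o_p(1)$ contributes only negligible terms to $\widehat V^\ast_{\mathrm{reg}}$.
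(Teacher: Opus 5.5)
Your proposal is correct and follows essentially the paper's proof. The steps match: a H\'ajek linearization with a Lyapunov CLT for $\hat\tau_{\mathrm{reg}}$; consistency of $J\widehat V_{\mathrm{reg}}$ for the conservative bound $\mathbb E_\nu[(Y_1-\mu_1)^2/P+(Y_0-\mu_0)^2/(1-P)]$, whose positivity is exactly the paper's extra nondegeneracy condition; and a conditional CLT for $T^\ast_{\mathrm{reg}}$ linearized around the unweighted mean $\bar y=J^{-1}\sum_j y_j$, with $J\widehat V^\ast_{\mathrm{reg}}$ consistent for $s_J^2/J=J^{-1}\sum_j (y_j-\bar y)^2/\{p_j(1-p_j)\}$. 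Your identification of the mixture limit $\tilde\nu$ is a cleaner justification of the paper's hand-waved claim that $s_J^2/J$ stays bounded away from zero.

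Your restriction to $\alpha<1/2$ is also right, and in fact necessary, which the paper misses. Its final step $1-\Phi(z_{1-\alpha}/\rho)\le 1-\Phi(z_{1-\alpha})$, with $\rho^2=\sigma^2_{\mathrm{reg}}/\sigma^2_{\mathrm{up,reg}}$, requires $z_{1-\alpha}\ge 0$. So whenever effects are asymptotically heterogeneous (so that $\rho<1$), the ``for all $\alpha\in(0,1)$'' claim fails for $\alpha>1/2$.
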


\section{Proof of Theorem~\ref{thm:valid-crt}}\label{app:proof-valid-crt}

We prove Theorem~\ref{thm:valid-crt} by verifying the imputability condition for our test statistic under the
conditioning event $\mathcal{C}(\mathbf{W})=(\mathcal{U}(\mathbf{W}), \mathcal{W}\big(\mathcal{S}(\mathbf{W})\big))$, and the invariance property of the conditioning event under focal resampling. Then, we apply
the general validity result for conditional randomization tests in \citet{Basse2019}.

Let $\mathbf{W}^{\mathrm{obs}}$ denote the realized assignment and let
\[
\mathcal{C}^{\mathrm{obs}} := \mathcal{C}(\mathbf{W}^{\mathrm{obs}})=\big(\mathcal{U}(\mathbf{W}^{\mathrm{obs}}), \mathcal{W}\big(\mathcal{S}(\mathbf{W}^{\text{obs}})\big)\big).
\]
Let $\mathcal{W}(\mathcal{C}^{\mathrm{obs}})$ denote the conditional randomization space
\[
\mathcal{W}(\mathcal{C}^{\mathrm{obs}}):=\{\mathbf{w}\in\{0,1\}^T:\ \mathcal{C}(\mathbf{w})=\mathcal{C}^{\mathrm{obs}}\}.
\]
Algorithm~\ref{alg:test-te} samples $\mathbf{W}^{(b)}$ from the conditional assignment distribution
$\mathbf{W}\,|\,\{\mathcal{C}(\mathbf{W})=\mathcal{C}^{\mathrm{obs}}\}$ and computes the Monte Carlo $p$-value
based on the realized statistic
\[
T^{\mathrm{obs}} := T\big(\{Y_t^{\mathrm{obs}}\}_{t\in\mathcal{U}(\mathbf{W}^{\mathrm{obs}})},\,\mathbf{W}^{\mathrm{obs}}\big).
\]

Throughout, under the sharp null $H_0$, the full schedule of potential outcomes is fixed (non-stochastic)
and the only randomness is through $\mathbf{W}$.

We use the notion of \emph{imputability} from \citet{Basse2019}.
In our setting, it suffices to show that, under $H_0$, the value of the statistic
$T(\{Y_t^{\mathrm{obs}}\}_{t\in\mathcal{U}(\mathbf{W}^{\mathrm{obs}})},\mathbf{w})$ can be computed for any
$\mathbf{w}\in\mathcal{W}(\mathcal{C}^{\mathrm{obs}})$ using the observed outcomes, i.e., without requiring
unobserved potential outcomes.

Fix $\mathbf{w}\in\mathcal{W}(\mathcal{C}^{\mathrm{obs}})$. By definition of $\mathcal{W}(\mathcal{C}^{\mathrm{obs}})$,
\[
\mathcal{U}(\mathbf{w})=\mathcal{U}(\mathbf{W}^{\mathrm{obs}})=:\mathcal{U}^{\mathrm{obs}}.
\]
Therefore, the statistic under $\mathbf{w}$ only depends on outcomes at time points $t\in\mathcal{U}^{\mathrm{obs}}$. 

Now take any $t\in\mathcal{U}^{\mathrm{obs}}$. By construction of $\mathcal{U}(\mathbf{W})$ in \eqref{eqn:focal-units},
there exists a section $[s,e]\in \mathcal{S}(\mathbf{W}^{\mathrm{obs}})$ such that $s+m\le t\le e$.
Hence the last $m+1$ assignments are constant within that section:
\[
W^{\mathrm{obs}}_{t}=W^{\mathrm{obs}}_{t-1}=\cdots=W^{\mathrm{obs}}_{t-m}.
\]
Because $\mathbf{w}\in\mathcal{W}(\mathcal{C}^{\mathrm{obs}})$ implies $\mathcal{S}(\mathbf{w})=\mathcal{S}(\mathbf{W}^{\mathrm{obs}})$,
the corresponding section $[s,e]$ is also a section under $\mathbf{w}$, and therefore
\[
w_{t}=w_{t-1}=\cdots=w_{t-m}.
\]

Under Assumptions~\ref{ass:no-anticipation} and \ref{ass:m-carryover}, for any assignment path $\tilde{\mathbf{w}}$,
the potential outcome at time $t$ depends only on the $(m+1)$-length history $(\tilde w_{t-m},\dots,\tilde w_t)$.
Under $H^{tot}_0$, we have equality of potential outcomes for the constant histories,
\[
Y_t(\mathbf{1}_{m+1}) = Y_t(\mathbf{0}_{m+1}) \quad \text{for all } t\in[T].
\]
Since both $\mathbf{W}^{\mathrm{obs}}$ and $\mathbf{w}$ generate a constant $(m+1)$-length history at time $t$,
it follows that
\[
Y_t(\mathbf{w}) = Y_t\!\big(\mathbf{W}^{\mathrm{obs}}\big).
\]
In other words, for every focal time $t\in\mathcal{U}^{\mathrm{obs}}$, the potential outcome under $\mathbf{w}$
equals the observed outcome $Y_t^{\mathrm{obs}}$:
\[
Y_t(\mathbf{w}) = Y_t^{\mathrm{obs}} \qquad \text{for all } t\in\mathcal{U}^{\mathrm{obs}},\ \mathbf{w}\in\mathcal{W}(\mathcal{C}^{\mathrm{obs}}).
\]
Consequently, for any $\mathbf{w}\in\mathcal{W}(\mathcal{C}^{\mathrm{obs}})$, the statistic value
\[
T\big(\{Y_t(\mathbf{w})\}_{t\in\mathcal{U}^{\mathrm{obs}}},\,\mathbf{w}\big)
\]
is equal to
\[
T\big(\{Y_t^{\mathrm{obs}}\}_{t\in\mathcal{U}^{\mathrm{obs}}},\,\mathbf{w}\big),
\]
which is computable from the observed outcomes on $\mathcal{U}^{\mathrm{obs}}$ and the hypothesized assignment vector
$\mathbf{w}$. This verifies imputability of the test statistic under the conditioning event $\mathcal{C}(\mathbf{W})=\mathcal{C}^{\mathrm{obs}}$
in the sense of \citet{Basse2019}.
The invariance property is also verified by noting that  $\mathcal{U}(\mathbf{w})=\mathcal{U}(\mathbf{W}^{\mathrm{obs}})$ and $\mathcal{S}(\mathbf{w})=\mathcal{S}(\mathbf{W}^{\mathrm{obs}})$ for all $\mathbf{w}\in\mathcal{W}(\mathcal{C}^{\mathrm{obs}})$.

Given imputability under $H_0$, Theorem~1 (and the associated Monte Carlo implementation result) of \citet{Basse2019}
implies that the conditional randomization test that samples assignments from
$\mathbf{W}\,|\,\{\mathcal{C}(\mathbf{W})=\mathcal{C}^{\mathrm{obs}}\}$ yields a valid $p$-value:
for all $\alpha\in(0,1)$,
\[
\mathbb{P}\!\left(\hat p \le \alpha \,\middle|\, \mathcal{C}(\mathbf{W})=\mathcal{C}^{\mathrm{obs}}\right)\le \alpha,
\]
which is exactly the claim of Theorem~\ref{thm:valid-crt}. \qed

\subsection{Conditional assignment law on realized constant merged sections}
\label{app:cond-law-merged}

This subsection verifies that Algorithm~\ref{alg:test-te} samples from the correct
conditional assignment distribution under the regular switchback design, conditional on
the realized section structure used for the test.

\begin{proof}[Proof of Lemma~\ref{lem:cond-merged-section}]
Fix a merged section that covers block indices $a,\ldots,b$.
By assumption, the block assignments are independent with
$W^{(k)}\sim\mathrm{Bernoulli}(q_k)$.

Define the event that the section is constant at the block level:
\[
E_{a:b}:=\{W^{(a)}=\cdots=W^{(b)}\}.
\]
On $E_{a:b}$ there are exactly two possibilities:
either all blocks equal $1$ or all blocks equal $0$. Let
\[
A_{a:b}:=\{W^{(a)}=\cdots=W^{(b)}=1\},
\qquad
B_{a:b}:=\{W^{(a)}=\cdots=W^{(b)}=0\}.
\]
Then $E_{a:b}=A_{a:b}\,\dot\cup\,B_{a:b}$ is a disjoint union. By independence,
\[
\mathbb{P}(A_{a:b})=\prod_{k=a}^{b} q_k,
\qquad
\mathbb{P}(B_{a:b})=\prod_{k=a}^{b} (1-q_k),
\]
and therefore
\[
\mathbb{P}(E_{a:b})
=
\prod_{k=a}^{b} q_k
+
\prod_{k=a}^{b} (1-q_k).
\]
Let $Z$ denote the common value on the section under $E_{a:b}$, i.e.\ $Z:=W^{(a)}$ on $E_{a:b}$.
Then
\[
\mathbb{P}(Z=1\mid E_{a:b})
=
\frac{\mathbb{P}(A_{a:b})}{\mathbb{P}(E_{a:b})}
=
\frac{\prod_{k=a}^{b} q_k}{\prod_{k=a}^{b} q_k+\prod_{k=a}^{b}(1-q_k)}.
\]
Hence, conditional on $E_{a:b}$, the section’s common assignment is Bernoulli with probability
\[
p=\frac{\prod_{k=a}^{b} q_k}{\prod_{k=a}^{b} q_k+\prod_{k=a}^{b}(1-q_k)},
\]
and the entire block vector $(W^{(a)},\ldots,W^{(b)})$ equals $(Z,\ldots,Z)$.

We now prove the claimed independence across disjoint sections.
Let sections $j=1,\ldots,J$ be disjoint at the block level, where section $j$ covers blocks
$a_j,\ldots,b_j$, and define
\[
E_j:=\{W^{(a_j)}=\cdots=W^{(b_j)}\},
\qquad
Z_j:=W^{(a_j)} \ \text{on } E_j.
\]
Because the underlying blocks are independent and the collections of indices
$\{a_j,\ldots,b_j\}$ are disjoint across $j$, the random vectors
$\{W^{(a_j)},\ldots,W^{(b_j)}\}$ are mutually independent, and the events $\{E_j\}$ are also mutually independent.

For any $(z_1,\ldots,z_J)\in\{0,1\}^J$, define the event
\[
F(z_1,\ldots,z_J)
:=
\bigcap_{j=1}^J
\left\{
W^{(a_j)}=\cdots=W^{(b_j)}=z_j
\right\}.
\]
The events $\{F(z_1,\ldots,z_J)\}$ are disjoint and partition $\bigcap_{j=1}^J E_j$.
Using independence across disjoint sections,
\[
\mathbb{P}\!\left(F(z_1,\ldots,z_J)\right)
=
\prod_{j=1}^J
\left[
\left(\prod_{k=a_j}^{b_j} q_k\right)^{z_j}
\left(\prod_{k=a_j}^{b_j} (1-q_k)\right)^{1-z_j}
\right].
\]
Moreover,
\[
\mathbb{P}\!\left(\bigcap_{j=1}^J E_j\right)
=
\prod_{j=1}^J
\left(
\prod_{k=a_j}^{b_j} q_k
+
\prod_{k=a_j}^{b_j} (1-q_k)
\right).
\]
Therefore,
\begin{align*}
\mathbb{P}(Z_1=z_1,\ldots,Z_J=z_J \mid \cap_{j=1}^J E_j)
&=
\frac{\mathbb{P}(F(z_1,\ldots,z_J))}
{\mathbb{P}(\cap_{j=1}^J E_j)} \\
&=
\prod_{j=1}^J
\frac{
\left(\prod_{k=a_j}^{b_j} q_k\right)^{z_j}
\left(\prod_{k=a_j}^{b_j} (1-q_k)\right)^{1-z_j}
}{
\left(\prod_{k=a_j}^{b_j} q_k\right)
+
\left(\prod_{k=a_j}^{b_j} (1-q_k)\right)
}.
\end{align*}
This factorization shows that, conditional on $\cap_{j=1}^J E_j$, the common values
$Z_1,\ldots,Z_J$ are independent, with $Z_j\sim\mathrm{Bernoulli}(p_j)$ where
\[
p_j=
\frac{\prod_{k=a_j}^{b_j} q_k}{\prod_{k=a_j}^{b_j} q_k+\prod_{k=a_j}^{b_j}(1-q_k)}.
\]
This completes the proof.
\end{proof}

\paragraph{Connection to Algorithm~\ref{alg:test-te}.}
In Algorithm~\ref{alg:test-te}, each realized section $[s_j,e_j]$ corresponds to merging blocks
$a_j,\ldots,b_j$ and imposing the constraint $W^{(a_j)}=\cdots=W^{(b_j)}$.
The lemma above shows that, conditional on this constraint, the section’s common value is
$\mathrm{Bernoulli}(p_j)$ with $p_j$ as computed in the algorithm, and that these section-level
values are independent across disjoint sections. Since periods outside the union of realized
sections are held fixed to $W^{\mathrm{obs}}$ in the conditional randomization space
\eqref{eqn:cond-space}, Algorithm~\ref{alg:test-te} samples exactly from the conditional
assignment distribution used by the conditional randomization test.

\section{Proof for Section \ref{sec:carryover-test}}\label{app:proof-carryover}
\subsection{Proof of Proposition \ref{prop:nested}}
\begin{proof}
Fix any $m \ge \bar m$.
Take any time $t \ge m+1$, any assignment vector $\mathbf{w}_{t-m:t}$ for the most recent $m+1$ periods, and any two
histories $\mathbf{w}'_{1:t-m-1}$ and $\mathbf{w}''_{1:t-m-1}$ that share this same $\mathbf{w}_{t-m:t}$.
Since $m \ge \bar m$, agreement on the last $m+1$ assignments implies agreement on the last $\bar m + 1$ assignments.
Because $H_0^{\bar m}$ is true, $Y_t(\mathbf{w})$ depends only on the last $\bar m + 1$ treatments. Therefore,
\[
Y_t(\mathbf{w}'_{1:t-m-1},\,\mathbf{w}_{t-m:t})
=
Y_t(\mathbf{w}''_{1:t-m-1},\,\mathbf{w}_{t-m:t}),
\]
which is the defining condition for $H_0^{m}$.
\end{proof}

\subsection{Proof of Theorem \ref{theo:validity_multi}}
\begin{proof}
Let $\bar m$ be the smallest index for which $H_0^{\bar m}$ is true.
By Proposition~\ref{prop:nested}, all $H_0^m$ with $m<\bar m$ are false and all $H_0^m$ with $m\ge\bar m$ are true.

A family-wise error occurs if and only if the algorithm rejects at least one true null.
Under nestedness, if any true null is rejected, then the first true null $H_0^{\bar m}$ is rejected.
Moreover, Algorithm~\ref{algo:multi} tests $H_0^{\bar m}$ only after rejecting all $H_0^0,\dots,H_0^{\bar m-1}$.
Therefore,
\[
\mathrm{FWER}
=
\Pr(\text{reject some true } H_0^m)
=
\Pr(\text{reject }H_0^{\bar m})
\le \alpha,
\]
where the last inequality follows because $H_0^{\bar m}$ is tested at level $\alpha$.
\end{proof}

\section{Weak-null theory for fixed-length sessions: proofs}\label{app:weak-null}

Throughout this appendix, we work under the fixed-length session setup of Section~\ref{sec:weak-null-test}.
Fix $L>m$ and $T=JL$. Sessions are $[s_j,e_j]=\{(j-1)L+1,\dots,jL\}$, and the focal window within session $j$ is
$\mathcal U_j=\{t: s_j+m\le t\le e_j\}$ with $n=|\mathcal U_j|=L-m$.

Let $Z_j\in\{0,1\}$ be the (constant) treatment assignment on session $j$, with design probability
$p_j:=\mathbb{P}(Z_j=1)\in(0,1)$.
For each $z\in\{0,1\}$ define the session-level mean potential outcome
\[
\bar Y_j(z) := \frac{1}{n}\sum_{t\in\mathcal U_j} Y_t(\mathbf z_T),
\qquad
\bar Y_j^{\mathrm{obs}} := \bar Y_j(Z_j).
\]
Define the HT estimator and upper-bound variance estimator as in \eqref{eq:ht-general}--\eqref{eq:Vup}:
\[
\hat\tau_{\mathrm{HT}}
=
\frac{1}{J}\sum_{j=1}^J
\left(
\frac{Z_j\,\bar Y_j^{\mathrm{obs}}}{p_j}
-
\frac{(1-Z_j)\,\bar Y_j^{\mathrm{obs}}}{1-p_j}
\right),
\qquad
\widehat V_{\mathrm{up}}
=
\frac{1}{J^2}\sum_{j=1}^J (\bar Y_j^{\mathrm{obs}})^2
\left(
\frac{Z_j}{p_j^2}
+
\frac{1-Z_j}{(1-p_j)^2}
\right),
\]
and $T_{\mathrm{stud}}=\hat\tau_{\mathrm{HT}}/\sqrt{\widehat V_{\mathrm{up}}}$.

\subsection{Session-level randomization}

\begin{lemma}[Independent Bernoulli assignment on sessions]\label{lem:cond-bernoulli-p}
Under Definition~\ref{def:regular-switchback} with switch times aligned to the session boundaries
($[s_j,e_j]=B_{j-1}$ and $|B_{j-1}|=L$), the session indicators $Z_1,\dots,Z_J$ are independent with
$Z_j\sim\mathrm{Bernoulli}(p_j)$, where $p_j$ equals the corresponding block probability (i.e.\ $p_j=q_{j-1}$).
\end{lemma}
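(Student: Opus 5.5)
The plan is to read the lemma directly off Definition~\ref{def:regular-switchback}, once we check that the session partition and the design block partition coincide. Since $T=JL$ and the switch times are aligned to the session boundaries, we have $t_k=kL+1$ for $k=0,\dots,K$ with $K=J-1$. The induced design blocks are therefore
\[
B_{k}=\{t:\ kL+1\le t\le (k+1)L\}=[s_{k+1},e_{k+1}],
\]
so that $[s_j,e_j]=B_{j-1}$ and $|B_{j-1}|=L$ for each $j=1,\dots,J$. This identification is the only step with any content, and it is simply bookkeeping of indices: the shift $j\leftrightarrow k=j-1$ comes from labelling blocks from $0$ and sessions from $1$.

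Given this, Definition~\ref{def:regular-switchback} sets $W_t=W^{(j-1)}$ for all $t\in[s_j,e_j]$. Hence the session label is well defined and equals $Z_j=W^{(j-1)}$; for instance, $Z_j=W_{e_j}$, and every other time in the session gives the same value. The block labels $\{W^{(k)}\}_{k=0}^{J-1}$ are independent, with $W^{(k)}\sim\mathrm{Bernoulli}(q_k)$. The vector $(Z_1,\dots,Z_J)$ is just a relabelling of $(W^{(0)},\dots,W^{(J-1)})$ by the bijection $j\mapsto j-1$. It therefore inherits mutual independence, and we get $Z_j\sim\mathrm{Bernoulli}(q_{j-1})$, that is, $p_j=\mathbb{P}(Z_j=1)=q_{j-1}$.

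For completeness one may note that this is the degenerate case $a=b$ of Lemma~\ref{lem:cond-merged-section}. With $a=b$ the constancy event $E_{a:a}$ is the whole sample space, and the formula for $p$ collapses to $q_a/(q_a+1-q_a)=q_a$. The independence across disjoint sections in that lemma also gives the joint claim.

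I expect no genuine obstacle here. The only points to check are the index alignment between blocks and sessions, and that no conditioning is involved, so the unconditional design law applies. That absence of conditioning is what makes the resampling scheme in Section~\ref{subsec:weak-null-studentized} draw $Z_j^\ast\sim\mathrm{Bernoulli}(p_j)$ independently.
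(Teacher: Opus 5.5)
Your proposal is correct and follows the paper's own proof: once switch times are aligned with session boundaries, each session is exactly one design block, so $Z_j=W^{(j-1)}$ and the claim is read directly off Definition~\ref{def:regular-switchback}. Your index bookkeeping ($t_k=kL+1$, $K=J-1$) and your remark that this is the $a=b$ case of Lemma~\ref{lem:cond-merged-section} are accurate additions the paper leaves implicit.
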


\begin{proof}
With switch times aligned to session boundaries, each session is exactly one design block.
Under Definition~\ref{def:regular-switchback}, block-level assignments are independent Bernoulli with the specified
block probabilities, and $Z_j$ is the block assignment for session $j$. \qed
\end{proof}

\subsection{Randomization CLT for the studentized statistic}

For the randomization distribution, condition on the observed focal means $\{\bar Y_j^{\mathrm{obs}}\}_{j=1}^J$ and
draw $Z_1^\ast,\dots,Z_J^\ast$ independently with $Z_j^\ast\sim\mathrm{Bernoulli}(p_j)$.
Define
\[
\hat\tau_{\mathrm{HT}}^\ast
:=
\frac{1}{J}\sum_{j=1}^J
\left(
\frac{Z_j^\ast\,\bar Y_j^{\mathrm{obs}}}{p_j}
-
\frac{(1-Z_j^\ast)\,\bar Y_j^{\mathrm{obs}}}{1-p_j}
\right),
\qquad
\widehat V_{\mathrm{up}}^\ast
:=
\frac{1}{J^2}\sum_{j=1}^J (\bar Y_j^{\mathrm{obs}})^2
\left(
\frac{Z_j^\ast}{p_j^2}
+
\frac{1-Z_j^\ast}{(1-p_j)^2}
\right),
\]
and $T_{\mathrm{stud}}^\ast:=\hat\tau_{\mathrm{HT}}^\ast/\sqrt{\widehat V_{\mathrm{up}}^\ast}$.

\begin{lemma}[Randomization CLT]\label{lem:rand-clt-general}
Under Assumption~\ref{ass:weak-null-asymp}, conditional on $\{\bar Y_j^{\mathrm{obs}}\}_{j=1}^J$,
\[
T_{\mathrm{stud}}^\ast \ \Rightarrow\ \mathcal{N}(0,1),
\]
in $\mathbb{P}$-probability (where $\mathbb{P}$ refers to the original assignment mechanism generating
$\{\bar Y_j^{\mathrm{obs}}\}$).
\end{lemma}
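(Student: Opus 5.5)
Conditional on the observed focal means, write $a_j:=\bar Y_j^{\mathrm{obs}}$ and treat them as fixed constants. The plan has two parts: a Lindeberg--Feller CLT for the numerator $\hat\tau^\ast_{\mathrm{HT}}$, and a weak law of large numbers showing the studentizer $\widehat V^\ast_{\mathrm{up}}$ matches the conditional variance of the numerator to first order.

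\textbf{Numerator.} Decompose
\[
\sqrt J\,\hat\tau^\ast_{\mathrm{HT}}=\frac{1}{\sqrt J}\sum_{j=1}^J \xi_j,\qquad \xi_j:=a_j\Bigl(\frac{Z_j^\ast}{p_j}-\frac{1-Z_j^\ast}{1-p_j}\Bigr).
\]
The $\xi_j$ are independent given $\{a_j\}$ and have mean zero, because $\mathbb E[Z_j^\ast/p_j-(1-Z_j^\ast)/(1-p_j)]=0$. This centering at zero, regardless of the truth of any null, is the key point. Their variances are
\[
\mathrm{Var}(\xi_j)=a_j^2\Bigl(\frac{1}{p_j}+\frac{1}{1-p_j}\Bigr)=\frac{a_j^2}{p_j(1-p_j)},
\]
so the conditional variance of $\sqrt J\hat\tau^\ast$ is $s_J^2:=J^{-1}\sum_j a_j^2/\{p_j(1-p_j)\}$.

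For Lindeberg it suffices to check a Lyapunov condition with fourth moments. Since $|\xi_j|\le |a_j|/\underline p$ and $|a_j|\le M_j$, we get $J^{-2}\sum_j\mathbb E\xi_j^4\le \underline p^{-4}J^{-1}\bigl(J^{-1}\sum_j M_j^4\bigr)\le C\underline p^{-4}/J\to0$. This holds deterministically, by Assumption~\ref{ass:weak-null-asymp}(i). It remains to show $s_J^2$ is bounded away from zero in $\mathbb P$-probability.

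\textbf{Lower bound on $s_J^2$.} This is the main obstacle, because $a_j=\bar Y_j(Z_j)$ depends on the original random assignment. I would compute $\mathbb E s_J^2=J^{-1}\sum_j\{\bar Y_j(1)^2/(1-p_j)+\bar Y_j(0)^2/p_j\}$. Its variance is $O(J^{-1}\cdot J^{-1}\sum_j M_j^4)=O(1/J)$ by independence of the $Z_j$, so Chebyshev gives $s_J^2-\mathbb E s_J^2\to0$ in probability. By stabilization (ii) together with uniform integrability from (i), $\mathbb E s_J^2\to \mathbb E_\nu[Y_1^2/(1-P)+Y_0^2/P]$.

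This limit dominates a positive multiple of $\sigma_{\mathrm S}^2$. Indeed, $\bigl(\sqrt{(1-P)/P}\,Y_1+\sqrt{P/(1-P)}\,Y_0\bigr)^2\le 2\{(1-P)Y_1^2/P+PY_0^2/(1-P)\}$, and on $[\underline p,1-\underline p]$ the ratios between these weights and $1/(1-P)$, $1/P$ are bounded. Hence nondegeneracy (iii) forces $Y_1$ or $Y_0$ to be nonzero with positive $\nu$-probability, and the limit is strictly positive. (If this domination argument needs adjusting, the same conclusion follows directly from (iii) implying $\nu(Y_1^2+Y_0^2>0)>0$.) Therefore $\frac{1}{\sqrt J}\sum\xi_j/s_J\Rightarrow N(0,1)$ in $\mathbb P$-probability.

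\textbf{Studentizer.} We have
\[
J\,\widehat V^\ast_{\mathrm{up}}=J^{-1}\sum_j a_j^2\Bigl(\frac{Z_j^\ast}{p_j^2}+\frac{1-Z_j^\ast}{(1-p_j)^2}\Bigr).
\]
Its conditional mean is $J^{-1}\sum_j a_j^2(1/p_j+1/(1-p_j))=s_J^2$. Its conditional variance is at most $\underline p^{-4}J^{-2}\sum_j a_j^4\le C\underline p^{-4}/J\to0$. So $J\widehat V^\ast_{\mathrm{up}}/s_J^2\to1$ in probability, using the lower bound on $s_J^2$ from the previous step.

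\textbf{Conclusion.} Slutsky's lemma gives $T^\ast_{\mathrm{stud}}=\sqrt J\hat\tau^\ast/\sqrt{J\widehat V^\ast_{\mathrm{up}}}\Rightarrow N(0,1)$ conditionally, in $\mathbb P$-probability. To make the "in probability" mode explicit, I would phrase it as follows: the bounded-Lipschitz distance between the conditional law of $T^\ast_{\mathrm{stud}}$ and $N(0,1)$ is controlled by the Lyapunov ratio (here deterministic) plus the ratio deviation. The only random ingredient is $s_J^2$, which is bounded below with probability tending to one, and the conclusion follows.
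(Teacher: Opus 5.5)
Your proposal is correct and follows essentially the same route as the paper: a fourth-moment Lyapunov CLT for the numerator, a mean-plus-Chebyshev argument showing the conditional variance proxy converges to $\mathbb{E}_\nu[Y_1^2/(1-P)+Y_0^2/P]>0$, and concentration of $J\,\widehat V^\ast_{\mathrm{up}}$ around its conditional mean, followed by Slutsky. You explicitly invoke uniform integrability from (i) to pass from weak convergence to convergence of the quadratic-growth integrals, which is slightly more careful than the paper's appeal to continuity alone.
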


\begin{proof}
Condition on the realized focal means $\{\bar Y_j^{\mathrm{obs}}\}_{j=1}^J$ and draw
$Z_1^\ast,\dots,Z_J^\ast$ independently with $Z_j^\ast\sim\mathrm{Bernoulli}(p_j)$.
Write
\[
\hat\tau_{\mathrm{HT}}^\ast
=
\frac{1}{J}\sum_{j=1}^J \xi_{j,J}^\ast,
\qquad
\xi_{j,J}^\ast
:=
\bar Y_j^{\mathrm{obs}}
\left(\frac{Z_j^\ast}{p_j}-\frac{1-Z_j^\ast}{1-p_j}\right).
\]
Then $\{\xi_{j,J}^\ast\}$ are independent and mean zero conditional on $\{\bar Y_j^{\mathrm{obs}}\}$, with
\[
\mathrm{Var}\!\left(\xi_{j,J}^\ast \,\middle|\, \{\bar Y_j^{\mathrm{obs}}\}\right)
=
\frac{(\bar Y_j^{\mathrm{obs}})^2}{p_j(1-p_j)}.
\]
Let
\[
s_J^2
:=
\sum_{j=1}^J \frac{(\bar Y_j^{\mathrm{obs}})^2}{p_j(1-p_j)}.
\]

\paragraph{Step 1: show $s_J^2/J \to \sigma_{\mathrm R}^2$ in probability for some $\sigma_{\mathrm R}^2\in(0,\infty)$.}
Under the original assignment, $\bar Y_j^{\mathrm{obs}}=\bar Y_j(Z_j)$ with $Z_j\sim\mathrm{Bernoulli}(p_j)$
independent. Hence
\[
\mathbb{E}\!\left[\frac{(\bar Y_j^{\mathrm{obs}})^2}{p_j(1-p_j)}\right]
=
\frac{\bar Y_j(1)^2}{1-p_j}+\frac{\bar Y_j(0)^2}{p_j}.
\]
Define the (deterministic) randomization-variance proxy
\[
\sigma_{\mathrm R}^2(J)
:=
\frac{1}{J}\sum_{j=1}^J\left\{\frac{\bar Y_j(1)^2}{1-p_j}+\frac{\bar Y_j(0)^2}{p_j}\right\}.
\]
By Assumption~\ref{ass:weak-null-asymp}(ii) and the fact that the integrand is a continuous function of
$(p,y_1,y_0)$ with at most quadratic growth (and $p$ is bounded away from $0$ and $1$),
\[
\sigma_{\mathrm R}^2(J)\ \to\ 
\sigma_{\mathrm R}^2
:=
\mathbb{E}_\nu\!\left[\frac{Y_1^2}{1-P}+\frac{Y_0^2}{P}\right]
\in[0,\infty).
\]
Moreover, Assumption~\ref{ass:weak-null-asymp}(iii) implies $\nu$ is not supported on $(Y_1,Y_0)=(0,0)$, hence
$\sigma_{\mathrm R}^2>0$ (since the integrand is strictly positive whenever $(Y_1,Y_0)\neq(0,0)$).

Next, since $p_j\in[\underline p,1-\underline p]$,
\[
\left(\frac{(\bar Y_j^{\mathrm{obs}})^2}{p_j(1-p_j)}\right)^2
\le
C_{\underline p}\,(\bar Y_j^{\mathrm{obs}})^4
\le
C_{\underline p}\,M_j^4
\]
for a constant $C_{\underline p}<\infty$. Therefore,
\[
\mathrm{Var}\!\left(\frac{s_J^2}{J}\right)
=
\frac{1}{J^2}\sum_{j=1}^J
\mathrm{Var}\!\left(\frac{(\bar Y_j^{\mathrm{obs}})^2}{p_j(1-p_j)}\right)
\le
\frac{1}{J^2}\sum_{j=1}^J
\mathbb{E}\!\left[\left(\frac{(\bar Y_j^{\mathrm{obs}})^2}{p_j(1-p_j)}\right)^2\right]
\le
\frac{C_{\underline p}}{J^2}\sum_{j=1}^J M_j^4
=
O(1/J)
\to 0,
\]
by Assumption~\ref{ass:weak-null-asymp}(i). Hence $s_J^2/J-\mathbb{E}[s_J^2/J]\to 0$ in probability, and since
$\mathbb{E}[s_J^2/J]=\sigma_{\mathrm R}^2(J)\to\sigma_{\mathrm R}^2>0$, we conclude
$s_J^2/J\to\sigma_{\mathrm R}^2$ in probability. In particular, $s_J^2=\Theta_{\mathbb{P}}(J)$.

\paragraph{Step 2: Lyapunov condition (conditional) and CLT for $\sum\xi_{j,J}^\ast$.}
Conditional on $\{\bar Y_j^{\mathrm{obs}}\}$,
\[
\mathbb{E}\!\left[(\xi_{j,J}^\ast)^4 \,\middle|\, \{\bar Y_j^{\mathrm{obs}}\}\right]
=
(\bar Y_j^{\mathrm{obs}})^4\left(\frac{1}{p_j^3}+\frac{1}{(1-p_j)^3}\right)
\le
C_{\underline p}'\,(\bar Y_j^{\mathrm{obs}})^4
\]
for some finite constant $C_{\underline p}'$. Also,
\[
\frac{1}{J}\sum_{j=1}^J (\bar Y_j^{\mathrm{obs}})^4
\le
\frac{1}{J}\sum_{j=1}^J M_j^4
=O(1)
\quad\text{by Assumption~\ref{ass:weak-null-asymp}(i),}
\]
so $\sum_{j=1}^J(\bar Y_j^{\mathrm{obs}})^4=O_{\mathbb{P}}(J)$. Therefore the Lyapunov ratio satisfies
\[
\frac{1}{s_J^4}\sum_{j=1}^J
\mathbb{E}\!\left[(\xi_{j,J}^\ast)^4 \,\middle|\, \{\bar Y_j^{\mathrm{obs}}\}\right]
\le
\frac{C_{\underline p}'}{s_J^4}\sum_{j=1}^J (\bar Y_j^{\mathrm{obs}})^4
=
O_{\mathbb{P}}(1/J)
\to 0.
\]
Thus, with probability tending to one, the Lyapunov condition holds and the Lyapunov CLT yields
\[
\frac{\sum_{j=1}^J \xi_{j,J}^\ast}{s_J}\ \Rightarrow\ \mathcal{N}(0,1)
\quad\text{(conditionally on }\{\bar Y_j^{\mathrm{obs}}\}\text{)}.
\]

\paragraph{Step 3: studentization by $\widehat V_{\mathrm{up}}^\ast$.}
Recall
\[
\widehat V_{\mathrm{up}}^\ast
=
\frac{1}{J^2}\sum_{j=1}^J (\bar Y_j^{\mathrm{obs}})^2
\left(
\frac{Z_j^\ast}{p_j^2}
+
\frac{1-Z_j^\ast}{(1-p_j)^2}
\right).
\]
Conditional on $\{\bar Y_j^{\mathrm{obs}}\}$,
\[
\mathbb{E}\!\left[J\,\widehat V_{\mathrm{up}}^\ast \,\middle|\, \{\bar Y_j^{\mathrm{obs}}\}\right]
=
\frac{1}{J}\sum_{j=1}^J \frac{(\bar Y_j^{\mathrm{obs}})^2}{p_j(1-p_j)}
=
\frac{s_J^2}{J}.
\]
A direct variance bound (using $p_j\in[\underline p,1-\underline p]$) gives
\[
\mathrm{Var}\!\left(J\,\widehat V_{\mathrm{up}}^\ast \,\middle|\, \{\bar Y_j^{\mathrm{obs}}\}\right)
\le
\frac{C_{\underline p}''}{J^2}\sum_{j=1}^J (\bar Y_j^{\mathrm{obs}})^4,
\]
so
\[
\mathrm{Var}\!\left(\frac{J\,\widehat V_{\mathrm{up}}^\ast}{s_J^2/J}\,\middle|\,\{\bar Y_j^{\mathrm{obs}}\}\right)
\le
C\cdot \frac{\sum_{j=1}^J (\bar Y_j^{\mathrm{obs}})^4}{s_J^4}
\to 0
\quad\text{in }\mathbb{P}\text{-probability}.
\]
Hence $J\,\widehat V_{\mathrm{up}}^\ast/(s_J^2/J)\to 1$ in probability, and Slutsky's theorem yields
$T_{\mathrm{stud}}^\ast\Rightarrow \mathcal{N}(0,1)$ in $\mathbb{P}$-probability. \qed
\end{proof}

\subsection{Sampling CLT for the observed statistic}\label{app:subsec:sampling-clt}

Define the session-averaged target
\[
\tau_{\mathcal C}
:=
\frac{1}{J}\sum_{j=1}^J \{\bar Y_j(1)-\bar Y_j(0)\}.
\]
Under Assumptions~\ref{ass:no-anticipation} and \ref{ass:m-carryover}, focal outcomes depend only on the constant
within-session assignment, so $\bar Y_j^{\mathrm{obs}}=\bar Y_j(Z_j)$.

\begin{lemma}[Sampling CLT for the HT estimator]\label{lem:samp-clt-ht}
Under Assumption~\ref{ass:weak-null-asymp},
\[
\sqrt{J}\big(\hat\tau_{\mathrm{HT}}-\tau_{\mathcal C}\big)\ \Rightarrow\ \mathcal{N}(0,\ \sigma_{\mathrm S}^2),
\]
where $\sigma_{\mathrm S}^2$ is defined in Assumption~\ref{ass:weak-null-asymp}(iii).
\end{lemma}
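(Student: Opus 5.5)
The plan is to write the centered estimator as a sum of independent, mean-zero, non-identically distributed terms and apply the Lindeberg--Feller (Lyapunov) CLT for triangular arrays. Under Assumptions~\ref{ass:no-anticipation}--\ref{ass:m-carryover} we have $\bar Y_j^{\mathrm{obs}}=\bar Y_j(Z_j)$. By Lemma~\ref{lem:cond-bernoulli-p}, the labels $Z_j$ are independent $\mathrm{Bernoulli}(p_j)$, and the potential outcomes $\bar Y_j(z)$ are fixed. Hence
\[
\sqrt{J}\big(\hat\tau_{\mathrm{HT}}-\tau_{\mathcal C}\big)=\frac{1}{\sqrt J}\sum_{j=1}^J \eta_j,
\qquad
\eta_j:=\frac{Z_j\bar Y_j(1)}{p_j}-\frac{(1-Z_j)\bar Y_j(0)}{1-p_j}-\{\bar Y_j(1)-\bar Y_j(0)\},
\]
with $\mathbb E[\eta_j]=0$ and the $\eta_j$ independent across $j$.

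\textbf{Step 1 (variance).} Write $\eta_j=(Z_j-p_j)\bigl(\bar Y_j(1)/p_j+\bar Y_j(0)/(1-p_j)\bigr)$. This gives
\[
\mathrm{Var}(\eta_j)=p_j(1-p_j)\Bigl(\tfrac{\bar Y_j(1)}{p_j}+\tfrac{\bar Y_j(0)}{1-p_j}\Bigr)^2
=\Bigl(\sqrt{\tfrac{1-p_j}{p_j}}\,\bar Y_j(1)+\sqrt{\tfrac{p_j}{1-p_j}}\,\bar Y_j(0)\Bigr)^2 .
\]
Thus $\sigma_{\mathrm S}^2(J):=J^{-1}\sum_j\mathrm{Var}(\eta_j)=\int g\,d\nu_J$, where $g(p,y_1,y_0)$ is the continuous integrand in Assumption~\ref{ass:weak-null-asymp}(iii). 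The function $g$ has quadratic growth in $(y_1,y_0)$, uniformly over $p\in[\underline p,1-\underline p]$.

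\textbf{Step 2 (convergence of the variance).} We need $\int g\,d\nu_J\to\int g\,d\nu=\sigma_{\mathrm S}^2$. This is the main obstacle, because weak convergence alone does not handle unbounded $g$. We would use uniform integrability. The fourth-moment bound in Assumption~\ref{ass:weak-null-asymp}(i) gives
\[
\int g\,\mathbf 1\{g>K\}\,d\nu_J\le \frac{\int g^2\,d\nu_J}{K}\le \frac{C'}{K}
\]
uniformly in $J$, since $g\le c\,M^2$. Truncate $g$ at $K$ and apply weak convergence to the bounded continuous function $g\wedge K$. Then let $K\to\infty$, using Fatou or the same bound under $\nu$ (a portmanteau argument shows $\int M^4\,d\nu\le C$). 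This yields $\sigma_{\mathrm S}^2(J)\to\sigma_{\mathrm S}^2>0$ by (iii).

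\textbf{Step 3 (Lyapunov).} Since $|\eta_j|\le c_{\underline p}M_j$ deterministically, we have
\[
\frac{\sum_j\mathbb E|\eta_j|^4}{\bigl(\sum_j\mathrm{Var}\,\eta_j\bigr)^2}\le \frac{c\,J\cdot J^{-1}\sum_j M_j^4}{J^2\sigma_{\mathrm S}^2(J)^2}=O(1/J)\to 0.
\]
The Lyapunov CLT then gives $\sum_j\eta_j/\sqrt{J\sigma_{\mathrm S}^2(J)}\Rightarrow\mathcal N(0,1)$. Combining this with Step 2 via Slutsky's theorem yields $\sqrt J(\hat\tau_{\mathrm{HT}}-\tau_{\mathcal C})\Rightarrow\mathcal N(0,\sigma_{\mathrm S}^2)$.
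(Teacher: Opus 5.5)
Your proposal is correct and follows essentially the same route as the paper: the same independent mean-zero decomposition of $\hat\tau_{\mathrm{HT}}-\tau_{\mathcal C}$, the same closed-form per-session variance, a Lyapunov condition with fourth moments bounded by $M_j^4$, and Slutsky. Your Step 2 spells out the uniform-integrability argument that the paper compresses into ``continuity/quadratic growth of the integrand.'' That argument truncates $g$ and uses the fourth-moment bound (i), together with a portmanteau bound showing $\int M^4\,d\nu\le C$. It is genuinely needed, because weak convergence of $\nu_J$ alone does not give $\int g\,d\nu_J\to\int g\,d\nu$ for unbounded $g$.
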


\begin{proof}
Write
\[
\hat\tau_{\mathrm{HT}}-\tau_{\mathcal C}
=
\frac{1}{J}\sum_{j=1}^J \psi_{j,J},
\]
where
\[
\psi_{j,J}
:=
\left(\frac{Z_j}{p_j}-1\right)\bar Y_j(1)
-
\left(\frac{1-Z_j}{1-p_j}-1\right)\bar Y_j(0).
\]
Then $\{\psi_{j,J}\}_{j=1}^J$ are independent and mean zero (conditioning on the fixed potential outcomes).

A direct computation gives
\[
\mathrm{Var}(\psi_{j,J})
=
\frac{1-p_j}{p_j}\bar Y_j(1)^2
+
\frac{p_j}{1-p_j}\bar Y_j(0)^2
+
2\bar Y_j(1)\bar Y_j(0)
=
\left(
\sqrt{\frac{1-p_j}{p_j}}\,\bar Y_j(1)
+
\sqrt{\frac{p_j}{1-p_j}}\,\bar Y_j(0)
\right)^2.
\]
Let $S_J^2:=\sum_{j=1}^J \mathrm{Var}(\psi_{j,J})$ and define
\[
\sigma_{\mathrm S}^2(J)
:=
\frac{S_J^2}{J}
=
\frac{1}{J}\sum_{j=1}^J
\left(
\sqrt{\frac{1-p_j}{p_j}}\,\bar Y_j(1)
+
\sqrt{\frac{p_j}{1-p_j}}\,\bar Y_j(0)
\right)^2.
\]
By Assumption~\ref{ass:weak-null-asymp}(ii) and continuity/quadratic growth of the integrand,
\[
\sigma_{\mathrm S}^2(J)\ \to\ 
\mathbb{E}_\nu\!\left[
\left(
\sqrt{\frac{1-P}{P}}\,Y_1
+
\sqrt{\frac{P}{1-P}}\,Y_0
\right)^2
\right]
=
\sigma_{\mathrm S}^2.
\]
By Assumption~\ref{ass:weak-null-asymp}(iii), $\sigma_{\mathrm S}^2>0$, hence $S_J^2=\Theta(J)$.

\paragraph{Lyapunov condition.}
Since $p_j\in[\underline p,1-\underline p]$, there exists $C_{\underline p}<\infty$ such that
$|\psi_{j,J}|\le C_{\underline p} M_j$, hence $\mathbb{E}[\psi_{j,J}^4]\le C_{\underline p}^4 M_j^4$.
Therefore
\[
\frac{1}{S_J^4}\sum_{j=1}^J \mathbb{E}[\psi_{j,J}^4]
\le
\frac{C_{\underline p}^4}{S_J^4}\sum_{j=1}^J M_j^4.
\]
By Assumption~\ref{ass:weak-null-asymp}(i), $\sum_{j=1}^J M_j^4=O(J)$, while $S_J^4=\Theta(J^2)$, so the
right-hand side is $O(1/J)\to 0$. Thus the Lyapunov condition (with $\delta=2$) holds and the Lyapunov CLT gives
\[
\frac{\sum_{j=1}^J \psi_{j,J}}{S_J}\ \Rightarrow\ \mathcal{N}(0,1).
\]
Finally, since $\sqrt{J}(\hat\tau_{\mathrm{HT}}-\tau_{\mathcal C})=(\sum_{j=1}^J\psi_{j,J})/\sqrt{J}$ and
$S_J^2/J\to\sigma_{\mathrm S}^2$, we obtain
\[
\sqrt{J}\big(\hat\tau_{\mathrm{HT}}-\tau_{\mathcal C}\big)
=
\left(\frac{\sum_{j=1}^J \psi_{j,J}}{S_J}\right)\left(\frac{S_J}{\sqrt{J}}\right)
\Rightarrow
\mathcal{N}(0,\sigma_{\mathrm S}^2).
\qed
\]
\end{proof}

\subsection{Upper-bound variance and consistency}

\begin{lemma}[Upper bound on the session-level sampling variance]\label{lem:var-upper}
For each $j$ and any $p_j\in(0,1)$,
\[
\frac{1-p_j}{p_j}\bar Y_j(1)^2
+
\frac{p_j}{1-p_j}\bar Y_j(0)^2
+
2\bar Y_j(1)\bar Y_j(0)
\ \le\
\frac{1}{p_j}\bar Y_j(1)^2+\frac{1}{1-p_j}\bar Y_j(0)^2.
\]
Consequently, $\sigma_{\mathrm S}^2\le \sigma_{\mathrm{up}}^2$, where
\[
\sigma_{\mathrm{up}}^2
:=
\lim_{J\to\infty}\frac{1}{J}\sum_{j=1}^J\left\{\frac{1}{p_j}\bar Y_j(1)^2+\frac{1}{1-p_j}\bar Y_j(0)^2\right\}.
\]
\end{lemma}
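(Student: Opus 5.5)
The inequality is a per-session algebraic fact. I will prove it by subtracting the left-hand side from the right-hand side and recognizing a perfect square. The limit statement then follows by averaging over $j$ and passing to the limit with the stabilization assumption.

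\textbf{Step 1 (pointwise inequality).} Fix $j$ and write $a:=\bar Y_j(1)$, $b:=\bar Y_j(0)$, $p:=p_j$. Since $\frac{1}{p}-\frac{1-p}{p}=1$ and $\frac{1}{1-p}-\frac{p}{1-p}=1$, the right-hand side minus the left-hand side is
\[
a^2+b^2-2ab=(a-b)^2\ \ge 0 .
\]
This proves the displayed inequality for every $p\in(0,1)$. It also shows that the gap equals the squared session-level effect $(\bar Y_j(1)-\bar Y_j(0))^2$, which is the usual source of conservativeness of the Neyman-type bound.

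\textbf{Step 2 (limits).} By Lemma~\ref{lem:samp-clt-ht}, the left-hand side averaged over $j$ equals $\sigma_{\mathrm S}^2(J)$, and $\sigma_{\mathrm S}^2(J)\to\sigma_{\mathrm S}^2$. Averaging Step 1 over $j=1,\dots,J$ gives
\[
\sigma_{\mathrm S}^2(J)\le \sigma_{\mathrm{up}}^2(J):=\frac1J\sum_{j=1}^J\Big\{\frac{\bar Y_j(1)^2}{p_j}+\frac{\bar Y_j(0)^2}{1-p_j}\Big\}.
\]
To define $\sigma_{\mathrm{up}}^2$ as a limit, I will reuse the argument from Step 1 of Lemma~\ref{lem:rand-clt-general}. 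The map $(p,y_1,y_0)\mapsto y_1^2/p+y_0^2/(1-p)$ is continuous on $[\underline p,1-\underline p]\times\mathbb R^2$ and grows quadratically. The uniform fourth-moment bound in Assumption~\ref{ass:weak-null-asymp}(i) makes the quadratic integrand uniformly integrable under $\nu_J$. Combined with the weak convergence in Assumption~\ref{ass:weak-null-asymp}(ii), this gives
\[
\sigma_{\mathrm{up}}^2(J)\to \mathbb E_\nu\big[Y_1^2/P+Y_0^2/(1-P)\big]=:\sigma_{\mathrm{up}}^2<\infty,
\]
so the limit exists. Passing to the limit in $\sigma_{\mathrm S}^2(J)\le\sigma_{\mathrm{up}}^2(J)$ yields $\sigma_{\mathrm S}^2\le\sigma_{\mathrm{up}}^2$.

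\textbf{Main obstacle.} The only nontrivial point is justifying that the limit defining $\sigma_{\mathrm{up}}^2$ exists, since weak convergence alone does not control unbounded integrands. Assumption~\ref{ass:weak-null-asymp}(i) supplies the needed uniform integrability, because $\sup_J \frac1J\sum_j M_j^4<\infty$ bounds the second moments' tails. If one prefers not to rely on existence of the limit, the same argument gives $\sigma_{\mathrm S}^2\le\liminf_J\sigma_{\mathrm{up}}^2(J)$ directly.
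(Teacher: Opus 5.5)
Your proof is correct and matches the paper's. The paper applies $2ab\le a^2+b^2$ with $a=\bar Y_j(1)$ and $b=\bar Y_j(0)$, which is your perfect-square identity that the gap equals $(\bar Y_j(1)-\bar Y_j(0))^2$, and it obtains the limit statement by averaging over $j$. You also justify convergence of the quadratic averages under weak convergence via uniform integrability from the fourth-moment bound in Assumption~\ref{ass:weak-null-asymp}(i); the paper leaves that step implicit, so this is a genuine tightening.
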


\begin{proof}
Apply $2ab\le a^2+b^2$ with $a=\bar Y_j(1)$ and $b=\bar Y_j(0)$, then simplify. \qed
\end{proof}

\begin{lemma}[Consistency of the upper-bound variance estimator]\label{lem:Vup-cons}
Under Assumption~\ref{ass:weak-null-asymp},
\[
J\,\widehat V_{\mathrm{up}} \ \xrightarrow{\ \mathbb{P}\ }\ \sigma_{\mathrm{up}}^2,
\]
where
\[
\sigma_{\mathrm{up}}^2
:=
\mathbb{E}_\nu\!\left[\frac{Y_1^2}{P}+\frac{Y_0^2}{1-P}\right]\in(0,\infty).
\]
\end{lemma}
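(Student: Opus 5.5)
The plan is a mean-plus-variance argument under the original assignment mechanism. Since $p_j\in[\underline p,1-\underline p]$ with $\underline p<1/2$, we have $p_j^2\ge \underline p^2>0$ and $(1-p_j)^2\ge\underline p^2>0$, so every weight below is uniformly bounded. Write
\[
J\,\widehat V_{\mathrm{up}}=\frac{1}{J}\sum_{j=1}^J \eta_j,
\qquad
\eta_j:=\bar Y_j(Z_j)^2\left(\frac{Z_j}{p_j^2}+\frac{1-Z_j}{(1-p_j)^2}\right),
\]
using $\bar Y_j^{\mathrm{obs}}=\bar Y_j(Z_j)$, which holds under Assumptions~\ref{ass:no-anticipation}--\ref{ass:m-carryover} as in Section~\ref{app:subsec:sampling-clt}. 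By Lemma~\ref{lem:cond-bernoulli-p}, the $\eta_j$ are independent across $j$, since each is a function of $Z_j$ alone and the potential outcomes are fixed.

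The first step computes the mean. Because $Z_j\sim\mathrm{Bernoulli}(p_j)$,
\[
\mathbb E[\eta_j]=p_j\cdot\frac{\bar Y_j(1)^2}{p_j^2}+(1-p_j)\cdot\frac{\bar Y_j(0)^2}{(1-p_j)^2}=\frac{\bar Y_j(1)^2}{p_j}+\frac{\bar Y_j(0)^2}{1-p_j}.
\]
Hence $\mathbb E[J\widehat V_{\mathrm{up}}]=\int g\,d\nu_J$ with $g(p,y_1,y_0):=y_1^2/p+y_0^2/(1-p)$.

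The second step shows $\int g\,d\nu_J\to\int g\,d\nu=\sigma_{\mathrm{up}}^2$. This is the only delicate point, because $g$ is continuous but unbounded, so weak convergence in Assumption~\ref{ass:weak-null-asymp}(ii) is not enough on its own. Uniform integrability fixes this. On $[\underline p,1-\underline p]\times\mathbb R^2$ we have $g\le 2M^2/\underline p$ with $M=\max(|y_1|,|y_0|)$. Assumption~\ref{ass:weak-null-asymp}(i) gives $\int M^4\,d\nu_J\le C$ uniformly in $J$, so $\int g^2\,d\nu_J$ is uniformly bounded. Uniformly bounded second moments make $g$ uniformly integrable under $\{\nu_J\}$, and the standard truncation argument then gives convergence of integrals. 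The earlier lemmas used this same step implicitly.

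For finiteness, $\int M^4\,d\nu\le C$ by Fatou's lemma (or the portmanteau theorem applied to $\min(M^4,K)$), which yields $\sigma_{\mathrm{up}}^2<\infty$. For positivity, Lemma~\ref{lem:var-upper} gives $\sigma_{\mathrm{up}}^2\ge\sigma_{\mathrm S}^2>0$ by Assumption~\ref{ass:weak-null-asymp}(iii). The pointwise inequality in that lemma integrates against $\nu$ directly.

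The third step controls the variance. By independence,
\[
\mathrm{Var}(J\widehat V_{\mathrm{up}})=\frac{1}{J^2}\sum_{j=1}^J\mathrm{Var}(\eta_j)\le\frac{1}{J^2}\sum_{j=1}^J\mathbb E[\eta_j^2]\le \frac{\underline p^{-4}}{J^2}\sum_{j=1}^J M_j^4=O(1/J).
\]
The last bound uses Assumption~\ref{ass:weak-null-asymp}(i). Chebyshev's inequality then gives $J\widehat V_{\mathrm{up}}-\mathbb E[J\widehat V_{\mathrm{up}}]\to 0$ in probability, and combining this with the second step yields $J\widehat V_{\mathrm{up}}\xrightarrow{\mathbb P}\sigma_{\mathrm{up}}^2\in(0,\infty)$. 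This also identifies the limit in the definition of $\sigma_{\mathrm{up}}^2$ in Lemma~\ref{lem:var-upper} with the $\nu$-expectation stated here.
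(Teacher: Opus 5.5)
Your proof is correct and follows the paper's own argument: you compute $\mathbb{E}[J\widehat V_{\mathrm{up}}]$ as the $\nu_J$-average of $y_1^2/p+y_0^2/(1-p)$, pass to the $\nu$-limit, and remove the fluctuation with the $O(1/J)$ variance bound (from the fourth-moment condition) plus Chebyshev. Your explicit uniform-integrability justification for $\int g\,d\nu_J\to\int g\,d\nu$ fills in a step the paper only asserts via ``continuity/quadratic growth,'' and your positivity argument through $\sigma_{\mathrm S}^2\le\sigma_{\mathrm{up}}^2$ is equivalent to the paper's observation that $\nu$ is not concentrated at the origin.
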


\begin{proof}
Recall
\[
J\,\widehat V_{\mathrm{up}}
=
\frac{1}{J}\sum_{j=1}^J U_{j,J},
\qquad
U_{j,J}
:=
(\bar Y_j^{\mathrm{obs}})^2
\left(
\frac{Z_j}{p_j^2}
+
\frac{1-Z_j}{(1-p_j)^2}
\right),
\]
and $\bar Y_j^{\mathrm{obs}}=\bar Y_j(Z_j)$.

\paragraph{Step 1: convergence of the mean.}
By iterated expectation,
\[
\mathbb{E}[U_{j,J}]
=
\frac{\bar Y_j(1)^2}{p_j}+\frac{\bar Y_j(0)^2}{1-p_j}.
\]
Define the deterministic proxy
\[
\sigma_{\mathrm{up}}^2(J)
:=
\frac{1}{J}\sum_{j=1}^J\left\{\frac{\bar Y_j(1)^2}{p_j}+\frac{\bar Y_j(0)^2}{1-p_j}\right\}.
\]
By Assumption~\ref{ass:weak-null-asymp}(ii) and continuity/quadratic growth of the integrand,
\[
\sigma_{\mathrm{up}}^2(J)\ \to\ 
\mathbb{E}_\nu\!\left[\frac{Y_1^2}{P}+\frac{Y_0^2}{1-P}\right]
=:\sigma_{\mathrm{up}}^2<\infty.
\]
Moreover, Assumption~\ref{ass:weak-null-asymp}(iii) implies $\nu$ is not supported on $(Y_1,Y_0)=(0,0)$, so
$\sigma_{\mathrm{up}}^2>0$.

Thus $\mathbb{E}[J\,\widehat V_{\mathrm{up}}]=\sigma_{\mathrm{up}}^2(J)\to\sigma_{\mathrm{up}}^2$.

\paragraph{Step 2: vanishing variance.}
The $\{U_{j,J}\}$ are independent across $j$ because the $Z_j$'s are independent.
Also, since $p_j\in[\underline p,1-\underline p]$, there exists $C_{\underline p}<\infty$ such that
\[
U_{j,J}^2
\le
C_{\underline p}\,(\bar Y_j^{\mathrm{obs}})^4
\le
C_{\underline p}\,M_j^4.
\]
Therefore $\mathrm{Var}(U_{j,J})\le \mathbb{E}[U_{j,J}^2]\le C_{\underline p}M_j^4$, and
\[
\mathrm{Var}\!\left(J\,\widehat V_{\mathrm{up}}\right)
=
\frac{1}{J^2}\sum_{j=1}^J \mathrm{Var}(U_{j,J})
\le
\frac{C_{\underline p}}{J^2}\sum_{j=1}^J M_j^4
=
O(1/J)
\to 0,
\]
by Assumption~\ref{ass:weak-null-asymp}(i). Hence
$J\,\widehat V_{\mathrm{up}}-\mathbb{E}[J\,\widehat V_{\mathrm{up}}]\to 0$ in probability, and combining with
Step 1 yields $J\,\widehat V_{\mathrm{up}}\xrightarrow{\mathbb{P}}\sigma_{\mathrm{up}}^2$. \qed
\end{proof}

\subsection{The session-wise weak null implies a zero target}

For each session $j$ and within-session focal index $\ell=1,\dots,n$, define
\[
Y_{j,\ell}(z):=Y_{s_j+m+\ell-1}(\mathbf z_T), \qquad z\in\{0,1\}.
\]
Define
\[
\tau_\ell
:=
\frac{1}{J}\sum_{j=1}^J\{Y_{j,\ell}(1)-Y_{j,\ell}(0)\},
\qquad \ell=1,\dots,n,
\]
and recall the session-wise weak null $H_0^{\mathrm{sw}}$ is the collection of restrictions $\tau_\ell=0$ for all $\ell$.

\begin{lemma}[Session-wise null implies zero target]\label{lem:sw-implies-tauc}
Under $H_0^{\mathrm{sw}}$, we have $\tau_{\mathcal C}=0$, where
\[
\tau_{\mathcal C}:=\frac{1}{J}\sum_{j=1}^J \{\bar Y_j(1)-\bar Y_j(0)\}.
\]
\end{lemma}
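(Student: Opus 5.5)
The plan is a direct algebraic computation: rewrite $\tau_{\mathcal C}$ as the average of the position-wise effects $\tau_\ell$ by interchanging two finite sums, and then apply $H_0^{\mathrm{sw}}$. No probabilistic argument is needed, since $\tau_{\mathcal C}$ and the $\tau_\ell$ are deterministic functionals of the fixed potential outcomes.

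First, unfold the definition of the session-level means. Since $\mathcal U_j=\{s_j+m,\dots,e_j\}$ has exactly $n=L-m$ elements, the map $\ell\mapsto s_j+m+\ell-1$ is a bijection from $\{1,\dots,n\}$ onto $\mathcal U_j$. Therefore
\[
\bar Y_j(z)=\frac{1}{n}\sum_{t\in\mathcal U_j}Y_t(\mathbf z_T)=\frac1n\sum_{\ell=1}^n Y_{j,\ell}(z),\qquad z\in\{0,1\}.
\]
This matches the session-level mean $\bar Y_j(z)$ used in Section~\ref{subsec:weak-null-hypotheses}.

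Next, substitute this into $\tau_{\mathcal C}$ and interchange the two finite sums:
\[
\tau_{\mathcal C}=\frac1J\sum_{j=1}^J\frac1n\sum_{\ell=1}^n\{Y_{j,\ell}(1)-Y_{j,\ell}(0)\}
=\frac1n\sum_{\ell=1}^n\tau_\ell .
\]
Under $H_0^{\mathrm{sw}}$ every $\tau_\ell$ equals $0$, so $\tau_{\mathcal C}=0$. This also shows that $\tau_{\mathcal C}$ coincides with $\tau_{\mathcal U}$ from \eqref{eq:weak-null-focal}, consistent with the remark there that $H_0^{\mathrm{sw}}$ implies $H_0^{\mathcal U}$.

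There is no genuine obstacle. The only point needing care is the indexing: the focal positions $t_{j,\ell}$ must exhaust $\mathcal U_j$ without repetition, and this is guaranteed by $L>m$ together with $n=L-m$.
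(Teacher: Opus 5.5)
Your proof is correct and follows the paper's argument: write $\bar Y_j(z)=n^{-1}\sum_{\ell=1}^n Y_{j,\ell}(z)$, interchange the two finite sums to get $\tau_{\mathcal C}=n^{-1}\sum_{\ell=1}^n\tau_\ell$, and apply $H_0^{\mathrm{sw}}$. Your check that $\ell\mapsto s_j+m+\ell-1$ maps $\{1,\dots,n\}$ bijectively onto $\mathcal U_j$ is something the paper uses without stating.
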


\begin{proof}
Using $\bar Y_j(z)=n^{-1}\sum_{\ell=1}^{n} Y_{j,\ell}(z)$,
\[
\tau_{\mathcal C}
=
\frac{1}{J}\sum_{j=1}^J \frac{1}{n}\sum_{\ell=1}^{n}\{Y_{j,\ell}(1)-Y_{j,\ell}(0)\}
=
\frac{1}{n}\sum_{\ell=1}^{n}\left[\frac{1}{J}\sum_{j=1}^J\{Y_{j,\ell}(1)-Y_{j,\ell}(0)\}\right]
=
\frac{1}{n}\sum_{\ell=1}^{n}\tau_\ell.
\]
Under $H_0^{\mathrm{sw}}$, $\tau_\ell=0$ for all $\ell$, hence $\tau_{\mathcal C}=0$. \qed
\end{proof}

\subsection{Proof of Theorem~\ref{thm:weak-null-valid-session}}

\begin{proof}[Proof of Theorem~\ref{thm:weak-null-valid-session}]
Fix $\alpha\in(0,1)$.
Let $c_{1-\alpha}$ be the $(1-\alpha)$ quantile of the conditional randomization distribution of
$T_{\mathrm{stud}}^\ast$ given $\{\bar Y_j^{\mathrm{obs}}\}_{j=1}^J$ (approximated by Monte Carlo).
By Lemma~\ref{lem:rand-clt-general}, $c_{1-\alpha}\to z_{1-\alpha}$ in probability.

Under the session-wise weak null $H_0^{\mathrm{sw}}$, Lemma~\ref{lem:sw-implies-tauc} gives $\tau_{\mathcal C}=0$.
By Lemma~\ref{lem:samp-clt-ht},
\[
\sqrt{J}\,\hat\tau_{\mathrm{HT}} \ \Rightarrow\ \mathcal{N}(0,\sigma_{\mathrm S}^2).
\]
By Lemma~\ref{lem:Vup-cons}, $J\,\widehat V_{\mathrm{up}}\to \sigma_{\mathrm{up}}^2$ in probability, and by
Lemma~\ref{lem:var-upper}, $\sigma_{\mathrm S}^2\le \sigma_{\mathrm{up}}^2$.
Therefore, Slutsky's theorem yields
\[
T_{\mathrm{stud}}
=
\frac{\hat\tau_{\mathrm{HT}}}{\sqrt{\widehat V_{\mathrm{up}}}}
=
\frac{\sqrt{J}\,\hat\tau_{\mathrm{HT}}}{\sqrt{J\,\widehat V_{\mathrm{up}}}}
\ \Rightarrow\
\mathcal{N}\!\left(0,\ \frac{\sigma_{\mathrm S}^2}{\sigma_{\mathrm{up}}^2}\right),
\qquad
\frac{\sigma_{\mathrm S}^2}{\sigma_{\mathrm{up}}^2}\le 1.
\]
Hence,
\[
\limsup_{J\to\infty}\mathbb{P}\!\left(T_{\mathrm{stud}} \ge z_{1-\alpha}\right)\le \alpha.
\]
Since $c_{1-\alpha}\to z_{1-\alpha}$ in probability, the same bound holds with $c_{1-\alpha}$ in place of
$z_{1-\alpha}$, which is equivalent to
\[
\limsup_{T\to\infty}\mathbb{P}\!\left(\hat p_{\mathrm w}\le \alpha\right)\le \alpha.
\]
\end{proof}

\subsection{Validity of position-wise and joint studentized CRTs}\label{app:weak-null-posjoint-valid}

This subsection formalizes the asymptotic validity claims in Section~\ref{subsec:weak-null-position} for
(i) the position-wise studentized CRTs and (ii) the quadratic-form (F-type) joint test across within-session focal
positions.

Recall the fixed-length session setup: $L>m$, $T=JL\to\infty$ with $J\to\infty$, and $n:=L-m$ focal positions per
session. For each session $j$ and focal position $\ell\in\{1,\dots,n\}$, let
$Y_{j,\ell}(z):=Y_{s_j+m+\ell-1}(\mathbf z_T)$ for $z\in\{0,1\}$ and define the within-session focal outcome vector
\[
\mathbf Y_j(z):=\bigl(Y_{j,1}(z),\dots,Y_{j,n}(z)\bigr)^\top\in\mathbb R^n,
\qquad z\in\{0,1\}.
\]
Let $Z_j\sim\mathrm{Bernoulli}(p_j)$ denote the session-level assignment with known $p_j\in(0,1)$ and write
\[
Y_{j,\ell}^{\mathrm{obs}}:=Y_{j,\ell}(Z_j),
\qquad
\mathbf Y_j^{\mathrm{obs}}:=\mathbf Y_j(Z_j).
\]
Define the average effect at focal position $\ell$ as
\[
\tau_\ell:=\frac{1}{J}\sum_{j=1}^J\{Y_{j,\ell}(1)-Y_{j,\ell}(0)\},
\qquad \ell=1,\dots,n,
\]
and let $\boldsymbol\tau:=(\tau_1,\dots,\tau_n)^\top$.

\begin{assumption}[Vector weak-null asymptotics under fixed-length sessions]\label{ass:weak-null-asymp-vector}
The carryover length $m$ and session length $L$ are fixed with $L>m$, and $T=JL\to\infty$ so that $J\to\infty$.
There exists $\underline p\in(0,1/2)$ such that $\underline p\le p_j\le 1-\underline p$ for all $j$ and all $T$.

Let $M_j:=\max\{\|\mathbf Y_j(1)\|,\|\mathbf Y_j(0)\|\}$.
Assume:
\begin{enumerate}
\item[(i)] (\emph{Uniform fourth-moment bound}) There exists $C<\infty$ such that
$\sup_{J\ge 1}\frac{1}{J}\sum_{j=1}^J M_j^4\le C$.
\item[(ii)] (\emph{Stabilization}) The empirical measures
$\nu_J:=\frac{1}{J}\sum_{j=1}^J \delta_{(p_j,\mathbf Y_j(1),\mathbf Y_j(0))}$
converge weakly to some probability measure $\nu$ on $[\underline p,1-\underline p]\times\mathbb R^{2n}$.
\item[(iii)] (\emph{Nondegeneracy}) If $(P,\mathbf Y_1,\mathbf Y_0)\sim\nu$, then
\[
\Sigma_{\mathrm R}
:=
\mathbb E_\nu\!\left[\frac{\mathbf Y_1\mathbf Y_1^\top}{1-P}+\frac{\mathbf Y_0\mathbf Y_0^\top}{P}\right]
\]
is positive definite.
\end{enumerate}
\end{assumption}

For $\ell\in\{1,\dots,n\}$, define the position-wise HT estimator and upper-bound variance estimator as in
Section~\ref{subsec:weak-null-position}:
\begin{align*}
    \hat\tau_{\ell,\mathrm{HT}}
    &:=
    \frac{1}{J}\sum_{j=1}^J
    \left(
    \frac{Z_j\,Y_{j,\ell}^{\mathrm{obs}}}{p_j}
    -
    \frac{(1-Z_j)\,Y_{j,\ell}^{\mathrm{obs}}}{1-p_j}
    \right),\\
    \widehat V_{\ell,\mathrm{up}}
    &:=
    \frac{1}{J^2}\sum_{j=1}^J (Y_{j,\ell}^{\mathrm{obs}})^2
    \left(
    \frac{Z_j}{p_j^2}
    +
    \frac{1-Z_j}{(1-p_j)^2}
    \right),
    \qquad
    T_\ell:=\frac{\hat\tau_{\ell,\mathrm{HT}}}{\sqrt{\widehat V_{\ell,\mathrm{up}}}}.
\end{align*}
The position-wise studentized CRT draws $Z_1^\ast,\dots,Z_J^\ast$ independently with
$Z_j^\ast\sim\mathrm{Bernoulli}(p_j)$, holds $\{Y_{j,\ell}^{\mathrm{obs}}\}_{j=1}^J$ fixed, recomputes $T_\ell^\ast$,
and returns the one-sided Monte Carlo $p$-value $\hat p_\ell$.

For the joint test, define the vector HT estimator and its matrix upper-bound analogue:
\[
\widehat{\boldsymbol\tau}_{\mathrm{HT}}
:=
\frac{1}{J}\sum_{j=1}^J
\left(
\frac{Z_j\,\mathbf Y_j^{\mathrm{obs}}}{p_j}
-
\frac{(1-Z_j)\,\mathbf Y_j^{\mathrm{obs}}}{1-p_j}
\right),
\qquad
\widehat\Sigma_{\mathrm{up}}
:=
\frac{1}{J^2}\sum_{j=1}^J \mathbf Y_j^{\mathrm{obs}}(\mathbf Y_j^{\mathrm{obs}})^\top
\left(
\frac{Z_j}{p_j^2}
+
\frac{1-Z_j}{(1-p_j)^2}
\right),
\]
and the quadratic-form statistic
\[
T_F
:=
\widehat{\boldsymbol\tau}_{\mathrm{HT}}^\top \,\widehat\Sigma_{\mathrm{up}}^{-1}\,\widehat{\boldsymbol\tau}_{\mathrm{HT}},
\]
(where, as in the main text, a generalized inverse may be used if needed).
The joint CRT draws $Z_1^\ast,\dots,Z_J^\ast$ independently with $Z_j^\ast\sim\mathrm{Bernoulli}(p_j)$, holds
$\{\mathbf Y_j^{\mathrm{obs}}\}_{j=1}^J$ fixed, recomputes $T_F^\ast$, and returns the Monte Carlo $p$-value $\hat p_F$.

\begin{theorem}[Asymptotic validity for position-wise and joint studentized CRTs]\label{thm:weak-null-valid-posjoint}
Work under the fixed-length session setup of Section~\ref{sec:weak-null-test} and suppose
Assumptions~\ref{ass:no-anticipation} and \ref{ass:m-carryover} hold.
Assume Assumption~\ref{ass:weak-null-asymp-vector}.

\begin{enumerate}
\item[(i)] (\emph{Position-wise validity}) For any focal position $\ell\in\{1,\dots,n\}$, under the position-wise weak
null $H_{0,\ell}:\tau_\ell=0$,
\[
\limsup_{T\to\infty}\mathbb P\!\left(\hat p_\ell\le \alpha\right)\le \alpha
\qquad \text{for all }\alpha\in(0,1).
\]
\item[(ii)] (\emph{Joint validity}) Under the strong joint null $H_0^{\mathrm{sw}}$ in \eqref{eq:weak-null-session}
(equivalently, $\boldsymbol\tau=\mathbf 0$),
\[
\limsup_{T\to\infty}\mathbb P\!\left(\hat p_F\le \alpha\right)\le \alpha
\qquad \text{for all }\alpha\in(0,1).
\]
\end{enumerate}

The same conclusions hold for testing any subset of focal positions by restricting $\ell$ (for (i)) or by replacing
$\widehat{\boldsymbol\tau}_{\mathrm{HT}}$ and $\widehat\Sigma_{\mathrm{up}}$ with the corresponding subvector and
principal submatrix (for (ii)).
\end{theorem}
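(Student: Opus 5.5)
Part (i) will be obtained by reduction to the scalar theory already developed for Theorem~\ref{thm:weak-null-valid-session}. For fixed $\ell$, replace the session focal mean $\bar Y_j(z)$ by the scalar $Y_{j,\ell}(z)$ everywhere. Under Assumptions~\ref{ass:no-anticipation}--\ref{ass:m-carryover} we have $Y_{j,\ell}^{\mathrm{obs}}=Y_{j,\ell}(Z_j)$, and by Lemma~\ref{lem:cond-bernoulli-p} the $Z_j$ are independent $\mathrm{Bernoulli}(p_j)$. We then check that Assumption~\ref{ass:weak-null-asymp-vector} implies the scalar Assumption~\ref{ass:weak-null-asymp} for the coordinate $\ell$:
\begin{itemize}
\item the fourth-moment bound follows from $|Y_{j,\ell}(z)|\le\|\mathbf Y_j(z)\|$;
\item stabilization follows by the continuous mapping theorem applied to the coordinate projection;
\item nondegeneracy of the randomization variance follows from $e_\ell^\top\Sigma_{\mathrm R}e_\ell>0$.
\end{itemize}
A subtlety is that Lemma~\ref{lem:samp-clt-ht} assumed $\sigma_{\mathrm S}^2>0$. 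Under $H_{0,\ell}$ this may fail even though the randomization variance is positive. In that case $\sqrt J\hat\tau_{\ell,\mathrm{HT}}\to 0$ in probability, by Chebyshev and the convergence of the variance proxy. The test is then conservative, so the conclusion still holds. With these checks, Lemmas~\ref{lem:rand-clt-general}, \ref{lem:samp-clt-ht}, \ref{lem:var-upper} and \ref{lem:Vup-cons} apply verbatim, and the argument of Theorem~\ref{thm:weak-null-valid-session} gives (i).

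For (ii) I would run the vector analogue of each step and then apply the continuous mapping theorem.

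\textbf{Randomization side.} Conditional on $\{\mathbf Y_j^{\mathrm{obs}}\}$, the recentered vector $\sqrt J\widehat{\boldsymbol\tau}^\ast_{\mathrm{HT}}$ is a sum of independent mean-zero vectors with covariance $J^{-1}\sum_j \mathbf Y_j^{\mathrm{obs}}\mathbf Y_j^{\mathrm{obs}\top}/\{p_j(1-p_j)\}$. The entrywise variance bound from Step~1 of Lemma~\ref{lem:rand-clt-general} shows this converges in probability to $\Sigma_{\mathrm R}$. The Cram\'er--Wold device plus the Lyapunov bound yields $\sqrt J\widehat{\boldsymbol\tau}^\ast_{\mathrm{HT}}\Rightarrow\mathcal N(0,\Sigma_{\mathrm R})$ in probability. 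Next, $J\widehat\Sigma^\ast_{\mathrm{up}}$ has conditional mean equal to that same covariance and vanishing conditional variance, entrywise by the fourth-moment bound, so it also tends to $\Sigma_{\mathrm R}$. Since $\Sigma_{\mathrm R}$ is positive definite, inversion is continuous there, and hence $T_F^\ast\Rightarrow\chi^2_n$ in probability. The randomization critical value therefore converges to $\chi^2_{n,1-\alpha}$.

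\textbf{Sampling side.} Under $\boldsymbol\tau=\mathbf 0$, the multivariate Lyapunov CLT (again via Cram\'er--Wold) gives $\sqrt J\widehat{\boldsymbol\tau}_{\mathrm{HT}}\Rightarrow\mathcal N(0,\Sigma_{\mathrm S})$, where
\[
\Sigma_{\mathrm S}=\mathbb E_\nu[\mathbf a\mathbf a^\top],\qquad \mathbf a=\sqrt{(1-P)/P}\,\mathbf Y_1+\sqrt{P/(1-P)}\,\mathbf Y_0 .
\]
Along a subsequence the covariance may be degenerate; that case is handled the same way, since the normal limit allows a singular covariance. For the studentizer, $J\widehat\Sigma_{\mathrm{up}}\to\Sigma_{\mathrm{up}}:=\mathbb E_\nu[\mathbf Y_1\mathbf Y_1^\top/P+\mathbf Y_0\mathbf Y_0^\top/(1-P)]$ by the matrix version of Lemma~\ref{lem:Vup-cons}. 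The matrix analogue of Lemma~\ref{lem:var-upper} is
\[
\Sigma_{\mathrm{up}}-\Sigma_{\mathrm S}=\mathbb E_\nu[(\mathbf Y_1-\mathbf Y_0)(\mathbf Y_1-\mathbf Y_0)^\top]\succeq 0 .
\]
Moreover, $\Sigma_{\mathrm{up}}\succeq$ a positive multiple of $\Sigma_{\mathrm R}$, so it is invertible.

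Combining these, $T_F\Rightarrow \mathbf G^\top\Sigma_{\mathrm{up}}^{-1}\mathbf G$ with $\mathbf G\sim\mathcal N(0,\Sigma_{\mathrm S})$. This limit is a weighted chi-square $\sum_k\lambda_k\chi^2_{1,k}$, where the $\lambda_k$ are the eigenvalues of $\Sigma_{\mathrm{up}}^{-1/2}\Sigma_{\mathrm S}\Sigma_{\mathrm{up}}^{-1/2}$, all lying in $[0,1]$.

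\textbf{Main obstacle.} The crux is then to show
\[
\mathbb P\Bigl(\textstyle\sum_k\lambda_k\chi^2_{1,k}>\chi^2_{n,1-\alpha}\Bigr)\le\alpha .
\]
This holds because the weighted sum is stochastically dominated by $\sum_k\chi^2_{1,k}=\chi^2_n$: each term satisfies $\lambda_k\chi^2_{1,k}\le\chi^2_{1,k}$ pointwise under a common coupling. The limit distribution is continuous when some $\lambda_k>0$ and degenerate at $0$ otherwise; in either case the Portmanteau theorem at the continuity point $\chi^2_{n,1-\alpha}$ gives $\limsup\mathbb P(\hat p_F\le\alpha)\le\alpha$. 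The generalized-inverse convention is irrelevant asymptotically because $J\widehat\Sigma_{\mathrm{up}}$ is invertible with probability tending to one. Finally, for subsets $\mathcal L$, the same proof applies to principal submatrices, which inherit positive definiteness and the ordering $\Sigma_{\mathrm{up}}\succeq\Sigma_{\mathrm S}$.
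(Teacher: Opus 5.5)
Your proposal follows the paper's proof essentially step for step (coordinate projection onto the scalar theory for (i); for (ii) the conditional randomization CLT giving $T_F^\ast\Rightarrow\chi^2_n$, a Cram\'er--Wold sampling CLT, the identity $\Sigma_{\mathrm{up}}-\Sigma_{\mathrm S}=\mathbb E_\nu[(\mathbf Y_1-\mathbf Y_0)(\mathbf Y_1-\mathbf Y_0)^\top]\succeq 0$, and the coupling $\sum_k\lambda_k\chi^2_{1,k}\le\chi^2_n$). It is more careful than the paper in two places: you allow $\sigma_{\mathrm S,\ell}^2=0$ or a singular $\Sigma_{\mathrm S}$, neither of which Assumption~\ref{ass:weak-null-asymp-vector}(iii) excludes (the paper takes $\sigma_{\mathrm S,\ell}^2>0$ for granted), and you justify $\Sigma_{\mathrm{up}}\succeq\frac{\underline p}{1-\underline p}\,\Sigma_{\mathrm R}\succ 0$. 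One caveat you share with the paper: in (i) the step ``conservative studentizer implies validity'' needs $1-\Phi(z_{1-\alpha}/\rho)\le\alpha$ with $\rho^2=\sigma_{\mathrm S,\ell}^2/\sigma_{\mathrm{up},\ell}^2\le 1$, which holds only when $z_{1-\alpha}\ge 0$. So for this one-sided test the argument covers only $\alpha\le 1/2$ (only $\alpha<1/2$ in your degenerate case), and for $\alpha>1/2$ with $\rho<1$ the limiting rejection rate actually exceeds $\alpha$; part (ii) is unaffected because $\chi^2_{n,1-\alpha}>0$ for every $\alpha$.
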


\begin{proof}
We prove (i) and (ii) using the same two-step template as in the proof of
Theorem~\ref{thm:weak-null-valid-session}: (a) a randomization limit for the resampled statistic (hence critical value
convergence) and (b) a sampling limit for the observed statistic under the null, together with conservativeness of the
upper-bound studentizer.

\paragraph{Step 1: randomization limits (conditional on observed focal outcomes).}

\emph{(i) Position-wise.}
Fix $\ell$ and condition on $\{Y_{j,\ell}^{\mathrm{obs}}\}_{j=1}^J$.
Write $y_{j,\ell}:=Y_{j,\ell}^{\mathrm{obs}}$ and define
\[
\xi_{j,\ell}^\ast
:=
y_{j,\ell}\left(\frac{Z_j^\ast}{p_j}-\frac{1-Z_j^\ast}{1-p_j}\right),
\qquad
\hat\tau_{\ell,\mathrm{HT}}^\ast
=
\frac{1}{J}\sum_{j=1}^J \xi_{j,\ell}^\ast.
\]
Conditional on $\{y_{j,\ell}\}$, the $\{\xi_{j,\ell}^\ast\}$ are independent and mean zero with
$\mathrm{Var}(\xi_{j,\ell}^\ast\mid\{y_{j,\ell}\})=y_{j,\ell}^2/(p_j(1-p_j))$.
Assumption~\ref{ass:weak-null-asymp-vector}(i)--(iii) implies the conditional Lyapunov condition and yields a CLT for
$\sqrt{J}\,\hat\tau_{\ell,\mathrm{HT}}^\ast$, and the same moment bounds imply
$J\,\widehat V_{\ell,\mathrm{up}}^\ast$ concentrates around its conditional mean
$J^{-1}\sum_{j=1}^J y_{j,\ell}^2/(p_j(1-p_j))$.
Therefore, by Slutsky's theorem,
\[
T_\ell^\ast
=
\frac{\hat\tau_{\ell,\mathrm{HT}}^\ast}{\sqrt{\widehat V_{\ell,\mathrm{up}}^\ast}}
\ \Rightarrow\ \mathcal N(0,1)
\quad\text{conditional on }\{Y_{j,\ell}^{\mathrm{obs}}\},
\]
in $\mathbb P$-probability.
Let $c_{\ell,1-\alpha}$ be the $(1-\alpha)$ quantile of the conditional randomization distribution of $T_\ell^\ast$.
Then $c_{\ell,1-\alpha}\to z_{1-\alpha}$ in probability.

\emph{(ii) Joint statistic.}
Condition on $\{\mathbf Y_j^{\mathrm{obs}}\}_{j=1}^J$ and write $\mathbf y_j:=\mathbf Y_j^{\mathrm{obs}}$.
Define
\[
\boldsymbol\xi_j^\ast
:=
\mathbf y_j\left(\frac{Z_j^\ast}{p_j}-\frac{1-Z_j^\ast}{1-p_j}\right),
\qquad
\widehat{\boldsymbol\tau}_{\mathrm{HT}}^\ast
=
\frac{1}{J}\sum_{j=1}^J \boldsymbol\xi_j^\ast.
\]
Conditional on $\{\mathbf y_j\}$, the $\{\boldsymbol\xi_j^\ast\}$ are independent, mean zero, and
\[
\mathrm{Var}\!\left(\boldsymbol\xi_j^\ast \,\middle|\, \{\mathbf y_j\}\right)
=
\frac{\mathbf y_j\mathbf y_j^\top}{p_j(1-p_j)}.
\]
Let
\[
\Sigma_{\mathrm R,J}^{\mathrm{obs}}
:=
\frac{1}{J}\sum_{j=1}^J \frac{\mathbf y_j\mathbf y_j^\top}{p_j(1-p_j)}.
\]
Assumption~\ref{ass:weak-null-asymp-vector}(i)--(iii) implies a conditional multivariate Lyapunov CLT, so
\[
\sqrt{J}\,\widehat{\boldsymbol\tau}_{\mathrm{HT}}^\ast\ \Rightarrow\ \mathcal N(\mathbf 0,\Sigma_{\mathrm R,J}^{\mathrm{obs}})
\quad\text{conditionally on }\{\mathbf y_j\},
\]
in $\mathbb P$-probability.
Moreover,
\[
J\,\widehat\Sigma_{\mathrm{up}}^\ast
=
\frac{1}{J}\sum_{j=1}^J \mathbf y_j\mathbf y_j^\top
\left(\frac{Z_j^\ast}{p_j^2}+\frac{1-Z_j^\ast}{(1-p_j)^2}\right)
\]
has conditional mean $\Sigma_{\mathrm R,J}^{\mathrm{obs}}$ and (entrywise) conditional variance of order $O_{\mathbb P}(1/J)$,
so $J\,\widehat\Sigma_{\mathrm{up}}^\ast-\Sigma_{\mathrm R,J}^{\mathrm{obs}}\to 0$ in conditional probability.
By Assumption~\ref{ass:weak-null-asymp-vector}(iii), $\Sigma_{\mathrm R,J}^{\mathrm{obs}}$ is invertible with probability
tending to one, and hence so is $J\,\widehat\Sigma_{\mathrm{up}}^\ast$.

Therefore, by Slutsky's theorem and the continuous mapping theorem,
\[
T_F^\ast
=
(\sqrt{J}\,\widehat{\boldsymbol\tau}_{\mathrm{HT}}^\ast)^\top
(J\,\widehat\Sigma_{\mathrm{up}}^\ast)^{-1}
(\sqrt{J}\,\widehat{\boldsymbol\tau}_{\mathrm{HT}}^\ast)
\ \Rightarrow\ \chi^2_n
\quad\text{conditional on }\{\mathbf Y_j^{\mathrm{obs}}\},
\]
in $\mathbb P$-probability.
Let $c_{F,1-\alpha}$ be the $(1-\alpha)$ quantile of the conditional randomization distribution of $T_F^\ast$.
Since the $\chi^2_n$ CDF is continuous and strictly increasing, we have $c_{F,1-\alpha}\to \chi^2_{n,1-\alpha}$ in probability.

\paragraph{Step 2: sampling limits for the observed statistics under the null.}

\emph{(i) Position-wise.}
Fix $\ell$ and consider $H_{0,\ell}:\tau_\ell=0$.
By independence of $\{Z_j\}$ and $m$-carryover, we may treat $\{Y_{j,\ell}(1),Y_{j,\ell}(0)\}_{j=1}^J$ as fixed
potential outcomes for the scalar ``units'' $j=1,\dots,J$.
Under Assumption~\ref{ass:weak-null-asymp-vector}, the scalar array
$\{(p_j,Y_{j,\ell}(1),Y_{j,\ell}(0))\}$ satisfies the same moment and stabilization conditions as in
Assumption~\ref{ass:weak-null-asymp} (by coordinate projection), so the Lyapunov CLT yields
$\sqrt{J}\,\hat\tau_{\ell,\mathrm{HT}}\Rightarrow \mathcal N(0,\sigma_{\mathrm S,\ell}^2)$ for some
$\sigma_{\mathrm S,\ell}^2\in(0,\infty)$, and the LLN yields $J\,\widehat V_{\ell,\mathrm{up}}\to \sigma_{\mathrm{up},\ell}^2$.
Moreover, the same algebra as Lemma~\ref{lem:var-upper} gives $\sigma_{\mathrm S,\ell}^2\le \sigma_{\mathrm{up},\ell}^2$.
Hence
\[
T_\ell
=
\frac{\sqrt{J}\,\hat\tau_{\ell,\mathrm{HT}}}{\sqrt{J\,\widehat V_{\ell,\mathrm{up}}}}
\ \Rightarrow\
\mathcal N\!\left(0,\ \frac{\sigma_{\mathrm S,\ell}^2}{\sigma_{\mathrm{up},\ell}^2}\right),
\qquad
\frac{\sigma_{\mathrm S,\ell}^2}{\sigma_{\mathrm{up},\ell}^2}\le 1,
\]
and therefore $\limsup_{T\to\infty}\mathbb P(T_\ell\ge z_{1-\alpha})\le \alpha$.
Since $c_{\ell,1-\alpha}\to z_{1-\alpha}$ in probability, we also have
$\limsup_{T\to\infty}\mathbb P(T_\ell\ge c_{\ell,1-\alpha})\le \alpha$, which is equivalent to
$\limsup_{T\to\infty}\mathbb P(\hat p_\ell\le \alpha)\le \alpha$.

\emph{(ii) Joint statistic.}
Under $H_0^{\mathrm{sw}}$, we have $\boldsymbol\tau=\mathbf 0$.
Using Cram\'er--Wold, for any fixed $\mathbf a\in\mathbb R^n$ the scalar quantity
$\mathbf a^\top\widehat{\boldsymbol\tau}_{\mathrm{HT}}$ is an HT estimator formed from the scalar potential outcomes
$\mathbf a^\top\mathbf Y_j(1)$ and $\mathbf a^\top\mathbf Y_j(0)$.
Assumption~\ref{ass:weak-null-asymp-vector} implies the required moment and stabilization conditions for every fixed
$\mathbf a$, so the scalar Lyapunov CLT applies to each $\mathbf a^\top\widehat{\boldsymbol\tau}_{\mathrm{HT}}$.
Hence
\[
\sqrt{J}\,\widehat{\boldsymbol\tau}_{\mathrm{HT}}\ \Rightarrow\ \mathcal N(\mathbf 0,\Sigma_{\mathrm S}),
\]
for some covariance matrix $\Sigma_{\mathrm S}$.

Similarly, $J\,\widehat\Sigma_{\mathrm{up}}\to \Sigma_{\mathrm{up}}$ in probability for a deterministic
positive definite matrix $\Sigma_{\mathrm{up}}$ (entrywise LLN under the moment bound).
Moreover, the (matrix) analogue of Lemma~\ref{lem:var-upper} holds pointwise:
for each $j$,
\[
\frac{\mathbf Y_j(1)\mathbf Y_j(1)^\top}{p_j}
+
\frac{\mathbf Y_j(0)\mathbf Y_j(0)^\top}{1-p_j}
-
\mathrm{Var}(\boldsymbol\psi_j)
=
\bigl(\mathbf Y_j(1)-\mathbf Y_j(0)\bigr)\bigl(\mathbf Y_j(1)-\mathbf Y_j(0)\bigr)^\top
\succeq \mathbf 0,
\]
where $\boldsymbol\psi_j$ is the centered summand in $\widehat{\boldsymbol\tau}_{\mathrm{HT}}$.
Averaging over $j$ and taking limits gives $\Sigma_{\mathrm S}\preceq \Sigma_{\mathrm{up}}$.

Now write
\[
T_F
=
(\sqrt{J}\,\widehat{\boldsymbol\tau}_{\mathrm{HT}})^\top\,(J\,\widehat\Sigma_{\mathrm{up}})^{-1}\,(\sqrt{J}\,\widehat{\boldsymbol\tau}_{\mathrm{HT}}).
\]
By Slutsky and the continuous mapping theorem,
$T_F\Rightarrow Q$, where if $\mathbf Z\sim\mathcal N(\mathbf 0,\mathbf I_n)$ and
$A:=\Sigma_{\mathrm{up}}^{-1/2}\Sigma_{\mathrm S}\Sigma_{\mathrm{up}}^{-1/2}\preceq \mathbf I_n$, then
\[
Q\ \stackrel{d}{=}\ \mathbf Z^\top A\,\mathbf Z.
\]
Diagonalize $A=U\,\mathrm{diag}(\lambda_1,\dots,\lambda_n)\,U^\top$ with $\lambda_i\in[0,1]$.
Since $U^\top\mathbf Z\stackrel{d}{=}\mathbf Z$, we have the coupling
\[
Q\ \stackrel{d}{=}\ \sum_{i=1}^n \lambda_i Z_i^2
\ \le\ \sum_{i=1}^n Z_i^2
\quad\text{a.s.},
\]
and thus for every $t\ge 0$, $\mathbb P(Q\ge t)\le \mathbb P(\chi^2_n\ge t)$.
Taking $t=\chi^2_{n,1-\alpha}$ yields $\limsup_{T\to\infty}\mathbb P(T_F\ge \chi^2_{n,1-\alpha})\le \alpha$.
Since $c_{F,1-\alpha}\to \chi^2_{n,1-\alpha}$ in probability, we also have
$\limsup_{T\to\infty}\mathbb P(T_F\ge c_{F,1-\alpha})\le \alpha$, equivalently
$\limsup_{T\to\infty}\mathbb P(\hat p_F\le \alpha)\le \alpha$.
\end{proof}

\subsection{Proof of Corollary~\ref{cor:weak-null-valid-reg}}\label{app:proof-cor-weak-null-valid-reg}

Recall that $Z_1,\dots,Z_J$ are independent with
$Z_j\sim\mathrm{Bernoulli}(p_j)$, where $\underline p\le p_j\le 1-\underline p$ for all $j$ and all $J$.
Let $\bar Y_j(z)$ be the session-level focal mean potential outcome under the constant path $\mathbf z_T$ and
$\bar Y_j^{\mathrm{obs}}=\bar Y_j(Z_j)$.

\paragraph{Regression-based statistic.}
The regression coefficient $\hat\tau_{\mathrm{reg}}$ defined by the weighted regression
\eqref{eq:ipw-wls-def} equals the difference of the IPW-normalized means in \eqref{eq:ipw-hajek-means}:
\[
\hat\tau_{\mathrm{reg}}=\hat\mu_1-\hat\mu_0,
\qquad
\hat\mu_1=\frac{\sum_{j=1}^J \frac{Z_j\,\bar Y_j(1)}{p_j}}{\sum_{j=1}^J \frac{Z_j}{p_j}},
\qquad
\hat\mu_0=\frac{\sum_{j=1}^J \frac{(1-Z_j)\,\bar Y_j(0)}{1-p_j}}{\sum_{j=1}^J \frac{1-Z_j}{1-p_j}}.
\]
Let $\widehat V_{\mathrm{reg}}$ be the HC0 variance estimator in \eqref{eq:ipw-hc0-var} and
$T_{\mathrm{reg}}=\hat\tau_{\mathrm{reg}}/\sqrt{\widehat V_{\mathrm{reg}}}$ as in \eqref{eq:ipw-tstat}.
For the randomization distribution, conditional on $\{\bar Y_j^{\mathrm{obs}}\}_{j=1}^J$ draw
$Z_1^\ast,\dots,Z_J^\ast$ independently with $Z_j^\ast\sim\mathrm{Bernoulli}(p_j)$ and define
$T_{\mathrm{reg}}^\ast$ by recomputing \eqref{eq:ipw-hajek-means}--\eqref{eq:ipw-tstat} with $Z^\ast$ in place of $Z$.

\paragraph{Additional nondegeneracy.}
Because $\widehat V_{\mathrm{reg}}$ is based on within-arm residual variation, we impose the following mild
nondegeneracy condition in addition to Assumption~\ref{ass:weak-null-asymp}.  Let $(P,Y_1,Y_0)\sim\nu$ be the limit law
from Assumption~\ref{ass:weak-null-asymp}(ii), and define $\mu_1:=\mathbb{E}_\nu[Y_1]$ and $\mu_0:=\mathbb{E}_\nu[Y_0]$.
Assume
\begin{equation}\label{eq:reg-nondeg-proof}
\sigma_{\mathrm{up,reg}}^2
:=
\mathbb{E}_\nu\!\left[\frac{(Y_1-\mu_1)^2}{P}+\frac{(Y_0-\mu_0)^2}{1-P}\right]
>0.
\end{equation}
This rules out the degenerate case where both $\{\bar Y_j(1)\}$ and $\{\bar Y_j(0)\}$ are asymptotically constant across
sessions.

\subsubsection*{Step 1: a randomization CLT for $T_{\mathrm{reg}}^\ast$ and quantile convergence}

\begin{lemma}[Randomization CLT]\label{lem:rand-clt-reg-proof}
Under Assumption~\ref{ass:weak-null-asymp} and \eqref{eq:reg-nondeg-proof}, conditional on
$\{\bar Y_j^{\mathrm{obs}}\}_{j=1}^J$ we have
\[
T_{\mathrm{reg}}^\ast \ \Rightarrow\ \mathcal N(0,1),
\]
in $\mathbb P$-probability (where $\mathbb P$ is the original assignment law generating
$\{\bar Y_j^{\mathrm{obs}}\}$).
Consequently, if $c_{1-\alpha}$ denotes the $(1-\alpha)$ quantile of the conditional randomization distribution of
$T_{\mathrm{reg}}^\ast$ given $\{\bar Y_j^{\mathrm{obs}}\}$, then $c_{1-\alpha}\to z_{1-\alpha}$ in probability.
\end{lemma}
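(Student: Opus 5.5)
The plan is to reduce $T_{\mathrm{reg}}^\ast$ to a Horvitz--Thompson-type linear statistic with residualized outcomes, and then rerun the template of Lemma~\ref{lem:rand-clt-general}. Condition on $y_j:=\bar Y_j^{\mathrm{obs}}$ and set
\[
D_1^\ast:=\frac{1}{J}\sum_{j=1}^J \frac{Z_j^\ast}{p_j},\qquad D_0^\ast:=\frac{1}{J}\sum_{j=1}^J \frac{1-Z_j^\ast}{1-p_j}.
\]
Each has conditional mean $1$ and conditional variance at most $C_{\underline p}/J$, so $D_1^\ast,D_0^\ast\to 1$ in conditional probability. Let $\bar y:=J^{-1}\sum_j y_j$. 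Because
\[
\frac{1}{J}\sum_j \frac{Z_j^\ast}{p_j}(y_j-\hat\mu_1^\ast)=0
\]
and similarly for the control arm, we can write
\[
\hat\tau_{\mathrm{reg}}^\ast=\frac{A_1^\ast}{D_1^\ast}-\frac{A_0^\ast}{D_0^\ast},
\qquad
A_1^\ast:=\frac{1}{J}\sum_j\frac{Z_j^\ast}{p_j}(y_j-\bar y),
\qquad
A_0^\ast:=\frac{1}{J}\sum_j\frac{1-Z_j^\ast}{1-p_j}(y_j-\bar y).
\]
The term $\bar y$ cancels exactly in the difference. Both $A_1^\ast$ and $A_0^\ast$ have conditional mean $\bar r:=J^{-1}\sum_j(y_j-\bar y)=0$, so they are $O_{\mathbb P}(J^{-1/2})$. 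Consequently
\[
\sqrt J\,\hat\tau_{\mathrm{reg}}^\ast=\sqrt J\,(A_1^\ast-A_0^\ast)+o_{\mathbb P}(1).
\]
Here $A_1^\ast-A_0^\ast$ is exactly the HT statistic $\hat\tau_{\mathrm{HT}}^\ast$ of Lemma~\ref{lem:rand-clt-general} applied to the centered outcomes $r_j:=y_j-\bar y$.

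\emph{Step 1 (numerator CLT).} Apply Steps 1--2 of the proof of Lemma~\ref{lem:rand-clt-general} with $r_j$ in place of $y_j$. The fourth moments are still $O_{\mathbb P}(J)$, since $|\bar y|$ is bounded by Assumption~\ref{ass:weak-null-asymp}(i). This gives
\[
\frac{\sqrt J\,(A_1^\ast-A_0^\ast)}{s_{r,J}/\sqrt J}\Rightarrow\mathcal N(0,1),
\qquad
s_{r,J}^2:=\sum_j\frac{r_j^2}{p_j(1-p_j)}.
\]
We also need $s_{r,J}^2/J$ to converge in probability to a positive limit. Expanding $r_j^2=y_j^2-2y_j\bar y+\bar y^2$, computing expectations under the original design as in Step 1 of Lemma~\ref{lem:rand-clt-general}, and using the LLN for $\bar y$, the limit is an explicit $\nu$-functional. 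Namely, writing $\bar\mu:=\mathbb E_\nu[PY_1+(1-P)Y_0]$ for the limit of $\bar y$,
\[
\sigma_{\mathrm R,\mathrm{reg}}^2=\mathbb E_\nu\!\left[\frac{(Y_1-\bar\mu)^2}{1-P}+\frac{(Y_0-\bar\mu)^2}{P}\right].
\]

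\emph{Step 2 (studentizer).} Compute $J\widehat V_{\mathrm{reg}}^\ast$. After replacing the denominators $(D_z^\ast)^2$ by $1$ and the arm means $\hat\mu_z^\ast$ by their common probability limit $\bar y$ (both $\hat\mu_1^\ast$ and $\hat\mu_0^\ast$ converge to $\bar y$ because the resampled labels carry no information), it equals
\[
\frac{1}{J}\sum_j r_j^2\left(\frac{Z_j^\ast}{p_j^2}+\frac{1-Z_j^\ast}{(1-p_j)^2}\right)+o_{\mathbb P}(1).
\]
Its conditional mean is $s_{r,J}^2/J$ and its variance is $O_{\mathbb P}(1/J)$, exactly as in Step 3 of Lemma~\ref{lem:rand-clt-general}. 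Slutsky's theorem then gives $T_{\mathrm{reg}}^\ast\Rightarrow\mathcal N(0,1)$ in $\mathbb P$-probability. Quantile convergence $c_{1-\alpha}\to z_{1-\alpha}$ follows from continuity and strict monotonicity of $\Phi$ via the standard Pólya-type argument.

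\emph{Main obstacle.} The delicate point is positivity of the limit $\sigma_{\mathrm R,\mathrm{reg}}^2$. This requires ruling out $Y_1=Y_0=\bar\mu$ $\nu$-almost surely, which is not literally the stated condition \eqref{eq:reg-nondeg-proof}: that condition centers at arm-specific means $\mu_1,\mu_0$ rather than at $\bar\mu$. I would argue as follows. If $\sigma_{\mathrm R,\mathrm{reg}}^2=0$, then $Y_1=Y_0=\bar\mu$ a.s., hence $\mu_1=\mu_0=\bar\mu$ and $\sigma_{\mathrm{up,reg}}^2=0$, contradicting \eqref{eq:reg-nondeg-proof}. So the stated condition suffices.

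The second technical care point is controlling the $o_{\mathbb P}(1)$ remainders uniformly in the conditioning. This is handled by working with the conditional Chebyshev bounds and the observation that $J^{-1}\sum_j y_j^4=O_{\mathbb P}(1)$, both of which hold on events of probability tending to one.
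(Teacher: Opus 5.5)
Your proposal is correct and follows essentially the same route as the paper: the identity $\hat\mu_z^\ast-\bar y=A_z^\ast/D_z^\ast$ (your $D_z^\ast$ is the paper's $B_{z,J}^\ast$), reduction to the HT statistic on the centered outcomes $y_j-\bar y$ with a conditional Lyapunov CLT, consistency of $J\widehat V_{\mathrm{reg}}^\ast$ for $s_J^2/J$ after replacing $\hat\mu_z^\ast$ by $\bar y$, then Slutsky and quantile convergence. Your explicit limit $\mathbb E_\nu[(Y_1-\bar\mu)^2/(1-P)+(Y_0-\bar\mu)^2/P]$ for $s_J^2/J$, together with the argument that its vanishing would force $\sigma_{\mathrm{up,reg}}^2=0$, makes rigorous the paper's informal claim that $s_J^2=\Theta_{\mathbb P}(J)$.
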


\begin{proof}
Fix $J$ and condition on the realized focal means $\{\bar Y_j^{\mathrm{obs}}\}_{j=1}^J$; write
$y_j:=\bar Y_j^{\mathrm{obs}}$ and $\bar y:=J^{-1}\sum_{j=1}^J y_j$.
Define the resampled IPW means
\[
\hat\mu_1^\ast
=
\frac{A_{1,J}^\ast}{B_{1,J}^\ast},
\qquad
A_{1,J}^\ast:=\frac{1}{J}\sum_{j=1}^J \frac{Z_j^\ast y_j}{p_j},
\quad
B_{1,J}^\ast:=\frac{1}{J}\sum_{j=1}^J \frac{Z_j^\ast}{p_j},
\]
\[
\hat\mu_0^\ast
=
\frac{A_{0,J}^\ast}{B_{0,J}^\ast},
\qquad
A_{0,J}^\ast:=\frac{1}{J}\sum_{j=1}^J \frac{(1-Z_j^\ast) y_j}{1-p_j},
\quad
B_{0,J}^\ast:=\frac{1}{J}\sum_{j=1}^J \frac{1-Z_j^\ast}{1-p_j},
\]
and $\hat\tau_{\mathrm{reg}}^\ast:=\hat\mu_1^\ast-\hat\mu_0^\ast$.

\paragraph{Step 1a: linearization of $\hat\tau_{\mathrm{reg}}^\ast$.}
Since $\mathbb E[Z_j^\ast/p_j]=1$ and $\mathbb E[(1-Z_j^\ast)/(1-p_j)]=1$,
\[
\mathbb E\!\left[A_{1,J}^\ast \,\middle|\,\{y_j\}\right]=\bar y,
\quad
\mathbb E\!\left[B_{1,J}^\ast \,\middle|\,\{y_j\}\right]=1,
\qquad
\mathbb E\!\left[A_{0,J}^\ast \,\middle|\,\{y_j\}\right]=\bar y,
\quad
\mathbb E\!\left[B_{0,J}^\ast \,\middle|\,\{y_j\}\right]=1.
\]
Write
\[
\hat\mu_1^\ast-\bar y
=
\frac{(A_{1,J}^\ast-\bar y)-\bar y(B_{1,J}^\ast-1)}{B_{1,J}^\ast},
\qquad
\hat\mu_0^\ast-\bar y
=
\frac{(A_{0,J}^\ast-\bar y)-\bar y(B_{0,J}^\ast-1)}{B_{0,J}^\ast}.
\]
A direct simplification yields
\[
(A_{1,J}^\ast-\bar y)-\bar y(B_{1,J}^\ast-1)
=
\frac{1}{J}\sum_{j=1}^J\left(\frac{Z_j^\ast}{p_j}-1\right)(y_j-\bar y),
\]
\[
(A_{0,J}^\ast-\bar y)-\bar y(B_{0,J}^\ast-1)
=
\frac{1}{J}\sum_{j=1}^J\left(\frac{1-Z_j^\ast}{1-p_j}-1\right)(y_j-\bar y).
\]
Moreover, conditional on $\{y_j\}$,
\[
\mathrm{Var}\!\left(B_{1,J}^\ast \,\middle|\,\{y_j\}\right)
=
\frac{1}{J^2}\sum_{j=1}^J \frac{1-p_j}{p_j}
\le \frac{C}{J},
\qquad
\mathrm{Var}\!\left(B_{0,J}^\ast \,\middle|\,\{y_j\}\right)
=
\frac{1}{J^2}\sum_{j=1}^J \frac{p_j}{1-p_j}
\le \frac{C}{J},
\]
so $B_{1,J}^\ast\to 1$ and $B_{0,J}^\ast\to 1$ in conditional probability. Hence,
\begin{equation}\label{eq:lin-tau-star}
\hat\tau_{\mathrm{reg}}^\ast
=
\frac{1}{J}\sum_{j=1}^J \xi_{j,J}^\ast
+ r_J^\ast,
\qquad
\xi_{j,J}^\ast
:=
(y_j-\bar y)\left(\frac{Z_j^\ast}{p_j}-\frac{1-Z_j^\ast}{1-p_j}\right),
\end{equation}
where $r_J^\ast=o_{\mathbb P}(J^{-1/2})$ conditionally on $\{y_j\}$.

\paragraph{Step 1b: conditional CLT for $\sum \xi_{j,J}^\ast$.}
Conditional on $\{y_j\}$, the $\{\xi_{j,J}^\ast\}_{j=1}^J$ are independent, mean zero, and satisfy
\[
\mathrm{Var}\!\left(\xi_{j,J}^\ast \,\middle|\,\{y_j\}\right)
=
\frac{(y_j-\bar y)^2}{p_j(1-p_j)}.
\]
Let
\[
s_J^2
:=
\sum_{j=1}^J \frac{(y_j-\bar y)^2}{p_j(1-p_j)}.
\]
Because $p_j\in[\underline p,1-\underline p]$, we have $s_J^2\asymp \sum_{j=1}^J (y_j-\bar y)^2$.
Assumption~\ref{ass:weak-null-asymp}(i) implies $J^{-1}\sum_{j=1}^J y_j^4=O_{\mathbb P}(1)$, hence
$J^{-1}\sum_{j=1}^J (y_j-\bar y)^4=O_{\mathbb P}(1)$, and therefore
\[
\sum_{j=1}^J (y_j-\bar y)^4 = O_{\mathbb P}(J).
\]
In addition, \eqref{eq:reg-nondeg-proof} implies that $\liminf_{J\to\infty} \mathbb P(s_J^2/J>\varepsilon)=1$
for some $\varepsilon>0$ (otherwise $y_j$ would be asymptotically constant across $j$, forcing
$\sigma_{\mathrm{up,reg}}^2=0$). Thus $s_J^2=\Theta_{\mathbb P}(J)$.

Furthermore, conditional on $\{y_j\}$,
\[
\mathbb E\!\left[(\xi_{j,J}^\ast)^4\,\middle|\,\{y_j\}\right]
=
(y_j-\bar y)^4\left(\frac{1}{p_j^3}+\frac{1}{(1-p_j)^3}\right)
\le C_{\underline p}\,(y_j-\bar y)^4.
\]
Hence, with $\mathbb P$-probability tending to one,
\[
\frac{1}{s_J^4}\sum_{j=1}^J
\mathbb E\!\left[(\xi_{j,J}^\ast)^4\,\middle|\,\{y_j\}\right]
\le
\frac{C_{\underline p}\sum_{j=1}^J (y_j-\bar y)^4}{s_J^4}
=
O_{\mathbb P}\!\left(\frac{J}{J^2}\right)
\to 0.
\]
By the conditional Lyapunov CLT,
\begin{equation}\label{eq:clt-xi-star}
\frac{\sum_{j=1}^J \xi_{j,J}^\ast}{s_J}\ \Rightarrow\ \mathcal N(0,1)
\qquad\text{conditionally on }\{y_j\}.
\end{equation}
Combining \eqref{eq:lin-tau-star} and \eqref{eq:clt-xi-star} yields
\[
\frac{\sqrt{J}\,\hat\tau_{\mathrm{reg}}^\ast}{\sqrt{s_J^2/J}}\ \Rightarrow\ \mathcal N(0,1)
\qquad\text{conditionally on }\{y_j\}.
\]

\paragraph{Step 1c: consistency of $J\widehat V_{\mathrm{reg}}^\ast$ for $s_J^2/J$.}
Define the resampled residuals $\hat r_{j,1}^\ast:=y_j-\hat\mu_1^\ast$ for $Z_j^\ast=1$ and
$\hat r_{j,0}^\ast:=y_j-\hat\mu_0^\ast$ for $Z_j^\ast=0$, and let $\widehat V_{\mathrm{reg}}^\ast$ be the HC0
variance estimator obtained by substituting $(Z^\ast,\hat\mu_1^\ast,\hat\mu_0^\ast)$ into \eqref{eq:ipw-hc0-var}.
Write
\[
J\,\widehat V_{\mathrm{reg}}^\ast
=
\frac{N_{1,J}^\ast}{(B_{1,J}^\ast)^2}+\frac{N_{0,J}^\ast}{(B_{0,J}^\ast)^2},
\]
where
\[
N_{1,J}^\ast:=\frac{1}{J}\sum_{j=1}^J \frac{Z_j^\ast(\hat r_{j,1}^\ast)^2}{p_j^2},
\qquad
N_{0,J}^\ast:=\frac{1}{J}\sum_{j=1}^J \frac{(1-Z_j^\ast)(\hat r_{j,0}^\ast)^2}{(1-p_j)^2}.
\]
Since $B_{1,J}^\ast\to 1$ and $B_{0,J}^\ast\to 1$ in conditional probability, it suffices to show
$N_{1,J}^\ast+N_{0,J}^\ast \to s_J^2/J$ in conditional probability.

First replace $\hat\mu_1^\ast,\hat\mu_0^\ast$ by $\bar y$.
Because $\hat\mu_1^\ast-\bar y=O_{\mathbb P}(J^{-1/2})$ and $\hat\mu_0^\ast-\bar y=O_{\mathbb P}(J^{-1/2})$
conditionally on $\{y_j\}$ (by the preceding CLT), a direct expansion gives
\[
\left|N_{1,J}^\ast-\tilde N_{1,J}^\ast\right|+\left|N_{0,J}^\ast-\tilde N_{0,J}^\ast\right|
=o_{\mathbb P}(1)
\qquad\text{conditionally on }\{y_j\},
\]
where
\[
\tilde N_{1,J}^\ast:=\frac{1}{J}\sum_{j=1}^J \frac{Z_j^\ast(y_j-\bar y)^2}{p_j^2},
\qquad
\tilde N_{0,J}^\ast:=\frac{1}{J}\sum_{j=1}^J \frac{(1-Z_j^\ast)(y_j-\bar y)^2}{(1-p_j)^2}.
\]
Next, conditional on $\{y_j\}$,
\[
\mathbb E[\tilde N_{1,J}^\ast+\tilde N_{0,J}^\ast \mid \{y_j\}]
=
\frac{1}{J}\sum_{j=1}^J (y_j-\bar y)^2\left(\frac{1}{p_j}+\frac{1}{1-p_j}\right)
=
\frac{s_J^2}{J}.
\]
Moreover, using $p_j\in[\underline p,1-\underline p]$ and independence,
\[
\mathrm{Var}\!\left(\tilde N_{1,J}^\ast+\tilde N_{0,J}^\ast \,\middle|\, \{y_j\}\right)
\le
\frac{C_{\underline p}}{J^2}\sum_{j=1}^J (y_j-\bar y)^4
=
O_{\mathbb P}(1/J)
\to 0.
\]
Thus $\tilde N_{1,J}^\ast+\tilde N_{0,J}^\ast \to s_J^2/J$ in conditional probability, and hence
$N_{1,J}^\ast+N_{0,J}^\ast \to s_J^2/J$ as well. Therefore
\[
\frac{J\,\widehat V_{\mathrm{reg}}^\ast}{s_J^2/J}\ \xrightarrow{\ \mathbb P\ }\ 1
\qquad\text{conditionally on }\{y_j\}.
\]
Combining with \eqref{eq:clt-xi-star} and Slutsky's theorem yields
$T_{\mathrm{reg}}^\ast \Rightarrow \mathcal N(0,1)$ conditionally on $\{y_j\}$, in $\mathbb P$-probability.

\paragraph{Quantile convergence.}
Let $F_J^\ast(\cdot):=\mathbb P(T_{\mathrm{reg}}^\ast\le \cdot \mid \{y_j\})$ be the conditional CDF.
The above shows $F_J^\ast(t)\to \Phi(t)$ in probability for each $t\in\mathbb R$, and $\Phi$ is continuous and strictly
increasing. Therefore the conditional $(1-\alpha)$ quantile
$c_{1-\alpha}:=\inf\{t:F_J^\ast(t)\ge 1-\alpha\}$ satisfies $c_{1-\alpha}\to z_{1-\alpha}$ in probability.
\end{proof}

\subsubsection*{Step 2: a sampling CLT for $T_{\mathrm{reg}}$ under $H_0^{\mathrm{sw}}$}

Define the (deterministic) session-level means
\[
\mu_{1,J}:=\frac{1}{J}\sum_{j=1}^J \bar Y_j(1),
\qquad
\mu_{0,J}:=\frac{1}{J}\sum_{j=1}^J \bar Y_j(0),
\qquad
\tau_{\mathcal C}=\mu_{1,J}-\mu_{0,J}.
\]
Under $H_0^{\mathrm{sw}}$, Lemma~\ref{lem:sw-implies-tauc} implies $\tau_{\mathcal C}=0$.

\begin{lemma}[Sampling CLT for $\hat\tau_{\mathrm{reg}}$]\label{lem:samp-clt-reg-proof}
Under Assumption~\ref{ass:weak-null-asymp} and \eqref{eq:reg-nondeg-proof},
\[
\sqrt{J}\bigl(\hat\tau_{\mathrm{reg}}-\tau_{\mathcal C}\bigr)\ \Rightarrow\ \mathcal N(0,\sigma_{\mathrm{reg}}^2),
\]
where
\[
\sigma_{\mathrm{reg}}^2
:=
\mathbb E_\nu\!\left[
\left(
\sqrt{\frac{1-P}{P}}\,(Y_1-\mu_1)
+
\sqrt{\frac{P}{1-P}}\,(Y_0-\mu_0)
\right)^2
\right].
\]
\end{lemma}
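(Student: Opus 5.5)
We prove the sampling CLT for $\hat\tau_{\mathrm{reg}}$ by linearizing the ratio (Hájek) form and then applying Lemma~\ref{lem:samp-clt-ht} to the centered potential outcomes. The argument has four steps.

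\textbf{Step 1 (ratio decomposition).} Write $\hat\mu_1=A_{1,J}/B_{1,J}$ with
\[
A_{1,J}:=J^{-1}\sum_j Z_j\bar Y_j(1)/p_j,
\qquad
B_{1,J}:=J^{-1}\sum_j Z_j/p_j,
\]
and define $A_{0,J},B_{0,J}$ analogously. Under the original design, $\mathbb E[A_{1,J}]=\mu_{1,J}$ and $\mathbb E[B_{1,J}]=1$. Exactly as in Step 1a of Lemma~\ref{lem:rand-clt-reg-proof}, but now centering at $\mu_{1,J}$ instead of $\bar y$, we have the identity
\[
\hat\mu_1-\mu_{1,J}
=\frac{J^{-1}\sum_j\left(\frac{Z_j}{p_j}-1\right)\bigl(\bar Y_j(1)-\mu_{1,J}\bigr)}{B_{1,J}},
\]
and the analogous identity holds for arm $0$.

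\textbf{Step 2 (denominators).} Since $\mathrm{Var}(B_{z,J})\le C/J$ by the bounds $\underline p\le p_j\le 1-\underline p$, we get $B_{z,J}\to 1$ in probability. Each numerator is a mean of independent, mean-zero terms whose variance is $O(1/J)$; this uses Assumption~\ref{ass:weak-null-asymp}(i) together with $\mu_{z,J}=O(1)$. Hence each numerator is $O_{\mathbb P}(J^{-1/2})$, and Slutsky's theorem gives
\[
\sqrt J(\hat\tau_{\mathrm{reg}}-\tau_{\mathcal C})=J^{-1/2}\sum_j\tilde\psi_{j,J}+o_{\mathbb P}(1),
\]
where
\[
\tilde\psi_{j,J}:=\left(\frac{Z_j}{p_j}-1\right)\tilde Y_j(1)-\left(\frac{1-Z_j}{1-p_j}-1\right)\tilde Y_j(0),
\qquad
\tilde Y_j(z):=\bar Y_j(z)-\mu_{z,J}.
\]

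\textbf{Step 3 (CLT for the linear term).} The term $\tilde\psi_{j,J}$ is precisely the summand $\psi_{j,J}$ of Lemma~\ref{lem:samp-clt-ht}, with the potential outcomes replaced by their centered versions $\tilde Y_j(z)$. We therefore rerun that Lyapunov argument.

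The fourth-moment bound carries over, since $|\tilde Y_j(z)|\le M_j+|\mu_{z,J}|$ and $\mu_{z,J}$ is bounded uniformly by (i) and Jensen.

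For the variance, we need
\[
J^{-1}\sum_j\Bigl(\sqrt{\tfrac{1-p_j}{p_j}}\,\tilde Y_j(1)+\sqrt{\tfrac{p_j}{1-p_j}}\,\tilde Y_j(0)\Bigr)^2\to\sigma_{\mathrm{reg}}^2 .
\]
Two facts give this. First, $\mu_{z,J}\to\mu_z$, because weak convergence plus the uniform fourth-moment bound yields uniform integrability of $Y_z$. Second, the integrand is continuous with quadratic growth, so convergence of $\nu_J$ handles it; the uniform integrability from (i) again justifies passing to the limit. The replacement of $\mu_{z,J}$ by $\mu_z$ contributes only $o(1)$.

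If $\sigma_{\mathrm{reg}}^2>0$, Lyapunov's ratio is $O(1/J)$ and the CLT follows. If $\sigma_{\mathrm{reg}}^2=0$, the linear term has variance tending to zero, so it converges to $\mathcal N(0,0)$ trivially. Hence nondegeneracy is not essential for this lemma, although \eqref{eq:reg-nondeg-proof} is used later for the studentizer.

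\textbf{Step 4 (main obstacle).} The delicate point is the justification of the moment limits. Weak convergence of $\nu_J$ does not by itself give convergence of second moments. We must invoke uniform integrability: $\sup_J J^{-1}\sum_j M_j^4<\infty$ implies that $(p,y_1,y_0)\mapsto y_z^2$ and the mixed products $y_1y_0$ are uniformly integrable under $\nu_J$. This step is implicitly used in Lemmas~\ref{lem:samp-clt-ht} and \ref{lem:Vup-cons}, and we make it explicit here. The remainder of the argument is Slutsky bookkeeping.
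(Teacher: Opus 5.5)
Your proposal is correct and follows the same route as the paper: the exact Hájek ratio identity centered at $\mu_{z,J}$, $B_{z,J}\to 1$ from the $O(1/J)$ variance bound, linearization into independent centered summands, and a Lyapunov CLT with the variance limit obtained from stabilization. Your two refinements are genuine improvements rather than detours. First, the paper passes from weak convergence of $\nu_J$ to convergence of the second-moment functionals without mentioning the uniform integrability supplied by Assumption~\ref{ass:weak-null-asymp}(i), which you make explicit. Second, the paper asserts that \eqref{eq:reg-nondeg-proof} forces $\sigma_{\mathrm{reg}}^2>0$, which is false: take $P\equiv 1/2$ and $Y_1-\mu_1=-(Y_0-\mu_0)$ nondegenerate, so that $\sigma_{\mathrm{reg}}^2=0<\sigma_{\mathrm{up,reg}}^2$. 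Your separate vanishing-variance treatment of the degenerate case is therefore what actually makes the lemma hold as stated.
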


\begin{proof}
Write $\hat\tau_{\mathrm{reg}}=\hat\mu_1-\hat\mu_0$ with
\[
\hat\mu_1=\frac{A_{1,J}}{B_{1,J}},
\qquad
A_{1,J}:=\frac{1}{J}\sum_{j=1}^J \frac{Z_j\,\bar Y_j(1)}{p_j},
\quad
B_{1,J}:=\frac{1}{J}\sum_{j=1}^J \frac{Z_j}{p_j},
\]
\[
\hat\mu_0=\frac{A_{0,J}}{B_{0,J}},
\qquad
A_{0,J}:=\frac{1}{J}\sum_{j=1}^J \frac{(1-Z_j)\,\bar Y_j(0)}{1-p_j},
\quad
B_{0,J}:=\frac{1}{J}\sum_{j=1}^J \frac{1-Z_j}{1-p_j}.
\]
Because $\mathbb E[Z_j/p_j]=1$ and $\mathbb E[(1-Z_j)/(1-p_j)]=1$, we have
$\mathbb E[B_{1,J}]=\mathbb E[B_{0,J}]=1$ and
\[
\mathrm{Var}(B_{1,J})
=
\frac{1}{J^2}\sum_{j=1}^J \frac{1-p_j}{p_j}
\le \frac{C}{J},
\qquad
\mathrm{Var}(B_{0,J})
=
\frac{1}{J^2}\sum_{j=1}^J \frac{p_j}{1-p_j}
\le \frac{C}{J},
\]
so $B_{1,J}\to 1$ and $B_{0,J}\to 1$ in probability.

Next, note the exact identity
\[
\hat\mu_1-\mu_{1,J}
=
\frac{(A_{1,J}-\mu_{1,J})-\mu_{1,J}(B_{1,J}-1)}{B_{1,J}}
=
\frac{1}{B_{1,J}}\cdot \frac{1}{J}\sum_{j=1}^J \left(\frac{Z_j}{p_j}-1\right)\bigl(\bar Y_j(1)-\mu_{1,J}\bigr).
\]
Since $B_{1,J}\to 1$,
\begin{equation}\label{eq:lin-mu1}
\hat\mu_1-\mu_{1,J}
=
\frac{1}{J}\sum_{j=1}^J \left(\frac{Z_j}{p_j}-1\right)\bigl(\bar Y_j(1)-\mu_{1,J}\bigr)
+o_{\mathbb P}(J^{-1/2}).
\end{equation}
Similarly,
\begin{equation}\label{eq:lin-mu0}
\hat\mu_0-\mu_{0,J}
=
\frac{1}{J}\sum_{j=1}^J \left(\frac{1-Z_j}{1-p_j}-1\right)\bigl(\bar Y_j(0)-\mu_{0,J}\bigr)
+o_{\mathbb P}(J^{-1/2}).
\end{equation}
Subtracting \eqref{eq:lin-mu0} from \eqref{eq:lin-mu1} yields
\[
\hat\tau_{\mathrm{reg}}-\tau_{\mathcal C}
=
\frac{1}{J}\sum_{j=1}^J \psi_{j,J}
+o_{\mathbb P}(J^{-1/2}),
\]
where the summands are independent and mean zero:
\[
\psi_{j,J}
:=
\left(\frac{Z_j}{p_j}-1\right)\bigl(\bar Y_j(1)-\mu_{1,J}\bigr)
-
\left(\frac{1-Z_j}{1-p_j}-1\right)\bigl(\bar Y_j(0)-\mu_{0,J}\bigr).
\]
A direct two-point calculation in $Z_j\in\{0,1\}$ gives
\begin{equation}\label{eq:var-psi}
\mathrm{Var}(\psi_{j,J})
=
\left(
\sqrt{\frac{1-p_j}{p_j}}\,(\bar Y_j(1)-\mu_{1,J})
+
\sqrt{\frac{p_j}{1-p_j}}\,(\bar Y_j(0)-\mu_{0,J})
\right)^2
=:v_{j,J}.
\end{equation}
Let $S_J^2:=\sum_{j=1}^J v_{j,J}$ and $\sigma_{\mathrm{reg}}^2(J):=S_J^2/J$.
By Assumption~\ref{ass:weak-null-asymp}(ii), $\mu_{1,J}\to\mu_1$ and $\mu_{0,J}\to\mu_0$, and the empirical measures
$\nu_J\Rightarrow \nu$ imply
\[
\sigma_{\mathrm{reg}}^2(J)\ \to\
\mathbb E_\nu\!\left[
\left(
\sqrt{\frac{1-P}{P}}\,(Y_1-\mu_1)
+
\sqrt{\frac{P}{1-P}}\,(Y_0-\mu_0)
\right)^2
\right]
=:\sigma_{\mathrm{reg}}^2.
\]
Moreover, \eqref{eq:reg-nondeg-proof} implies $\sigma_{\mathrm{reg}}^2>0$, hence $S_J^2=\Theta(J)$.

To apply Lyapunov's CLT, note that $p_j\in[\underline p,1-\underline p]$ implies
$|\psi_{j,J}|\le C_{\underline p}(|\bar Y_j(1)-\mu_{1,J}|+|\bar Y_j(0)-\mu_{0,J}|)\le C'_{\underline p} M_j$,
so $\mathbb E[\psi_{j,J}^4]\le C''_{\underline p} M_j^4$ and therefore
\[
\frac{1}{S_J^4}\sum_{j=1}^J \mathbb E[\psi_{j,J}^4]
\le
\frac{C''_{\underline p}\sum_{j=1}^J M_j^4}{S_J^4}
=
O(1/J)\to 0,
\]
by Assumption~\ref{ass:weak-null-asymp}(i) and $S_J^4=\Theta(J^2)$. The Lyapunov CLT gives
$\sum_{j=1}^J \psi_{j,J}/S_J \Rightarrow \mathcal N(0,1)$, and since
$\sqrt{J}(\hat\tau_{\mathrm{reg}}-\tau_{\mathcal C})=(\sum_{j=1}^J\psi_{j,J})/\sqrt{J}+o_{\mathbb P}(1)$ with
$S_J^2/J\to\sigma_{\mathrm{reg}}^2$, the claim follows.
\end{proof}

\begin{lemma}[Consistency and conservativeness of $\widehat V_{\mathrm{reg}}$]\label{lem:v-reg-proof}
Under Assumption~\ref{ass:weak-null-asymp} and \eqref{eq:reg-nondeg-proof},
\[
J\,\widehat V_{\mathrm{reg}}\ \xrightarrow{\ \mathbb P\ }\ \sigma_{\mathrm{up,reg}}^2,
\qquad
\text{and}\qquad
\sigma_{\mathrm{reg}}^2\le \sigma_{\mathrm{up,reg}}^2.
\]
\end{lemma}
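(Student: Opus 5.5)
The plan is to follow the two-step template of Lemma~\ref{lem:Vup-cons} (mean convergence plus vanishing variance), after first reducing $J\,\widehat V_{\mathrm{reg}}$ to a sum of independent terms with the estimated means replaced by deterministic centers. Then conservativeness follows from the pointwise inequality of Lemma~\ref{lem:var-upper}, applied to centered outcomes.

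\textbf{Step 1: reduction.} Write
\[
J\,\widehat V_{\mathrm{reg}} = \frac{N_{1,J}}{B_{1,J}^2}+\frac{N_{0,J}}{B_{0,J}^2},
\qquad
N_{1,J}:=\frac1J\sum_{j}\frac{Z_j(\bar Y_j(1)-\hat\mu_1)^2}{p_j^2},
\]
and define $N_{0,J}$ analogously. We have $B_{1,J},B_{0,J}\to1$ in probability, as shown in the proof of Lemma~\ref{lem:samp-clt-reg-proof}, and $\hat\mu_z-\mu_{z,J}=O_{\mathbb P}(J^{-1/2})$ by \eqref{eq:lin-mu1}--\eqref{eq:lin-mu0}. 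Expand
\[
(\bar Y_j(1)-\hat\mu_1)^2=(\bar Y_j(1)-\mu_{1,J})^2-2(\bar Y_j(1)-\mu_{1,J})(\hat\mu_1-\mu_{1,J})+(\hat\mu_1-\mu_{1,J})^2 .
\]
The cross term is bounded by $|\hat\mu_1-\mu_{1,J}|\cdot C_{\underline p}J^{-1}\sum_j M_j$, and $J^{-1}\sum_j M_j=O(1)$ by Assumption~\ref{ass:weak-null-asymp}(i) and Jensen. The cross term and the squared term are therefore $o_{\mathbb P}(1)$. It thus suffices to study
\[
\tilde N_{1,J}:=\frac1J\sum_j \frac{Z_j(\bar Y_j(1)-\mu_{1,J})^2}{p_j^2},
\]
and its control analogue $\tilde N_{0,J}$.

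\textbf{Step 2: LLN.} We have
\[
\mathbb E[\tilde N_{1,J}+\tilde N_{0,J}]=\frac1J\sum_j\left\{\frac{(\bar Y_j(1)-\mu_{1,J})^2}{p_j}+\frac{(\bar Y_j(0)-\mu_{0,J})^2}{1-p_j}\right\}.
\]
Weak convergence $\nu_J\Rightarrow\nu$, together with $\mu_{z,J}\to\mu_z$, the uniform fourth-moment bound (which gives uniform integrability of the quadratic integrand), and $p$ bounded away from $0$ and $1$, yields convergence of this mean to $\sigma^2_{\mathrm{up,reg}}$. For the fluctuations, the summands are independent with second moments bounded by $C_{\underline p}M_j^4$, so the variance is $O(J^{-2}\sum_j M_j^4)=O(1/J)$. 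Chebyshev then gives $J\widehat V_{\mathrm{reg}}\to\sigma^2_{\mathrm{up,reg}}$ in probability; this limit is positive by \eqref{eq:reg-nondeg-proof}.

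\textbf{Step 3: conservativeness.} Set $a_j=\bar Y_j(1)-\mu_{1,J}$ and $b_j=\bar Y_j(0)-\mu_{0,J}$. Lemma~\ref{lem:var-upper} gives, pointwise,
\[
v_{j,J}\le \frac{a_j^2}{p_j}+\frac{b_j^2}{1-p_j},
\]
and the gap equals $(a_j-b_j)^2\ge 0$. Averaging over $j$ and passing to the limit, using the same uniform-integrability argument as in Step 2, gives $\sigma^2_{\mathrm{reg}}\le\sigma^2_{\mathrm{up,reg}}$.

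The main obstacle is the justification of limits of expectations under $\nu_J\Rightarrow\nu$ when the integrand involves the $J$-dependent centers $\mu_{z,J}$. These centers converge by uniform integrability, since $\sup_J J^{-1}\sum_j M_j^4<\infty$ makes $|y|$ and $y^2$ uniformly integrable under $\nu_J$. The integrand $(y-\mu_{z,J})^2/p$ then differs from $(y-\mu_z)^2/p$ by at most $C(|y|+1)\,|\mu_{z,J}-\mu_z|$, and this error vanishes uniformly in average.
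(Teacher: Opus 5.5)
Your proposal is correct and follows essentially the same route as the paper: replace $\hat\mu_z$ by $\mu_{z,J}$ using $B_{z,J}\to 1$ and $\hat\mu_z-\mu_{z,J}=O_{\mathbb P}(J^{-1/2})$, apply a mean-plus-Chebyshev LLN to the resulting independent sums, and obtain conservativeness from the pointwise identity whose gap is $\{(Y_1-\mu_1)-(Y_0-\mu_0)\}^2\ge 0$. The differences are minor: you bound the cross term deterministically through $J^{-1}\sum_j M_j$ rather than by a variance bound, and you average the finite-$J$ inequality before taking limits rather than applying it directly under $\nu$; you also make the uniform-integrability justification for the limits of the deterministic averages explicit, which the paper leaves implicit.
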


\begin{proof}
Write $B_1:=\sum_{j=1}^J Z_j/p_j = J B_{1,J}$ and $B_0:=\sum_{j=1}^J (1-Z_j)/(1-p_j)=J B_{0,J}$.
From above, $B_{1,J}\to 1$ and $B_{0,J}\to 1$ in probability.  Using \eqref{eq:ipw-hc0-var} we can write
\[
J\,\widehat V_{\mathrm{reg}}
=
\frac{N_{1,J}}{(B_{1,J})^2}+\frac{N_{0,J}}{(B_{0,J})^2},
\]
where
\[
N_{1,J}:=\frac{1}{J}\sum_{j=1}^J \frac{Z_j(\bar Y_j(1)-\hat\mu_1)^2}{p_j^2},
\qquad
N_{0,J}:=\frac{1}{J}\sum_{j=1}^J \frac{(1-Z_j)(\bar Y_j(0)-\hat\mu_0)^2}{(1-p_j)^2}.
\]
Since $(B_{1,J})^{-2}=1+o_{\mathbb P}(1)$ and $(B_{0,J})^{-2}=1+o_{\mathbb P}(1)$, it suffices to prove
$N_{1,J}+N_{0,J}\xrightarrow{\mathbb P}\sigma_{\mathrm{up,reg}}^2$.

Define the deterministic means $\mu_{1,J},\mu_{0,J}$ as above and write $\Delta_1:=\hat\mu_1-\mu_{1,J}$ and
$\Delta_0:=\hat\mu_0-\mu_{0,J}$. By Lemma~\ref{lem:samp-clt-reg-proof}, $\Delta_1=O_{\mathbb P}(J^{-1/2})$ and
$\Delta_0=O_{\mathbb P}(J^{-1/2})$. Expand
\[
(\bar Y_j(1)-\hat\mu_1)^2-(\bar Y_j(1)-\mu_{1,J})^2
=
\Delta_1^2-2(\bar Y_j(1)-\mu_{1,J})\Delta_1.
\]
Therefore
\[
N_{1,J}
=
N_{1,J}^\circ
+
\Delta_1^2\cdot \frac{1}{J}\sum_{j=1}^J \frac{Z_j}{p_j^2}
-
2\Delta_1\cdot \frac{1}{J}\sum_{j=1}^J \frac{Z_j(\bar Y_j(1)-\mu_{1,J})}{p_j^2},
\]
where $N_{1,J}^\circ:=\frac{1}{J}\sum_{j=1}^J \frac{Z_j(\bar Y_j(1)-\mu_{1,J})^2}{p_j^2}$.
Since $\Delta_1^2=O_{\mathbb P}(1/J)$ and $J^{-1}\sum Z_j/p_j^2=O_{\mathbb P}(1)$, the second term is
$o_{\mathbb P}(1)$.  For the third term, $J^{-1}\sum Z_j(\bar Y_j(1)-\mu_{1,J})/p_j^2=O_{\mathbb P}(1)$ by a variance
bound (using Assumption~\ref{ass:weak-null-asymp}(i)), hence it is also $o_{\mathbb P}(1)$ because
$\Delta_1=O_{\mathbb P}(J^{-1/2})$. Thus $N_{1,J}-N_{1,J}^\circ=o_{\mathbb P}(1)$.
The same argument yields $N_{0,J}-N_{0,J}^\circ=o_{\mathbb P}(1)$, where
$N_{0,J}^\circ:=\frac{1}{J}\sum_{j=1}^J \frac{(1-Z_j)(\bar Y_j(0)-\mu_{0,J})^2}{(1-p_j)^2}$.

Next compute expectations:
\[
\mathbb E[N_{1,J}^\circ]
=
\frac{1}{J}\sum_{j=1}^J \frac{(\bar Y_j(1)-\mu_{1,J})^2}{p_j},
\qquad
\mathbb E[N_{0,J}^\circ]
=
\frac{1}{J}\sum_{j=1}^J \frac{(\bar Y_j(0)-\mu_{0,J})^2}{1-p_j}.
\]
Moreover, using $p_j\in[\underline p,1-\underline p]$ and Assumption~\ref{ass:weak-null-asymp}(i),
\[
\mathrm{Var}(N_{1,J}^\circ)+\mathrm{Var}(N_{0,J}^\circ)
\le
\frac{C_{\underline p}}{J^2}\sum_{j=1}^J M_j^4
=
O(1/J)\to 0.
\]
Hence
\[
N_{1,J}^\circ+N_{0,J}^\circ
-
\frac{1}{J}\sum_{j=1}^J\left\{\frac{(\bar Y_j(1)-\mu_{1,J})^2}{p_j}+\frac{(\bar Y_j(0)-\mu_{0,J})^2}{1-p_j}\right\}
\ \xrightarrow{\ \mathbb P\ }\ 0.
\]
By Assumption~\ref{ass:weak-null-asymp}(ii) and $\mu_{1,J}\to\mu_1$, $\mu_{0,J}\to\mu_0$, the displayed deterministic
average converges to $\sigma_{\mathrm{up,reg}}^2$ in \eqref{eq:reg-nondeg-proof}. Combining these steps gives
$N_{1,J}+N_{0,J}\xrightarrow{\mathbb P}\sigma_{\mathrm{up,reg}}^2$, and hence
$J\widehat V_{\mathrm{reg}}\xrightarrow{\mathbb P}\sigma_{\mathrm{up,reg}}^2$.

Finally, the pointwise inequality $\sigma_{\mathrm{reg}}^2\le\sigma_{\mathrm{up,reg}}^2$ follows from
\[
\frac{(Y_1-\mu_1)^2}{P}+\frac{(Y_0-\mu_0)^2}{1-P}
-
\left(
\sqrt{\frac{1-P}{P}}\,(Y_1-\mu_1)+\sqrt{\frac{P}{1-P}}\,(Y_0-\mu_0)
\right)^2
=
\bigl\{(Y_1-\mu_1)-(Y_0-\mu_0)\bigr\}^2\ge 0,
\]
and taking expectations under $\nu$.
\end{proof}

\paragraph{Consequence for $T_{\mathrm{reg}}$ under $H_0^{\mathrm{sw}}$.}
Under $H_0^{\mathrm{sw}}$, $\tau_{\mathcal C}=0$ (Lemma~\ref{lem:sw-implies-tauc}), so
Lemma~\ref{lem:samp-clt-reg-proof} and Lemma~\ref{lem:v-reg-proof} imply by Slutsky that
\[
T_{\mathrm{reg}}
=
\frac{\hat\tau_{\mathrm{reg}}}{\sqrt{\widehat V_{\mathrm{reg}}}}
=
\frac{\sqrt{J}\,\hat\tau_{\mathrm{reg}}}{\sqrt{J\,\widehat V_{\mathrm{reg}}}}
\ \Rightarrow\
\mathcal N\!\left(0,\ \frac{\sigma_{\mathrm{reg}}^2}{\sigma_{\mathrm{up,reg}}^2}\right),
\qquad
\frac{\sigma_{\mathrm{reg}}^2}{\sigma_{\mathrm{up,reg}}^2}\le 1.
\]
In particular,
\begin{equation}\label{eq:tail-z}
\limsup_{J\to\infty}\ \mathbb P\!\left(T_{\mathrm{reg}}\ge z_{1-\alpha}\right)\ \le\ \alpha.
\end{equation}

\subsubsection*{Step 3: conclude Corollary~\ref{cor:weak-null-valid-reg}}

\begin{proof}[Proof of Corollary~\ref{cor:weak-null-valid-reg}]
Let $c_{1-\alpha}$ be the $(1-\alpha)$ quantile of the conditional randomization distribution of
$T_{\mathrm{reg}}^\ast$ given $\{\bar Y_j^{\mathrm{obs}}\}_{j=1}^J$. By Lemma~\ref{lem:rand-clt-reg-proof},
$c_{1-\alpha}\to z_{1-\alpha}$ in probability.

The randomization test rejects at level $\alpha$ whenever $T_{\mathrm{reg}}\ge c_{1-\alpha}$, which is equivalent
to $\hat p_{\mathrm{reg}}\le \alpha$ for the corresponding one-sided randomization $p$-value.
Fix $\varepsilon>0$. Then
\[
\mathbb P\!\left(T_{\mathrm{reg}}\ge c_{1-\alpha}\right)
\le
\mathbb P\!\left(T_{\mathrm{reg}}\ge z_{1-\alpha}-\varepsilon\right)
+
\mathbb P\!\left(|c_{1-\alpha}-z_{1-\alpha}|>\varepsilon\right).
\]
Taking $\limsup$ and using $c_{1-\alpha}\to z_{1-\alpha}$ in probability gives
\[
\limsup_{J\to\infty}\ \mathbb P\!\left(T_{\mathrm{reg}}\ge c_{1-\alpha}\right)
\le
\limsup_{J\to\infty}\ \mathbb P\!\left(T_{\mathrm{reg}}\ge z_{1-\alpha}-\varepsilon\right).
\]
By the weak convergence of $T_{\mathrm{reg}}$ established above, the right-hand side equals
$1-\Phi\!\big((z_{1-\alpha}-\varepsilon)/\rho\big)$ where
$\rho^2=\sigma_{\mathrm{reg}}^2/\sigma_{\mathrm{up,reg}}^2\le 1$.
Letting $\varepsilon\downarrow 0$ and using continuity of $\Phi$ yields
\[
\limsup_{J\to\infty}\ \mathbb P\!\left(T_{\mathrm{reg}}\ge c_{1-\alpha}\right)
\le
1-\Phi\!\left(\frac{z_{1-\alpha}}{\rho}\right)
\le
1-\Phi(z_{1-\alpha})
=
\alpha,
\]
which proves
$\limsup_{T\to\infty}\mathbb P(\hat p_{\mathrm{reg}}\le \alpha)\le \alpha$ under $H_0^{\mathrm{sw}}$.
\end{proof}

\section{Proof for Power Analysis}\label{app:power}

This appendix provides supporting derivations for Section~\ref{sec:power}.
Throughout, we work with the predetermined pooled-section construction:
$T=ML$, pooled size $r\mid M$, $J=M/r$, predetermined pooled sections
$[s_j,e_j]=\{(j-1)rL+1,\dots,jrL\}$, and the constancy indicators
$E_j=\{W^{((j-1)r+1)}=\cdots=W^{(jr)}\}$.
For the total-effect test, the selected set is $\mathcal J(\mathbf W)=\{j:\,E_j\}$ with
$J_{\mathrm{tot}}=|\mathcal J(\mathbf W)|$.
We write $p=p(r;q)=q^r/(q^r+(1-q)^r)$ and $\pi_r(q)=q^r+(1-q)^r$, and $n=rL-m$.

\subsection{Conditional assignment law under predetermined pooling}\label{app:pool-pre-law}

\begin{lemma}[Constancy probability and pooled-section count]\label{lem:J-binomial}
Under \eqref{eq:design-constq}, the constancy indicators $\mathbf 1\{E_j\}$ are i.i.d.\ Bernoulli with
\[
\mathbb{P}(E_j)=\pi_r(q)=q^r+(1-q)^r,
\]
so $J_{\mathrm{tot}}\sim\mathrm{Binomial}(J,\pi_r(q))$ and $\mathbb E[J_{\mathrm{tot}}]=J\pi_r(q)$.
Moreover, $J_{\mathrm{tot}}/J\to \pi_r(q)$ almost surely, and hence $J_{\mathrm{tot}}\to\infty$ almost surely as $J\to\infty$.
\end{lemma}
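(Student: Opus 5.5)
The plan is to reduce everything to the independence of the block labels $W^{(1)},\dots,W^{(M)}$ under \eqref{eq:design-constq}, together with the fact that the predetermined pooled sections partition the block indices into the disjoint sets $\{(j-1)r+1,\dots,jr\}$, $j=1,\dots,J$. Each indicator $\mathbf 1\{E_j\}$ is a measurable function of the block labels in section $j$ only. Since these index sets are disjoint and the labels are mutually independent, the indicators $\mathbf 1\{E_1\},\dots,\mathbf 1\{E_J\}$ are mutually independent. This is the same grouping argument used in the proof of Lemma~\ref{lem:cond-merged-section}.

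Next I compute $\mathbb P(E_j)$. As in Lemma~\ref{lem:cond-merged-section}, $E_j$ is the disjoint union of two events: all $r$ labels equal $1$, with probability $q^r$, or all equal $0$, with probability $(1-q)^r$. Hence $\mathbb P(E_j)=q^r+(1-q)^r=\pi_r(q)$. This value does not depend on $j$, so the indicators are i.i.d.\ Bernoulli$(\pi_r(q))$. It follows that $J_{\mathrm{tot}}=\sum_{j=1}^J\mathbf 1\{E_j\}\sim\mathrm{Binomial}(J,\pi_r(q))$, and by linearity $\mathbb E[J_{\mathrm{tot}}]=J\pi_r(q)$.

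For the almost-sure statement, the one delicate point is that the indicators form a triangular array: changing $M$ changes the design. I would handle this by realizing the design on a single probability space with an infinite i.i.d.\ sequence of block labels. With $(L,r,q)$ fixed, the section-$j$ indicator then does not depend on $J$, and the indicators form one i.i.d.\ sequence. The strong law of large numbers then gives $J_{\mathrm{tot}}/J\to\pi_r(q)$ almost surely.

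Finally, $q\in(0,1)$ implies $\pi_r(q)\ge q^r>0$, so $J_{\mathrm{tot}}\ge J\pi_r(q)/2$ eventually, almost surely. Therefore $J_{\mathrm{tot}}\to\infty$ almost surely.

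If one insists on a genuine triangular array with no coupling, almost-sure convergence is not meaningful, and I would state convergence in probability instead. This follows from Chebyshev's inequality, since $\mathrm{Var}(J_{\mathrm{tot}}/J)=\pi_r(1-\pi_r)/J\to0$. A Hoeffding bound with summable tails, combined with Borel--Cantelli, also recovers almost-sure convergence along any coupling. I expect this coupling remark to be the only non-routine step.
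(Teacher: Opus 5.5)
Your proposal is correct and follows the paper's proof essentially step for step. Disjointness of the block index sets gives independence of the $\mathbf 1\{E_j\}$, the all-ones/all-zeros decomposition gives $\mathbb P(E_j)=q^r+(1-q)^r$, and the strong law of large numbers gives $J_{\mathrm{tot}}/J\to\pi_r(q)$ almost surely; your triangular-array remark (coupling through one i.i.d.\ label sequence, or Hoeffding plus Borel--Cantelli under any coupling) adds rigor that the paper's one-line appeal to the SLLN skips.
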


\begin{proof}
Each pooled section $j$ depends on the disjoint block set $\{W^{((j-1)r+1)},\dots,W^{(jr)}\}$.
The event $E_j$ occurs iff all $r$ blocks equal $1$ or all equal $0$, hence
$\mathbb P(E_j)=q^r+(1-q)^r=\pi_r(q)$. Disjointness implies independence across $j$, so $J_{\mathrm{tot}}$ is binomial.
The almost sure limit follows from the strong law of large numbers applied to $\sum_{j=1}^J \mathbf 1\{E_j\}$. \qed
\end{proof}

\begin{proof}[Proof of Lemma~\ref{lem:pool-pre-law}]
Fix pooled section $j$ and define
$A_j:=\{W^{((j-1)r+1)}=\cdots=W^{(jr)}=1\}$ and
$B_j:=\{W^{((j-1)r+1)}=\cdots=W^{(jr)}=0\}$.
Then $E_j=A_j\dot\cup B_j$ and by independence,
$\mathbb P(A_j)=q^r$, $\mathbb P(B_j)=(1-q)^r$, $\mathbb P(E_j)=q^r+(1-q)^r$.
Let $Z_j$ be the common value on $E_j$. Then
\[
\mathbb P(Z_j=1\mid E_j)=\frac{\mathbb P(A_j)}{\mathbb P(E_j)}
=\frac{q^r}{q^r+(1-q)^r}=p(r;q).
\]
Independence across disjoint sections follows because the underlying block sets are disjoint across $j$.
\qed
\end{proof}

\subsection{Proof of Proposition~\ref{prop:power-total}}\label{app:power-total}

We work on a product probability space $(\Omega,\mathcal F,\mathbb P)$ supporting two independent sources of randomness:
(i) the assignment mechanism $\mathbf W$ and (ii) the superpopulation error process $\{\varepsilon_t\}$ in
\eqref{eq:ar1-main}. Formally, let
\[
(\Omega,\mathcal F,\mathbb P)
=
(\Omega_\varepsilon\times\Omega_W,\ \mathcal F_\varepsilon\otimes\mathcal F_W,\ \mathbb P_\varepsilon\otimes\mathbb P_W),
\]
where $\mathbb P_W$ generates the block assignments in \eqref{eq:design-constq} and $\mathbb P_\varepsilon$ generates the
stationary AR(1) process \eqref{eq:ar1-main}.

Throughout, we condition on the realized pooled-section structure used by the CRT, i.e., on $\mathcal J(\mathbf W)$ and
hence on $J_{\mathrm{tot}}=|\mathcal J(\mathbf W)|$.

\paragraph{Notation.}
For each selected pooled section $j\in\mathcal J(\mathbf W)$, let $\mathcal U_j$ denote its $n=rL-m$ focal time points and define
\[
\bar Y_j(z)
:=
\frac{1}{n}\sum_{t\in\mathcal U_j} Y_t(\mathbf z_T),
\qquad z\in\{0,1\},
\qquad
\bar Y_j^{\mathrm{obs}}=\bar Y_j(Z_j),
\]
where $Z_j\in\{0,1\}$ is the pooled-section label (constant on the section when $j\in\mathcal J(\mathbf W)$).
Conditional on $\mathcal J(\mathbf W)$, the labels $\{Z_j\}_{j\in\mathcal J(\mathbf W)}$ are independent
$\mathrm{Bernoulli}(p)$ with $p=p(r;q)$ (Lemma~\ref{lem:pool-pre-law}).

Define the finite-population pooled-section estimand
\begin{equation}\label{eq:tauJ-def}
\tau_{J_{\mathrm{tot}}}
:=
\frac{1}{J_{\mathrm{tot}}}\sum_{j\in\mathcal J(\mathbf W)}\{\bar Y_j(1)-\bar Y_j(0)\}.
\end{equation}

\subsubsection*{Step 1: Identify the finite-population target under the DGP}

Assume $m\ge m_0$. Fix any selected pooled section $j$ and any focal time $t\in\mathcal U_j$.
By construction of $\mathcal U_j$, the last $m_0$ lags of $t$ remain within the same pooled section, and because the
section is constant in time under $\mathbf z_T$ we have $w_{t-\ell}=z$ for all $\ell\in\{0,\dots,m_0\}$.
Therefore, under \eqref{eq:dgp-main},
\[
Y_t(\mathbf 0_T)=\mu+\varepsilon_t,
\qquad
Y_t(\mathbf 1_T)=\mu+\tau_{\mathrm{tot}}+\varepsilon_t,
\qquad
\tau_{\mathrm{tot}}=\sum_{\ell=0}^{m_0}\beta_\ell.
\]
Averaging over $t\in\mathcal U_j$ yields
\begin{equation}\label{eq:barY-model}
\bar Y_j(0)=\mu+\bar\varepsilon_j,
\qquad
\bar Y_j(1)=\mu+\tau_{\mathrm{tot}}+\bar\varepsilon_j,
\qquad
\bar\varepsilon_j:=\frac{1}{n}\sum_{t\in\mathcal U_j}\varepsilon_t.
\end{equation}
In particular, $\bar Y_j(1)-\bar Y_j(0)=\tau_{\mathrm{tot}}$ for every selected section $j$, and hence
\begin{equation}\label{eq:tauJ-equals-tau}
\tau_{J_{\mathrm{tot}}}=\tau_{\mathrm{tot}}\qquad\text{deterministically.}
\end{equation}

\subsubsection*{Step 2: Conditional (randomization) CLT given realized potential outcomes}

Let $\mathcal H$ be the $\sigma$-field generated by (i) the realized potential outcomes
$\{(\bar Y_j(1),\bar Y_j(0))\}_{j\in\mathcal J(\mathbf W)}$ and (ii) the realized pooled-section structure $\mathcal J(\mathbf W)$.
Conditional on $\mathcal H$, the only randomness in $(\hat\tau_{\mathrm{HT}},\widehat V_{\mathrm{up}})$ comes from the
independent Bernoulli labels $\{Z_j\}_{j\in\mathcal J(\mathbf W)}$.

Write the HT estimator \eqref{eq:ht-tot-power} as
\[
\hat\tau_{\mathrm{HT}}
=
\frac{1}{J_{\mathrm{tot}}}\sum_{j\in\mathcal J(\mathbf W)} \xi_{j,J_{\mathrm{tot}}},
\qquad
\xi_{j,J_{\mathrm{tot}}}
:=
\frac{Z_j\bar Y_j(1)}{p}-\frac{(1-Z_j)\bar Y_j(0)}{1-p}.
\]
Then, conditional on $\mathcal H$, the summands $\{\xi_{j,J_{\mathrm{tot}}}\}$ are independent with
$\mathbb E[\xi_{j,J_{\mathrm{tot}}}\mid \mathcal H]=\bar Y_j(1)-\bar Y_j(0)=\tau_{\mathrm{tot}}$, so
$\mathbb E[\hat\tau_{\mathrm{HT}}\mid\mathcal H]=\tau_{\mathrm{tot}}$.

Define centered summands
\[
\psi_{j,J_{\mathrm{tot}}}
:=
\xi_{j,J_{\mathrm{tot}}}-\{\bar Y_j(1)-\bar Y_j(0)\}.
\]
A direct computation (as in Appendix \ref{app:subsec:sampling-clt}) gives
$$ \operatorname{Var}\left(\psi_{j, J_{\mathrm{tot}}}\right)=\frac{1-p_{j}}{p_{j}} \bar{Y}{j}(1)^{2}+\frac{p{j}}{1-p_{j}} \bar{Y}{j}(0)^{2}+2 \bar{Y}{j}(1) \bar{Y}{j}(0)=\left(\sqrt{\frac{1-p{j}}{p_{j}}} \bar{Y}{j}(1)+\sqrt{\frac{p{j}}{1-p_{j}}} \bar{Y}_{j}(0)\right)^{2} $$

Let
\[
S_{J_{\mathrm{tot}}}^2
:=
\sum_{j\in\mathcal J(\mathbf W)}\mathrm{Var}(\psi_{j,J_{\mathrm{tot}}}\mid\mathcal H),
\qquad
\sigma_{\tau}^2(J_{\mathrm{tot}})
:=
\frac{S_{J_{\mathrm{tot}}}^2}{J_{\mathrm{tot}}}.
\]

\paragraph{Conditional CLT for $\hat\tau_{\mathrm{HT}}$.}
Because $Z_j$ are independent given $\mathcal H$ and the section lengths are fixed, a Lyapunov condition holds under finite
fourth moments of $\bar Y_j(1)$ and $\bar Y_j(0)$, implying the conditional CLT
\begin{equation}\label{eq:cond-clt-HT}
\frac{\sqrt{J_{\mathrm{tot}}}\,(\hat\tau_{\mathrm{HT}}-\tau_{\mathrm{tot}})}{\sqrt{\sigma_{\tau}^2(J_{\mathrm{tot}})}}
\ \Rightarrow\ \mathcal N(0,1),
\qquad\text{conditional on }\mathcal H,
\end{equation}
where the convergence holds in $\mathbb P$-probability (and in fact almost surely along typical realizations under the
superpopulation model).

\paragraph{Conditional LLN for $\widehat V_{\mathrm{up}}$.}
Recall
\[
\widehat V_{\mathrm{up}}
=
\frac{1}{J_{\mathrm{tot}}^2}\sum_{j\in\mathcal J(\mathbf W)}(\bar Y_j^{\mathrm{obs}})^2
\left(\frac{Z_j}{p^2}+\frac{1-Z_j}{(1-p)^2}\right),
\qquad
\bar Y_j^{\mathrm{obs}}=\bar Y_j(Z_j).
\]
Define the finite-population upper-bound functional
\begin{equation}\label{eq:vupJ-def}
v_{\mathrm{up}}(J_{\mathrm{tot}})
:=
\frac{1}{J_{\mathrm{tot}}}\sum_{j\in\mathcal J(\mathbf W)}
\left\{\frac{\bar Y_j(1)^2}{p}+\frac{\bar Y_j(0)^2}{1-p}\right\}.
\end{equation}
Then a conditional law of large numbers yields
\begin{equation}\label{eq:cond-LLN-Vup}
J_{\mathrm{tot}}\,\widehat V_{\mathrm{up}}\ \xrightarrow{\ \mathbb P\ }\ v_{\mathrm{up}}(J_{\mathrm{tot}}),
\qquad\text{conditional on }\mathcal H.
\end{equation}

\paragraph{Conditional limit for $T_{\mathrm{tot}}$.}
Combining \eqref{eq:cond-clt-HT} and \eqref{eq:cond-LLN-Vup} with Slutsky's theorem (conditionally on $\mathcal H$) gives
\begin{equation}\label{eq:cond-limit-Ttot}
T_{\mathrm{tot}}
=
\frac{\hat\tau_{\mathrm{HT}}}{\sqrt{\widehat V_{\mathrm{up}}}}
=
\frac{\sqrt{J_{\mathrm{tot}}}\,\hat\tau_{\mathrm{HT}}}{\sqrt{J_{\mathrm{tot}}\,\widehat V_{\mathrm{up}}}}
\ \Rightarrow\
\mathcal N\!\left(\frac{\tau_{\mathrm{tot}}\sqrt{J_{\mathrm{tot}}}}{\sqrt{v_{\mathrm{up}}(J_{\mathrm{tot}})}},\ \frac{\sigma_{\tau}^2(J_{\mathrm{tot}})}{v_{\mathrm{up}}(J_{\mathrm{tot}})}\right),
\qquad \text{conditional on }\mathcal H.
\end{equation}

\subsubsection*{Step 3: Superpopulation limits of random variance functionals}

We now show that the random finite-population functionals $\sigma_{\tau}^2(J_{\mathrm{tot}})$ and $v_{\mathrm{up}}(J_{\mathrm{tot}})$
converge in probability to deterministic limits under the superpopulation model (indeed, almost surely).

\begin{lemma}[Ergodic ratio LLN under independent thinning]\label{lem:ratio-LLN}
Let $\{X_j\}_{j\ge 1}$ be a stationary ergodic sequence with $\mathbb E|X_1|<\infty$, and let $\{E_j\}_{j\ge 1}$ be i.i.d.\
$\mathrm{Bernoulli}(\pi)$, independent of $\{X_j\}$, with $\pi\in(0,1)$.
Define $J_N:=\sum_{j=1}^N E_j$. Then on the event $\{J_N\to\infty\}$,
\[
\frac{1}{J_N}\sum_{j=1}^N E_j X_j\ \xrightarrow{\ a.s.\ }\ \mathbb E[X_1].
\]
\end{lemma}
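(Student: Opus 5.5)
The plan is to write the ratio as a quotient of two normalized sums over the full index range:
\[
\frac{1}{J_N}\sum_{j=1}^N E_j X_j=\frac{N^{-1}\sum_{j=1}^N E_jX_j}{N^{-1}\sum_{j=1}^N E_j}.
\]
We then show that the numerator converges a.s.\ to $\pi\,\mathbb E[X_1]$ and the denominator to $\pi>0$. The denominator is immediate from the strong law of large numbers for the i.i.d.\ $E_j$. Since $\pi>0$, this also gives $J_N\to\infty$ a.s., so the conditioning event has probability one and can be dropped.

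For the numerator, the key observation is that the pair sequence $\{(X_j,E_j)\}_{j\ge1}$ is itself stationary and ergodic. Stationarity is clear, since $\{X_j\}$ is stationary, $\{E_j\}$ is i.i.d., and the two are independent. For ergodicity, I would use the standard fact that the product of an ergodic system with a mixing one (an i.i.d.\ Bernoulli shift is mixing, hence weakly mixing) is ergodic. Granting this, the function $(x,e)\mapsto ex$ is integrable with $\mathbb E|E_1X_1|=\pi\,\mathbb E|X_1|<\infty$. Birkhoff's ergodic theorem then gives
\[
N^{-1}\sum_{j=1}^N E_jX_j\to\mathbb E[E_1X_1]=\pi\,\mathbb E[X_1]
\]
almost surely, using independence to factor the expectation. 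Taking the quotient of the two limits yields $\mathbb E[X_1]$.

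The main obstacle is the joint ergodicity step. If I prefer to avoid citing the product theorem, I would argue by conditioning on the $X$-path instead. Fix a realization $x=\{x_j\}$ in the full-measure set where $N^{-1}\sum_{j\le N}x_j\to\mathbb E X_1$, and decompose
\[
N^{-1}\sum_{j\le N} E_jx_j=\pi\, N^{-1}\sum_{j\le N} x_j+N^{-1}\sum_{j\le N}(E_j-\pi)x_j .
\]
The first term converges by Birkhoff applied to $\{X_j\}$ alone. The second is a sum of independent, mean-zero terms given $x$, so it suffices to show it vanishes a.s. With only a first moment this needs care. I would truncate at level $j$: set $x_j'=x_j\mathbf 1\{|x_j|\le j\}$, and handle the truncated part via Kolmogorov's two-series/Kronecker argument, which needs $\sum_j x_j'^2/j^2<\infty$. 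This holds a.s.\ in $X$ because its expectation is bounded by $C\,\mathbb E|X_1|$ by the usual Etemadi-type computation, which uses only stationarity of the marginals. The untruncated remainder is controlled because $\sum_j\mathbb P(|X_j|>j)\le\mathbb E|X_1|<\infty$, so by Borel--Cantelli only finitely many $j$ have $x_j\ne x_j'$. Fubini then transfers the conditional a.s.\ statement to the joint space.

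Either route gives the claim, and I would present the product-ergodicity route as the main argument, with this conditional argument as a self-contained alternative.
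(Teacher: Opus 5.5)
Your main route is the paper's proof: write the ratio as $N^{-1}\sum_{j\le N}E_jX_j$ divided by $N^{-1}\sum_{j\le N}E_j$, apply the ergodic theorem to the numerator and the strong law to the denominator, and divide. The only difference is that you justify the joint ergodicity of $\{(X_j,E_j)\}$ (ergodic $\times$ weakly mixing is ergodic), which the paper asserts simply ``by ergodicity''; your conditional truncation argument is also a correct alternative and needs ergodicity of $\{X_j\}$ alone.
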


\begin{proof}
By ergodicity, $\frac{1}{N}\sum_{j=1}^N E_j X_j \to \mathbb E[E_1X_1]=\pi\,\mathbb E[X_1]$ a.s. and
$\frac{1}{N}\sum_{j=1}^N E_j\to \mathbb E[E_1]=\pi$ a.s. Taking ratios yields the claim. \qed
\end{proof}

\paragraph{Apply Lemma~\ref{lem:ratio-LLN} to section means.}
Let $\bar\varepsilon_j$ be the within-section focal mean error in \eqref{eq:barY-model}, computed on the predetermined pooled partition.
The sequence $\{\bar\varepsilon_j\}$ is stationary ergodic under the stationary AR(1) model and has finite moments.
The selection indicators $\{E_j\}$ are i.i.d.\ Bernoulli with $\pi=\pi_r(q)>0$ and are independent of $\{\bar\varepsilon_j\}$.
Therefore, by Lemma~\ref{lem:ratio-LLN} and Lemma~\ref{lem:J-binomial},
\[
\frac{1}{J_{\mathrm{tot}}}\sum_{j\in\mathcal J(\mathbf W)} \bar\varepsilon_j
\ \xrightarrow{\ a.s.\ }\ 0,
\qquad
\frac{1}{J_{\mathrm{tot}}}\sum_{j\in\mathcal J(\mathbf W)} \bar\varepsilon_j^2
\ \xrightarrow{\ a.s.\ }\ \mathbb E[\bar\varepsilon_1^2]=\sigma_{\bar\varepsilon}^2(n,\rho),
\]
and similarly for higher moments.

\paragraph{Limits of $\sigma_{\tau}^2(J_{\mathrm{tot}})$ and $v_{\mathrm{up}}(J_{\mathrm{tot}})$.}
Using \eqref{eq:barY-model} and the above ratio LLNs, we obtain (in probability, indeed a.s.)
\[
\frac{1}{J_{\mathrm{tot}}}\sum_{j\in\mathcal J(\mathbf W)} \bar Y_j(0)^2
\ \to\ \mu^2+\sigma_{\bar\varepsilon}^2(n,\rho),
\qquad
\frac{1}{J_{\mathrm{tot}}}\sum_{j\in\mathcal J(\mathbf W)} \bar Y_j(1)^2
\ \to\ (\mu+\tau_{\mathrm{tot}})^2+\sigma_{\bar\varepsilon}^2(n,\rho),
\]
and
\[
\frac{1}{J_{\mathrm{tot}}}\sum_{j\in\mathcal J(\mathbf W)} \bar Y_j(1)\bar Y_j(0)
\ \to\ \mu(\mu+\tau_{\mathrm{tot}})+\sigma_{\bar\varepsilon}^2(n,\rho).
\]
Substituting these limits yields the deterministic quantities in
\eqref{eq:mu-sigma-total}.

\subsubsection*{Step 4: From conditional CLT to (approximate) unconditional power}

We use the following general device.

\begin{lemma}[Conditional-to-unconditional normal limit]\label{lem:cond-to-uncond}
Let $X_J$ be a sequence of random variables and let $\mathcal H_J$ be $\sigma$-fields.
Suppose that conditional on $\mathcal H_J$,
\[
X_J\ \Rightarrow\ \mathcal N(m_J,s_J^2)
\quad\text{in }\mathbb P\text{-probability,}
\]
where $(m_J,s_J^2)$ are $\mathcal H_J$-measurable random parameters, and suppose
\[
m_J\xrightarrow{\ \mathbb P\ }m,\qquad s_J^2\xrightarrow{\ \mathbb P\ }s^2\in(0,\infty).
\]
Then $X_J\Rightarrow \mathcal N(m,s^2)$ unconditionally.
\end{lemma}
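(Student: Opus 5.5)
We prove Lemma~\ref{lem:cond-to-uncond} by working with characteristic functions (equivalently, bounded continuous test functions) and integrating the conditional convergence over $\mathcal H_J$. Fix $u\in\mathbb R$. Read the hypothesis ``$X_J\Rightarrow\mathcal N(m_J,s_J^2)$ conditionally on $\mathcal H_J$, in $\mathbb P$-probability'' in the standard sense:
\[
\Delta_J(u):=\Bigl|\mathbb E\bigl[e^{iuX_J}\mid\mathcal H_J\bigr]-\exp\!\bigl(ium_J-\tfrac12 u^2 s_J^2\bigr)\Bigr|\ \xrightarrow{\ \mathbb P\ }\ 0 .
\]
If the hypothesis is phrased through a bounded-Lipschitz or Lévy distance between the conditional law and $\mathcal N(m_J,s_J^2)$, this follows at once, since $x\mapsto e^{iux}$ is bounded and Lipschitz.

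\textbf{Step 1.} Replace the random Gaussian parameters by their limits. The map $(a,b)\mapsto\exp(iua-\tfrac12u^2b)$ is continuous. So by the continuous mapping theorem and the assumptions $m_J\to m$, $s_J^2\to s^2$ in probability,
\[
\bigl|\exp(ium_J-\tfrac12u^2s_J^2)-\exp(ium-\tfrac12u^2s^2)\bigr|\xrightarrow{\ \mathbb P\ }0 .
\]
Combined with the previous display, this gives
\[
\Bigl|\mathbb E[e^{iuX_J}\mid\mathcal H_J]-\varphi(u)\Bigr|\xrightarrow{\ \mathbb P\ }0,
\qquad \varphi(u):=\exp(ium-\tfrac12u^2s^2).
\]

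\textbf{Step 2.} Pass to unconditional expectations by the tower property:
\[
\bigl|\mathbb E[e^{iuX_J}]-\varphi(u)\bigr|\le \mathbb E\Bigl|\mathbb E[e^{iuX_J}\mid\mathcal H_J]-\varphi(u)\Bigr|.
\]
The integrand on the right is bounded by $2$ and tends to zero in probability. Bounded convergence (uniform integrability of a bounded sequence converging in probability) therefore sends the right-hand side to zero.

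\textbf{Step 3.} Since $u$ was arbitrary, the unconditional characteristic functions converge pointwise to that of $\mathcal N(m,s^2)$. Lévy's continuity theorem then yields $X_J\Rightarrow\mathcal N(m,s^2)$. The condition $s^2>0$ is not needed for this conclusion. It only ensures the limit is nondegenerate, which is what makes the limiting CDF continuous in the later quantile and power statements.

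\textbf{Main obstacle.} The only delicate point is making the notion of conditional weak convergence precise, since the conditional law is a random measure. Once it is fixed via characteristic functions, or via any metric metrizing weak convergence that dominates $|\int e^{iux}\,d\mu-\int e^{iux}\,d\nu|$, the argument is Steps 1--3 above. If the paper's intended notion is convergence of conditional CDFs at every point in probability, I would first convert it: the conditional CDFs converge in probability at each $x$ to the Gaussian CDF with parameters $(m_J,s_J^2)$, which converges to $\Phi((x-m)/s)$. Taking expectations and applying bounded convergence then gives unconditional convergence of CDFs at every $x$, which is the same conclusion.
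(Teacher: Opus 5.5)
Your proposal is correct and is essentially the paper's own proof: conditional characteristic functions, continuity of $(a,b)\mapsto\exp(iua-\tfrac12u^2b)$ to replace $(m_J,s_J^2)$ by $(m,s^2)$, the tower property with bounded convergence, and L\'evy's continuity theorem. One side remark needs correcting: under a L\'evy-distance formulation, the fact that $e^{iux}$ is bounded and Lipschitz does not control $|\int e^{iux}\,d\mu-\int e^{iux}\,d\nu|$ (that reasoning works only for the bounded-Lipschitz metric, and shifted lattice laws are a counterexample), but the conclusion still holds here because the Gaussian targets converge to the fixed law $\mathcal N(m,s^2)$.
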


\begin{proof}
Fix $t\in\mathbb R$. Let $\phi_J(t):=\mathbb E[e^{itX_J}\mid \mathcal H_J]$ be the conditional characteristic function.
By assumption, $\phi_J(t)\to \exp(it m_J-\tfrac12 t^2 s_J^2)$ in probability and $|\phi_J(t)|\le 1$.
Since $m_J\to m$ and $s_J^2\to s^2$ in probability, we have
$\exp(it m_J-\tfrac12 t^2 s_J^2)\to \exp(it m-\tfrac12 t^2 s^2)$ in probability.
By bounded convergence along subsequences (or uniform integrability using $|\phi_J(t)|\le 1$),
\[
\mathbb E[e^{itX_J}]
=
\mathbb E[\phi_J(t)]
\to
\exp(it m-\tfrac12 t^2 s^2).
\]
Hence $X_J\Rightarrow \mathcal N(m,s^2)$. \qed
\end{proof}

\paragraph{Apply Lemma~\ref{lem:cond-to-uncond} to $T_{\mathrm{tot}}$.}
Equation \eqref{eq:cond-limit-Ttot} gives a conditional normal limit for $T_{\mathrm{tot}}$ with random mean
$\tau_{\mathrm{tot}}\sqrt{J_{\mathrm{tot}}}/\sqrt{v_{\mathrm{up}}(J_{\mathrm{tot}})}$ and random variance
$\sigma_{\tau}^2(J_{\mathrm{tot}})/v_{\mathrm{up}}(J_{\mathrm{tot}})$.
Step 3 shows these parameters converge in probability (indeed a.s.) to the deterministic quantities in \eqref{eq:mu-sigma-total}.
Therefore, by Lemma~\ref{lem:cond-to-uncond},
\[
T_{\mathrm{tot}}
\approx
\mathcal N\!\left(\mu_{\mathrm{tot}}(J_{\mathrm{tot}}),\sigma_{\mathrm{tot}}^2\right),
\quad\text{conditionally on }J_{\mathrm{tot}}.
\]

\paragraph{Randomization critical value and power.}
Let $T_{\mathrm{tot}}^\ast$ be the statistic recomputed after resampling
$Z_j^\ast\stackrel{ind}{\sim}\mathrm{Bernoulli}(p)$ on $\mathcal J(\mathbf W)$ while holding
$\{\bar Y_j^{\mathrm{obs}}\}_{j\in\mathcal J(\mathbf W)}$ fixed.
Conditional on $\{\bar Y_j^{\mathrm{obs}}\}$, a randomization CLT yields $T_{\mathrm{tot}}^\ast\Rightarrow\mathcal N(0,1)$, so the
conditional $(1-\alpha)$ quantile $c_{1-\alpha}$ satisfies $c_{1-\alpha}\to z_{1-\alpha}$ in probability.
Rejecting when $T_{\mathrm{tot}}\ge c_{1-\alpha}$ yields the approximation \eqref{eq:power-total}. \qed

\subsection{Tail-signal derivation and proof of Proposition~\ref{prop:power-carry}}\label{app:power-carry}

We analyze the paired predetermined pooled-section statistic \eqref{eq:ht-carry-power}--\eqref{eq:vup-carry-power}.
Let $J_e=\lfloor J/2\rfloor$ and index adjacent odd--even pooled-section pairs by $j=1,\dots,J_e$.

\paragraph{Step 1: Tail-signal under $m_0\le rL$ (last-block label).}
Fix an even pooled section $2j$ with start time $s$ and consider a focal time $t=s+a$ with offset
$a\in\{m,m+1,\dots,rL-1\}$. The randomized label in \eqref{eq:ht-carry-power} is the assignment on the \emph{last design
block} of the preceding odd pooled section, which applies to the $L$ time points $\{s-L,\dots,s-1\}$.
For a given lag $\ell\in\{0,\dots,m_0\}$, the lagged exposure $w_{t-\ell}$ equals this label if and only if
\[
t-\ell\in\{s-L,\dots,s-1\}
\quad\Longleftrightarrow\quad
s-L\le s+a-\ell\le s-1
\quad\Longleftrightarrow\quad
a+1\le \ell\le a+L.
\]
Fix $\ell>m$. The number of focal offsets $a\in\{m,\dots,rL-1\}$ such that $a+1\le \ell\le a+L$ is
\[
w_\ell
:=
\left|\,[m,rL-1]\cap[\ell-L,\ell-1]\,\right|.
\]
Under \eqref{eq:ass-m0-le-ell} we have $\ell\le m_0\le rL$, so the upper truncation does not bind and
\[
w_\ell=\ell-\max\{m,\ell-L\}=\min\{L,\ell-m\},
\qquad \ell=m+1,\dots,m_0.
\]
Therefore, toggling the last-block label from $0$ to $1$ shifts the even-section focal mean by
\[
\bar Y_{2j}(1)-\bar Y_{2j}(0)
=
\frac{1}{n}\sum_{\ell=m+1}^{m_0} w_\ell\,\beta_\ell
=
\frac{1}{n}\sum_{\ell=m+1}^{m_0}\min\{L,\ell-m\}\,\beta_\ell
=:\delta_m(n),
\qquad n=rL-m,
\]
which is \eqref{eq:delta-tail-main}. Here $\bar Y_{2j}(z)$ denotes the even-section focal mean under the
counterfactual intervention setting the last design block of the preceding odd pooled section to $z$, with all other
block assignments and the error process held fixed.

\paragraph{Step 2: Conditional CLT for the HT estimator.}
Let $\mathcal H_e$ be the $\sigma$-field generated by the collection of even-section potential means
$\{(\bar Y_{2j}(1),\bar Y_{2j}(0))\}_{j=1}^{J_e}$ (and any additional randomness unrelated to the last-block labels,
such as the remaining block assignments and the error process).
Under \eqref{eq:design-constq}, the last-block labels $\{Z_{2j-1}\}$ are independent across $j$ with
$Z_{2j-1}\sim\mathrm{Bernoulli}(q)$, and are independent of $\mathcal H_e$.

Define the per-pair HT summand
\[
\xi_{j,J_e}
:=
\frac{Z_{2j-1}\,\bar Y_{2j}(1)}{q}
-
\frac{(1-Z_{2j-1})\,\bar Y_{2j}(0)}{1-q}.
\]
Then $\hat\delta_m=J_e^{-1}\sum_{j=1}^{J_e}\xi_{j,J_e}$ and, conditional on $\mathcal H_e$,
\[
\mathbb E[\xi_{j,J_e}\mid \mathcal H_e]=\bar Y_{2j}(1)-\bar Y_{2j}(0)=\delta_m(n).
\]
Moreover, conditional on $\mathcal H_e$,
\[
\mathrm{Var}(\xi_{j,J_e}\mid \mathcal H_e)
=
\left(\frac{1}{q}-1\right)\bar Y_{2j}(1)^2
+\left(\frac{1}{1-q}-1\right)\bar Y_{2j}(0)^2
+2\,\bar Y_{2j}(1)\bar Y_{2j}(0).
\]
Let $\sigma_\delta^2(J_e):=J_e^{-1}\sum_{j=1}^{J_e}\mathrm{Var}(\xi_{j,J_e}\mid \mathcal H_e)$.
Under fixed $n$ and the AR(1) model, a conditional Lyapunov CLT yields
\[
\frac{\sqrt{J_e}\big(\hat\delta_m-\delta_m(n)\big)}{\sqrt{\sigma_\delta^2(J_e)}}\ \Rightarrow\ \mathcal N(0,1),
\qquad\text{conditional on }\mathcal H_e,
\]
with convergence in $\mathbb P$-probability.

\paragraph{Step 3: Studentization.}
Define the finite-population upper-bound functional
\[
v_{\mathrm{up}}^{(m)}(J_e)
:=
\frac{1}{J_e}\sum_{j=1}^{J_e}\left\{\frac{\bar Y_{2j}(1)^2}{q}+\frac{\bar Y_{2j}(0)^2}{1-q}\right\}.
\]
A conditional LLN gives
\[
J_e\,\widehat V_{\mathrm{up}}^{(m)}\ \xrightarrow{\ \mathbb P\ }\ v_{\mathrm{up}}^{(m)}(J_e),
\qquad\text{conditional on }\mathcal H_e,
\]
where $\widehat V_{\mathrm{up}}^{(m)}$ is defined in \eqref{eq:vup-carry-power}. Therefore, by Slutsky's theorem,
\[
T_m
=\frac{\hat\delta_m}{\sqrt{\widehat V_{\mathrm{up}}^{(m)}}}
\Rightarrow
\mathcal N\!\left(\frac{\delta_m(n)\sqrt{J_e}}{\sqrt{v_{\mathrm{up}}^{(m)}(J_e)}},\ \frac{\sigma_\delta^2(J_e)}{v_{\mathrm{up}}^{(m)}(J_e)}\right),
\qquad\text{conditional on }\mathcal H_e.
\]

\paragraph{Step 4: Deterministic limits and power.}
Because $(L,r,m)$ are fixed, the even-section focal means form a stationary ergodic sequence under stationary AR(1) errors
and i.i.d.\ block assignments, so
\[
v_{\mathrm{up}}^{(m)}(J_e)\ \xrightarrow{\ \mathbb P\ }\ v_{\mathrm{up}}^{(m)},
\qquad
\sigma_\delta^2(J_e)\ \xrightarrow{\ \mathbb P\ }\ \sigma_\delta^2,
\]
where $v_{\mathrm{up}}^{(m)}$ and $\sigma_\delta^2$ are the expectations in \eqref{eq:vup-carry-def}--\eqref{eq:sigdelta-def}
(with $\pi_1=q$ and $\pi_0=1-q$).

Under the CRT resampling scheme, odd pooled sections are redrawn according to \eqref{eq:design-constq} while holding the
even-section outcomes fixed; under the null $\delta_m(n)=0$, the corresponding resampled statistic satisfies
$T_m^\ast\Rightarrow \mathcal N(0,1)$, so the $(1-\alpha)$ critical value is asymptotically $z_{1-\alpha}$.
Substituting the above normal approximation for $T_m$ yields \eqref{eq:power-carry}. \qed

\end{document}